\newtheorem{theorem}{Theorem}[section]
\newtheorem{lemma}[theorem]{Lemma}
\newtheorem{proposition}[theorem]{Proposition}
\newdefinition{defind}[theorem]{Definition}
\newdefinition{remark}[theorem]{Remark}
\newdefinition{example}[theorem]{Example}
\begin{document}

\begin{frontmatter}

\title{A Characterization of MDS Symbol-pair Codes over Two Types of Alphabets }

\author[a]{Xilin Tang}
\ead{xltang@scut.edu.cn}
\author[a]{Weixian Li}
\ead{201910105912@mail.scut.edu.cn}
\author[b,c]{Wei Zhao \corref{cor1}}
\ead{zhaowei@cuhk.edu.cn}

\cortext[cor1]{Corresponding author}

\address[a]{Department of Mathematics, South China University of Technology, Guangzhou,
Guangdong, 510640, PR China}
\address[b]{The Chinese University of Hong Kong, Shenzhen,
Guangdong, 518172, PR China}
\address[c]{University of Science and Technology of China, Anhui, 230026, PR China}

\begin{abstract}
Symbol-pair codes are block codes with symbol-pair metrics designed to protect against pair-errors that may occur in high-density data storage systems. MDS symbol-pair codes are optimal in the sense that it can attain the highest pair-error correctability  within the same  code length and code size. Constructing MDS symbol-pair codes is one of the main topics in symbol-pair codes. In this paper, we characterize the symbol-pair distances of some constacyclic codes of arbitrary lengths over finite fields and a class of finite chain rings. Using the characterization of symbol-pair distance, we present several classes of MDS symbol-pair constacyclic codes and show that there is no other MDS symbol-pair code among the class of constacyclic codes except for what we present. Moreover, some of these MDS symbol-pair constacyclic codes over the finite chain rings cannot be obtained by previous work.
\end{abstract}

\begin{keyword}
Symbol-pair codes \sep MDS codes \sep constacyclic codes
\end{keyword}

\end{frontmatter}

\section{Introduction}
Modern high-density data storage systems may not read the transmitted information individually as classic information transmission due to physical limitations. Motivated by this fact, Cassuto and Blaum \cite{Cassuto2011} developed symbol-pair code over \textit{symbol-pair read channel} whose outputs are overlapping pairs of symbols. The efficient decoding algorithms for cyclic codes over symbol-pair read channels are shown in \cite{Yaakobi2012,Takita2015,Morii2016}.

Let $\Xi$ be an alphabet of $q$ elements with $q\geq 2$. A \textit{code} $\mathcal{C}$ over $\Xi$ of length $n$ is a subset of $\Xi^n$. The elements in $\mathcal{C}$ are called \textit{codewords}. We use the bold letter to denote a vector in the sequel. Let $\boldsymbol{x}=\left(x_{0}, x_{1}, \ldots, x_{n-1}\right)$, $\boldsymbol{y}=\left(y_{0}, y_{1}, \ldots, y_{n-1}\right)$ be vectors in $\Xi^{n}$. A vector $\boldsymbol{x}$ transmitted in the symbol-pair read channel is read as
\[\pi\left(\boldsymbol{x}\right)=\left(\left(x_{0}, x_{1}\right),\left(x_{1}, x_{2}\right), \ldots,\left(x_{n-1}, x_{0}\right)\right).\]
We call $\pi(\boldsymbol{x})$ as a \emph{symbol-pair vector} of $\boldsymbol{x}$. The \textit{symbol-pair distance} between $\boldsymbol{x}$ and $\boldsymbol{y}$ is defined as the Hamming distance between $\pi(\boldsymbol{x})$ and $\pi(\boldsymbol{y})$, $i.e.$,
$$\operatorname{d_{sp}}(\boldsymbol{x}, \boldsymbol{y})=\operatorname{d_{H}}(\pi(\boldsymbol{x}), \pi(\boldsymbol{y}))=\left|\left\{i:\left(x_{i}, x_{i+1}\right) \neq\left(y_{i}, y_{i+1}\right)\right\}\right|.$$
The \textit{(minimum) symbol-pair distance} of $\mathcal{C}$ is defined as
$$\operatorname{d_{sp}}(\mathcal{C})=\min\{\operatorname{d_{sp}}(\boldsymbol{x},\boldsymbol{y})\mid \boldsymbol{x},\boldsymbol{y}\in \mathcal{C}~\text{and}~\boldsymbol{x}\neq \boldsymbol{y}\}.$$
For a code $\mathcal{C}$ of length $n$ over $\Xi$ with symbol-pair distance $\operatorname{d_{sp}}$, the upper bound on the code size of $\mathcal{C}$, called \textit{Singleton bound for symbol-pair codes} \cite{Chee2013}, is
\begin{equation}\label{equa1}
|\mathcal{C}| \leq q^{n-\operatorname{d_{sp}}+2}.
\end{equation}
A symbol-pair code whose parameters satisfy \eqref{equa1} with equality is called \textit{maximum distance separable} (MDS). According to \eqref{equa1}, MDS symbol-pair codes possess the largest symbol-pair distance under the same code length and code size, which indicates that MDS symbol-pair codes are a class of optimal symbol-pair codes that can have high pair error-correcting capability since the symbol-pair distance is a tool to measure the pair error-correcting capability of the codes.

Constructing MDS symbol-pair codes is meaningful both in theoretical and practical.
The research on constructing MDS symbol-pair codes is active in recent years \cite{Li2016,Chen2017,Kai2018,Dinh2018,Dinh2019b,Dinh2020}. Many MDS symbol-pair codes are obtained by analyzing the generator polynomials of constacyclic codes. See \cite{Li2016,Chen2017,Kai2018,Zhao2020} for example. In \cite{Dinh2018}, Dinh \emph{et al.} characterize the symbol-pair distances of all constacyclic codes of length $p^s$ over $\mathbb{F}_{p^m}$ and obtain all the MDS symbol-pair codes of prime power lengths. These results are generalized in two different directions. One of the directions is to construct MDS symbol-pair constacyclic codes over different alphabets such as the finite chain ring $\mathbb{F}_{p^{m}}+u \mathbb{F}_{p^{m}}$ (\cite{Dinh2018a,Dinh2019b}). The other direction is extending the code length of constacyclic codes to some other special code lengths such as $2 p^s$ (\cite{Dinh2019a,Dinh2020}).

Let $\lambda$ be a nonzero element in $\mathbb{F}_{p^{m}}$ and $n$ be a positive integer coprime to $p$. Due to the complicated irreducible factorization of $x^n-\lambda$ in $\mathbb{F}_{p^{m}}[x]$, the algebraic structure of $\lambda$-constacyclic codes of length $np^{s}$ over $\mathbb{F}_{p^{m}}$ are not obtained completely. Therefore it is difficult to analyze the symbol-pair distance of constacyclic codes of length $n p^s$. In \cite{Ozadam2009}, the authors discussed the structure of a special class of constacyclic codes of length $n p^s$ over $\mathbb{F}_{p^{m}}$. This work inspires us to analyze the symbol-pair distances of these constacyclic codes. Our motivation is to characterize the MDS symbol-pair codes among the larger class of constacyclic codes and obtain new MDS symbol-pair codes with more flexible parameters.

In this paper, we consider some constacyclic codes of length $np^{s}$ over two different alphabets, which are finite fields $\mathbb{F}_{p^m}$ and finite chain rings $\mathbb{F}_{p^{m}}+u \mathbb{F}_{p^{m}}$, where $p$ is a prime, $m$ is a positive integer and $u^2=0$. Let $\alpha_0$ be a nonzero element in $\mathbb{F}_{p^m}$ such that $x^n-\alpha_0$ is irreducible over $\mathbb{F}_{p^m}$. Denote $\alpha=\alpha_0^{p^s}$. Let $\beta$ be an element in $\mathbb{F}_{p^m}$. We completely characterize the symbol-pair distances of $\alpha$-constacyclic codes over $\mathbb{F}_{p^m}$ and $(\alpha+u\beta)$-constacyclic codes over $\mathbb{F}_{p^{m}}+u \mathbb{F}_{p^{m}}$. We present several classes of MDS symbol-pair constacyclic codes in Table \ref{tab1} and Table \ref{tab2}. Some of these codes are obtained in previous work and we remark the references in the tables. Some of these codes are obtained in this paper.

\begin{table}[h]
\centering
\begin{tabular}{|c|c|c|c|c|}
\hline
\makecell{Generator \\ Polymial} & Dimension & \makecell{Pair \\Distance} & Remark & Ref. \\
\hline
$ x-\alpha_0 $ & $p^s-1$ & 3 & ~ & \cite{Dinh2018}\\
\hline
$ (x-\alpha_0)^{2} $ & $p^s-2$ & 4 & ~ & \cite{Dinh2018}\\
\hline
$ (x-\alpha_0)^{4} $ & 5 & 6 & \makecell[c]{$p=3$ \\ $s=2$} & \cite{Dinh2018}\\
\hline
$ (x-\alpha_0)^{k} $ & $p-k$ & $k+2$ & \makecell{$s=1$ \\ $1 \leq k \leq p-2$} & \cite{Dinh2018}\\
\hline
$ (x-\alpha_0)^{p^s-2} $ & 2 & $p^s$ & ~ & \cite{Dinh2018}\\
\hline
$ x^2-\alpha_0 $ & $2p^s-2$ & 4 & ~ & \cite{Dinh2019a}\\
\hline
$ (x^2-\alpha_0)^{k} $ & $2p-2k$ & $2k+2$ & \makecell{$s=1$ \\ $1 \leq k \leq p-2$} & \cite{Dinh2019a}\\
\hline
$ (x^2-\alpha_0)^{p^s-1} $ & 2 & $2p^s$ & ~ & \cite{Dinh2019a}\\
\hline
\end{tabular}
\caption{MDS symbol-pair $\alpha$-constacyclic codes of length $np^s$ over $\mathbb{F}_{p^m}$}
\label{tab1}
\end{table}

\begin{table}[h]
\centering
\begin{tabular}{|c|c|c|c|c|}
\hline
\makecell{Generator\\ Polynomial} & Size & \makecell{Pair \\Distance}  & Remark & Ref. \\
\hline
$(x-\alpha_0)+u b(x)$ & $p^{2m(p^s-1)}$ & 3 & ~ & \cite{Sharma2019}\\
\hline
\makecell{$(x-\alpha_0)^{2} +$ \\ $u (x-\alpha_0)^{k} b(x) $} & $p^{2m(p^s-2)}$ & 4 & \makecell{$s \geq 2$ \\ $k=0,1$} & \cite{Sharma2019}\\
\hline
\makecell{$(x-\alpha_0)^{4} +$ \\ $u (x-\alpha_0)^{k} b(x) $} & $p^{10m}$ & 6 & \makecell[c]{$p=3$ \\ $s=2$ \\ $0 \leq k \leq 3$} & \cite{Sharma2019}\\
\hline
\makecell{$(x-\alpha_0)^{j}  +$ \\ $u (x-\alpha_0)^{k} b(x) $} & $p^{2m(p-k)}$ & $j+2$  & \makecell{$s=1$ \\ $1 \leq j \leq p-2$ \\ $max\{0,2j-p\}$ \\ $\leq k < j$} & \cite{Sharma2019}\\
\hline
\makecell{$(x-\alpha_0)^{p^s-2} +$ \\ $ u (x-\alpha_0)^{k} b(x) $} & $p^{4m}$ & $p^s$ & \makecell{$k=p^s-4, p^s-3$} & \cite{Sharma2019} \\
\hline
\makecell{$(x^2 -\alpha_0)+$ \\ $u b(x)$} & $p^{4m(p^s-1)}$ & 4 & ~ & Thm. \ref{th4.6} \\
\hline
\makecell{$(x^2 -\alpha_0)^{j}  +$ \\ $ u (x^2 -\alpha_0)^{k} b(x) $} & $p^{4m(p-k)}$ & $2j+2$ & \makecell{$s=1$ \\ $1 \leq j \leq p-2$ \\ $max\{0,2j-p\}$ \\ $\leq k < j$} & Thm. \ref{th4.6} \\
\hline
\makecell{$(x^2 -\alpha_0)^{p^s-1} +$ \\ $ u (x^2 - \alpha_0)^{p^s-2} b(x) $} & $p^{4m}$ & $2p^s$ & ~ & Thm. \ref{th4.6} \\
%\hline
%\multicolumn{5}{|c|}{Remark: b(x) is either zero or a unit in $\mathbb{F}_{p^m}[x] / \langle x^{np^s} - \alpha \rangle$}\\
\hline
\end{tabular}
\caption{MDS symbol-pair $\alpha$-constacyclic codes of length $np^s$ over $\mathbb{F}_{p^m}+u \mathbb{F}_{p^m}$, where b(x) is either zero or a unit in $\mathbb{F}_{p^m}[x] / \langle x^{np^s} - \alpha \rangle$}
\label{tab2}
\end{table}

The codes in Table \ref{tab1} are MDS symbol-pair $\alpha$-constacyclic codes over $\mathbb{F}_{p^m}$. For any positive integer $n$, we prove that there is no other MDS symbol-pair $\alpha$-constacyclic code of length $np^s$ except for the codes in Table \ref{tab1}.

The codes in Table \ref{tab2} are MDS symbol-pair $(\alpha+u\beta)$-constacyclic codes over $\mathbb{F}_{p^m}+u\mathbb{F}_{p^m}$. Notice that the codes considered in \cite{Sharma2019} is a subcase of the codes we considered in this paper which confine $n=1$.
In \cite{Dinh2019b}, Dinh \emph{et al.} gave two classes of MDS symbol-pair codes with parameters $(2^s, 2^{m(2^{s-1}+4)}, 3 \cdot 2^{s-2})$ and $(3^s, 3^{m(2 \cdot 3^{s-1}+4)}, 2 \cdot 3^{s-1})$, but these two classes are actually not MDS symbol-pair codes.
%We add the MDS symbol-pair $(\alpha+u\beta)$-constacyclic codes obtained in \cite{Dinh2019b} into Table \ref{tab2} except for two classes of MDS symbol-pair codes with parameters $(2^s, 2^{m(2^{s-1}+4)}, 3 \cdot 2^{s-2})$ and $(3^s, 3^{m(2 \cdot 3^{s-1}+4)}, 2 \cdot 3^{s-1})$ since these two classes are actually not MDS symbol-pair codes.
We will give a detailed analysis of these two classes of codes in section \ref{sec4}. Besides, we also obtain three new classes of MDS symbol-pair $(\alpha+u\beta)$-constacyclic codes.  Moreover, we prove that there is no other MDS symbol-pair $(\alpha+u\beta)$-constacyclic codes of length $np^s$ except for the codes we present in Table \ref{tab2}.

The remaining of this paper is organized as follows. In Section \ref{sec2}, we present some preliminaries and notations. In Section \ref{sec3}, we give the symbol-pair distances of all $\alpha$-constacyclic codes of length $n p^s$ over $\mathbb{F}_{p^m}$ and show all the MDS symbol-pair $\alpha$-constacyclic codes of length $n p^s$ among these codes. In Section \ref{sec4}, we determine the symbol-pair distances of some of $(\alpha+u\beta)$-constacyclic codes of length $n p^s$ over $\mathbb{F}_{p^{m}}+u \mathbb{F}_{p^{m}}$ and exhibit all the MDS symbol-pair codes among these codes.

\section{Preliminaries}\label{sec2}

In this section, we give some notations and results that will be used in the sequel.

Let $R$ be a finite commutative ring with identity. A code $\mathcal{C}$ over $R$ is called \textit{linear} if $\mathcal{C}$ is a submodule of $R^n$. The \textit{symbol-pair weight} of a vector $\boldsymbol{x}$ in $R^n$ is the symbol-pair distance between $\boldsymbol{x}$ and the all-zero vector $\boldsymbol{0}$ of $R^n$, denoted by $\operatorname{wt_{sp}}(\boldsymbol{x})$.  The symbol-pair distance of a linear code is equal to the minimum symbol-pair weight of nonzero codewords of the linear code. For a unit $\lambda$ of $R$, the $\lambda$-constacyclic shift $\tau_{\lambda}$ on $R^{n}$ is defined by:
\[
\tau_{\lambda}\left(x_{0}, x_{1}, \ldots, x_{n-1}\right)=\left(\lambda x_{n-1}, x_{0}, x_{1}, \ldots, x_{n-2}\right).
\]
A linear code $\mathcal{C}$ is said to be $\lambda$-\textit{constacyclic} if $\tau_{\lambda}(\mathcal{C})=\mathcal{C}$. %In particular, $\mathcal{C}$ is called a \textit{cyclic} code if $\lambda=1$ and a \textit{negacyclic} code if $\lambda=-1 .$
Each codeword $\boldsymbol{c}=\left(c_{0}, c_{1}, \ldots, c_{n-1}\right)$ in $\mathcal{C}$ is customarily identified with its polynomial representation $c(x)=c_{0}+c_{1} x+\cdots+c_{n-1} x^{n-1}$ in ${R[x]}/{\left\langle x^{n}-\lambda\right\rangle}$. In the ring ${R[x]}/{\left\langle x^{n}-\lambda\right\rangle}$, $x c(x)$ corresponds to performing a $\lambda$-constacyclic shift on $\boldsymbol{c}$. The following theorem shows the algebraic property of constacyclic codes.

\begin{proposition}\label{prop2.2}\cite{Huffman2003}
A linear code $\mathcal{C}$ of length $n$ over $R$ is a $\lambda$-constacyclic code if and only if $\mathcal{C}$ is an ideal of the quotient ring ${R[x]}/{\left\langle x^{n}\!-\!\lambda\right\rangle}$.
\end{proposition}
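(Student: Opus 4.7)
The plan is to exploit the standard correspondence
$$(c_0, c_1, \ldots, c_{n-1}) \longleftrightarrow c_0 + c_1 x + \cdots + c_{n-1} x^{n-1}$$
between $R^n$ and $R[x]/\langle x^n - \lambda\rangle$, which is an $R$-module isomorphism (addition and scalar action agree on both sides). The pivot observation, already flagged in the excerpt, is that in this quotient ring one has $x\cdot c(x) = \lambda c_{n-1} + c_0 x + c_1 x^2 + \cdots + c_{n-2} x^{n-1}$ because $x^n \equiv \lambda$, and this is precisely the polynomial representation of $\tau_\lambda(c_0, c_1, \ldots, c_{n-1})$. Thus multiplication by $x$ in $R[x]/\langle x^n - \lambda\rangle$ is the same map as the $\lambda$-constacyclic shift on coordinates.

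First I would handle the ($\Leftarrow$) direction. Assume $\mathcal{C}$ is an ideal of $R[x]/\langle x^n - \lambda\rangle$. Then $\mathcal{C}$ is in particular closed under addition and under multiplication by constants from $R$, hence an $R$-submodule of $R^n$ under the identification above, i.e.\ a linear code. The ideal property also gives $x\cdot c(x) \in \mathcal{C}$ for every $c(x) \in \mathcal{C}$, which translates to $\tau_\lambda(\boldsymbol{c}) \in \mathcal{C}$; therefore $\tau_\lambda(\mathcal{C}) \subseteq \mathcal{C}$, and since $\tau_\lambda$ is bijective on the finite set $R^n$ (with inverse corresponding to multiplication by $\lambda^{-1} x^{n-1}$), equality follows.

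For the ($\Rightarrow$) direction, assume $\mathcal{C}$ is linear and $\tau_\lambda(\mathcal{C}) = \mathcal{C}$. Linearity is exactly the statement that $\mathcal{C}$ is an $R$-submodule of $R[x]/\langle x^n - \lambda\rangle$. Closure under $\tau_\lambda$ translates to $x \cdot c(x) \in \mathcal{C}$ for all $c(x) \in \mathcal{C}$; iterating gives $x^i c(x) \in \mathcal{C}$ for every $i \geq 0$. Combining this with $R$-linearity, $f(x) c(x) \in \mathcal{C}$ for every $f(x) \in R[x]/\langle x^n - \lambda\rangle$, so $\mathcal{C}$ is an ideal.

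The only real obstacle is verifying that the reduction $x \cdot c(x) \bmod (x^n - \lambda)$ really matches $\tau_\lambda$ coordinate-wise; this is a one-line computation but is the bridge linking the module-theoretic and polynomial pictures, and everything else is a routine unwinding of definitions.
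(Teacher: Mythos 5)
Your proof is correct, and it follows exactly the standard argument: the paper itself states this proposition as a citation to \cite{Huffman2003} without proof, but the key bridge you identify --- that multiplication by $x$ in ${R[x]}/{\langle x^{n}-\lambda\rangle}$ realizes the shift $\tau_{\lambda}$ under the coordinate--polynomial identification --- is precisely the observation the paper makes in the sentence immediately preceding the proposition, and both directions then unwind as you describe (your inverse-shift computation via $\lambda^{-1}x^{n-1}$, or alternatively finiteness of $R^n$, correctly upgrades $\tau_{\lambda}(\mathcal{C})\subseteq\mathcal{C}$ to equality). No gaps; this matches the textbook proof in all essentials.
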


The ideal of ${R[x]}/{\left\langle x^{n}-\lambda\right\rangle}$ is generated by a factor of $x^{n}-\lambda$. Let $\lambda$ be a unit of $R$. In this paper, we mainly consider the constacyclic codes of length $np^s$, i.e., an ideal of the residue ring ${R[x]}/{\left\langle x^{np^s}-\lambda\right\rangle}$. If $R$ is a Frobenius ring, one can find a unit $\lambda_0$ such that $\lambda_0^{p^s}=\lambda$. Therefore, we have $x^{np^s}-\lambda=(x^n-\lambda_0)^{p^s}$. In this paper, we make an assumption that $x^n-\lambda_0$ is irreducible in $R[x]$. The following result shows the irreducibility of binomials over finite fields.

\begin{proposition}\label{prop2.3}\cite{RudolfLidl2008}
Let $n \geq 2$ be an integer and $\lambda \in \mathbb{F}_{q}^{*}$. Then the binomial $x^{n}-\lambda$ is irreducible in $\mathbb{F}_{q}[x]$ if and only if the following two conditions are satisfied:
\begin{enumerate}[(i)]
\item each prime factor of $n$ divides the order $e$ of $\lambda$ in $\mathbb{F}_{q}^{*},$ but not $\frac{q-1}{e}$;
\item $q \equiv 1(\bmod\; 4)$ if $n \equiv 0(\bmod\; 4)$.
\end{enumerate}
\end{proposition}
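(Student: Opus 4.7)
The plan is to follow a standard inductive strategy (in the style of Lidl--Niederreiter, Chapter 3), reducing first to prime-power exponents and then handling those via the prime case and induction on the exponent.

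First I would pass to a splitting field: if $\zeta$ is any root of $x^n-\lambda$ in an extension of $\mathbb{F}_q$, then all roots are $\zeta\omega^i$ where $\omega$ is a primitive $n$-th root of unity, so $x^n-\lambda$ is irreducible over $\mathbb{F}_q$ iff $[\mathbb{F}_q(\zeta):\mathbb{F}_q]=n$. Writing $n = t_1^{k_1}\cdots t_r^{k_r}$ with distinct primes $t_i$, a Frobenius-orbit argument on $\zeta$ shows that $x^n-\lambda$ is irreducible over $\mathbb{F}_q$ iff each $x^{t_i^{k_i}}-\lambda$ is irreducible, using that the $t_i^{k_i}$ are pairwise coprime. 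Since condition (i) is stated prime-by-prime and condition (ii) depends only on whether $4\mid n$, both conditions decompose compatibly, reducing the problem to the case $n=t^k$ with $t$ prime.

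Next I would handle the prime case $n=t$. Working in the cyclic group $\mathbb{F}_q^*$ of order $q-1$, the key observation is that $x^t-\lambda$ is reducible iff $\lambda$ is a $t$-th power in $\mathbb{F}_q^*$, provided $t\neq\operatorname{char}(\mathbb{F}_q)$. The characteristic case gives $(x-\mu)^t=x^t-\lambda$ and is already excluded by (i), since (i) forces every prime factor of $n$ to divide $e\mid q-1$. Translating ``$\lambda$ is a $t$-th power'' into a condition on $e=\operatorname{ord}(\lambda)$, using the image of the $t$-th power map in a cyclic group, yields exactly the criterion $t\mid e$ and $t\nmid (q-1)/e$, which is condition (i) in the prime case.

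For the prime-power case $n=t^k$ I would induct on $k$. Letting $\zeta$ be a root of $x^t-\lambda$, one shows that $x^{t^k}-\lambda$ is irreducible over $\mathbb{F}_q$ iff $x^{t^{k-1}}-\zeta$ is irreducible over $\mathbb{F}_q(\zeta)=\mathbb{F}_{q^t}$, reducing the inductive step to re-applying the prime case inside a larger field. This requires tracking the order of $\zeta$ in $\mathbb{F}_{q^t}^*$ and the $t$-adic valuation of $q^t-1$; for odd $t$ the bookkeeping is routine and condition (i) lifts through the tower. The main obstacle is $t=2$: here a careful $2$-adic analysis of $q^{2^j}-1$ (distinguishing $q\equiv 1\pmod 4$ from $q\equiv 3\pmod 4$) shows that the induction closes only when $q\equiv 1\pmod 4$ as soon as $4\mid n$, which is exactly condition (ii). The converse directions at each stage come from producing explicit factorizations when a condition fails: extracting a root in $\mathbb{F}_q$ in the prime case, and reversing the inductive step in the prime-power case. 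Assembling these pieces across the prime factorization of $n$ yields the full equivalence.
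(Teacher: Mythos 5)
The paper offers no proof of this proposition: it is quoted verbatim from Lidl--Niederreiter \cite{RudolfLidl2008} (their Theorem 3.75), so the only meaningful comparison is with the textbook argument. Your sketch is a correct outline of the classical Kummer-style proof (essentially Lang's treatment of $x^n-a$, specialized to finite fields), but it takes a genuinely different route from the cited source. Your three reductions are all sound: the coprime prime-power decomposition works because a finite field has a unique subfield of each degree, so if $\zeta^n=\lambda$ and each $\zeta^{n/t_i^{k_i}}$ has degree $t_i^{k_i}$, then $[\mathbb{F}_q(\zeta):\mathbb{F}_q]$ is divisible by $\operatorname{lcm}_i t_i^{k_i}=n$ and hence equals $n$; your translation of ``$\lambda\notin(\mathbb{F}_q^*)^t$'' into ``$t\mid e$ and $t\nmid(q-1)/e$'' is exactly right in a cyclic group of order $q-1$; and your tower induction correctly isolates where (ii) enters: for $t=2$ and $q\equiv 3\pmod 4$, condition (i) forces $v_2(e)=v_2(q-1)=1$, while $v_2(q^2-1)\geq 3$, so $\operatorname{ord}(\zeta)=2e$ makes $\zeta$ a square in $\mathbb{F}_{q^2}$ and $x^4-\lambda$ reducible. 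By contrast, Lidl--Niederreiter argue directly through multiplicative orders: the degree of any root $\beta$ over $\mathbb{F}_q$ equals the multiplicative order of $q$ modulo $\operatorname{ord}(\beta)$, which converts the whole statement into elementary number theory in one step, with (ii) emerging from a $2$-adic congruence computation rather than from a tower. Your route is longer but makes the special role of $t=2$ transparent; theirs is shorter but buries that phenomenon in the arithmetic. Two details you should nail down in a full write-up: (a) when (i) holds, justify $\operatorname{ord}(\zeta)=te$ --- since $\operatorname{ord}(\zeta^t)=e$ gives $\operatorname{ord}(\zeta)=e\cdot\gcd\bigl(t,\operatorname{ord}(\zeta)\bigr)$, and the value $e$ is ruled out because $t\mid e$ would then force $\operatorname{ord}(\zeta^t)=e/t\neq e$; this computation is what feeds the inductive hypothesis over $\mathbb{F}_{q^t}$; (b) the equivalence ``$x^t-\lambda$ reducible iff $\lambda$ is a $t$-th power'' is specific to prime exponent $t\neq\operatorname{char}\,\mathbb{F}_q$ (for odd $t$ it needs the standard norm argument) and fails for composite exponents --- which is precisely why condition (ii) exists at all, so it deserves an explicit citation or proof rather than being called an observation.
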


According to Proposition 2.2, if $n$ satisfy the condition (ii) of Proposition \ref{prop2.3} and all the prime factors of $n$ divide $q-1$, there exists $\lambda \in \mathbb{F}_{q}^{*}$ such that  $x^{n}-\lambda$ is irreducible over $\mathbb{F}_{q}$.

\subsection{Constacyclic Codes over \texorpdfstring{$\mathbb{F}_{p^m}$}{TEXT}}

Let $\alpha$ be a nonzero element in $\mathbb{F}_{p^m}$. We present some results of $\alpha$-constacyclic codes of length $n p^s$ over $\mathbb{F}_{p^m}$ in this subsection. Denote by $\mathcal{F}$ the quotient ring ${\mathbb{F}_{p^{m}}[x]}/{\left\langle x^{np^{s}}-\alpha\right\rangle}$. The structures and the minimum Hamming distances of $\alpha$-constacyclic codes are given in the following theorem.

\begin{theorem}\label{th_H_F}\cite[Theorem 3.6]{Ozadam2009}
Let $\mathbb{F}_{p^m}$ be a finite field and $n$ be a positive integer with $\operatorname{gcd}(n,p)=1$. Suppose that $x^n-\alpha_0$ is irreducible over $\mathbb{F}_{p^m}$ for $\alpha_0 \in \mathbb{F}_{p^m}^{*}$ and $\alpha={\alpha_0}^{p^s}$. Then the $\alpha$-constacyclic codes of length $n p^s$ over $\mathbb{F}_{p^m}$ are of the form $\mathcal{C}_i=\langle\left(x^{n}-\alpha_{0}\right)^{i}\rangle$, where $0 \leq i \leq p^s$. And the minimum Hamming distance of $\mathcal{C}_i$ is given by
$$d_H(\mathcal{C}_i)\!\!=\!\!\left\{\!\!\!\begin{array}{ll}
1, & \text { if } i=0, \\
(\theta+2) p^{k}, & \text { if } p^{s}\!-\!p^{s-k}\!+\!\theta p^{s-k-1}\!+\!1 \!\leq\! i \!\leq\!  p^{s}\!-\!p^{s-k}\!+\!(\theta\!+\!1) p^{s-k-1}, \\
&  \text { where } 0 \leq \theta \leq p-2 \text { and } 0 \leq k \leq s-1, \\
0 & \text { if } i=p^{s}.
\end{array}\right.$$
\end{theorem}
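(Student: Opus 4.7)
The plan is to reduce the length-$np^s$ problem to the already-understood length-$p^s$ case by exploiting the fact that $x^n-\alpha_0$ is irreducible. The substitution $y=x^n$ is natural because $x^{np^s}-\alpha=(x^n-\alpha_0)^{p^s}$, so in the $y$-variable the ambient ring is a chain ring whose unique maximal ideal is generated by $y-\alpha_0$. This makes the structure of ideals $\mathcal{C}_i=\langle(x^n-\alpha_0)^i\rangle$ transparent and pins down the ``block'' in which the minimum weight is achieved.

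First I would set up the block decomposition: every $c(x)\in\mathcal{F}$ has a unique representation
\[
c(x)=\sum_{j=0}^{n-1} c_j(x^n)\,x^j, \qquad c_j(y)\in\mathbb{F}_{p^m}[y]/\langle y^{p^s}-\alpha\rangle.
\]
Because the monomials $x^{an+j}$ with $0\le a\le p^s-1$, $0\le j\le n-1$ are pairwise distinct in $\mathcal{F}$, the Hamming weight splits as $w_H(c)=\sum_{j=0}^{n-1} w_H(c_j)$, where each $w_H(c_j)$ is the Hamming weight of the degree-${<}p^s$ reduction of $c_j(y)$. Next I would observe that $c(x)=(x^n-\alpha_0)^i f(x)$ with $f(x)=\sum_j f_j(x^n)x^j$ gives $c_j(y)=(y-\alpha_0)^i f_j(y)$, so $c\in\mathcal{C}_i$ if and only if every block $c_j$ lies in $\langle(y-\alpha_0)^i\rangle$ in the chain ring $\mathbb{F}_{p^m}[y]/\langle(y-\alpha_0)^{p^s}\rangle$. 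Minimizing $w_H(c)$ over nonzero $c\in\mathcal{C}_i$ is therefore achieved by making exactly one block nonzero and of smallest weight, which identifies $d_H(\mathcal{C}_i)$ with the minimum Hamming distance of the length-$p^s$ $\alpha$-constacyclic code $\langle(y-\alpha_0)^i\rangle$ over $\mathbb{F}_{p^m}$.

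It remains to prove the explicit formula in the length-$p^s$ case. Given $i$, I would write it in the canonical form $i=p^s-p^{s-k}+\theta p^{s-k-1}+r$ with $1\le r\le p^{s-k-1}$ and $0\le\theta\le p-2$; the claim is $d_H=(\theta+2)p^k$. For the upper bound I would exhibit an explicit codeword by multiplying $(y-\alpha_0)^i$ by a suitably chosen polynomial and repeatedly applying the Frobenius identity $(y-\alpha_0)^{p^k}=y^{p^k}-\alpha_0^{p^k}$, so that the resulting product has exactly $(\theta+2)p^k$ nonzero terms. For the lower bound I would induct on $s$: reducing modulo $(y-\alpha_0)^{p^{s-1}}$ (equivalently, working in the chain ring of one lower ``depth'') translates the problem to a smaller instance, while an element that survives the reduction can be written as $(y^{p^{s-1}}-\alpha_0^{p^{s-1}})g(y)$ whose weight is at least $2$ times the weight of a corresponding lower-depth codeword. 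The most delicate step, and the one I expect to be the main obstacle, is making this recursion tight across the $(k,\theta)$-strata: the $p$-adic bookkeeping that matches the lower bound to $(\theta+2)p^k$ is the combinatorial heart of the argument, whereas the reduction $y=x^n$ is essentially formal.
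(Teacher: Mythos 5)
The paper offers no internal proof to compare against: Theorem~\ref{th_H_F} is imported verbatim from \cite[Theorem 3.6]{Ozadam2009}, so your attempt must be judged as a free-standing argument. Your reduction to the $n=1$ case is correct and complete. The monomials $x^{an+j}$ with $0\le a\le p^s-1$, $0\le j\le n-1$ do give a basis of $\mathcal{F}$, multiplication by $(x^n-\alpha_0)^i$ acts block-diagonally (it is a polynomial in $x^n$, so it never mixes residue classes of exponents modulo $n$), the Hamming weight is additive across blocks, and hence $d_H(\mathcal{C}_i)$ equals the minimum Hamming distance of $\langle (y-\alpha_0)^i\rangle$ of length $p^s$, attained on a single nonzero block. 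The ideal classification you wave at is also unproblematic: $\mathcal{F}=\mathbb{F}_{p^m}[x]/\langle (x^n-\alpha_0)^{p^s}\rangle$ is a quotient of a PID, and since $x^n-\alpha_0$ is irreducible, every ideal is generated by some $(x^n-\alpha_0)^i$ up to units. Had you then simply cited the known $n=1$ distance formula (the cyclic case goes back to Massey--Costello--Justesen \cite{Massey1973} together with the repeated-root machinery, and the constacyclic length-$p^s$ case is in the literature the paper itself leans on, e.g.\ \cite{Dinh2018}), your block decomposition would constitute a complete, and arguably more transparent, proof than the route through \cite{Ozadam2009}.

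As a blind proof, however, there is a genuine gap, and you name it yourself: the explicit formula $(\theta+2)p^k$ at length $p^s$ is never established. The upper bound is indeed easy --- for $i=p^s-p^{s-k}+(\theta+1)p^{s-k-1}$ the $p$-adic digits of $i$ are $p-1$ in the top $k$ positions and $\theta+1$ in position $s-k-1$, so \cite[Lemma 1]{Massey1973} gives $\operatorname{wt_H}((y-\alpha_0)^i)=(\theta+2)p^k$, and nesting $\mathcal{C}_i\supseteq\mathcal{C}_{i+1}$ propagates this across each stratum. But the matching lower bound at the left endpoints $i=p^s-p^{s-k}+\theta p^{s-k-1}+1$ is the entire content of the theorem, and your induction sketch does not close it: the claim that a codeword surviving reduction modulo $(y-\alpha_0)^{p^{s-1}}$ factors as $(y^{p^{s-1}}-\alpha_0^{p^{s-1}})g(y)$ with weight at least twice a lower-depth codeword is unproved as stated, and even granted it would only handle the coarsest stratum transition, not the fine $(k,\theta)$-bookkeeping needed to pin $d_H$ to exactly $(\theta+2)p^k$ rather than merely bound it between consecutive strata values. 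Flagging this as ``the main obstacle'' is honest but leaves the combinatorial heart of the result undone; either carry out the full induction (in the style of the Castagnoli-type analysis of repeated-root codes) or cite the $n=1$ theorem outright and let your reduction do the rest.
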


For simplicity, we use the notation $\mathcal{C}_i$ to denote the $\alpha$-constacyclic codes of length $np^s$ with generator polynomial $(x^n-\alpha_{0})^{i}$, where $0\leq i\leq p^s$. The following lemma shows a formula to compute the Hamming weight of the codeword $(x^n-\alpha_{0})^{i}$ in $\mathcal{C}_{i}$.

\begin{lemma}\label{lem_H_bi}\cite[Lemma 1]{Massey1973}
For any nonnegative integer $i<p^{s},$ let $i=i_{s-1} p^{s-1}+\cdots+i_{1} p+i_{0}$, where $0 \leq i_{0}, i_{1}, \ldots, i_{s-1} \leq p-1$, which means that $(i_{s-1},\ldots,i_0)$ is the $p$-adic expansion of $i$. Then
$$\operatorname{wt_H}((x^n-\alpha_0)^i)=\prod_{j=0}^{s-1}(i_j +1).$$
\end{lemma}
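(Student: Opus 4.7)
The plan is to exploit the Frobenius-type identity in characteristic $p$, namely $(x^n - \alpha_0)^{p^j} = x^{np^j} - \alpha_0^{p^j}$ for all $j \geq 0$, to reduce the computation of the Hamming weight to a product of binomial expansions and then argue that no cancellation occurs across the product.

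First, I would write $i = i_0 + i_1 p + \cdots + i_{s-1} p^{s-1}$ and use the characteristic-$p$ identity together with the multiplicativity of exponentiation to obtain
\[
(x^n - \alpha_0)^i = \prod_{j=0}^{s-1} (x^n - \alpha_0)^{i_j p^j} = \prod_{j=0}^{s-1} \bigl(x^{np^j} - \alpha_0^{p^j}\bigr)^{i_j}.
\]
Since $i < p^s$, the degree $ni$ of this polynomial is strictly less than $np^s$, so it already represents a codeword in $\mathcal{F}$ without any reduction modulo $x^{np^s} - \alpha$, and we may compute its weight as that of an ordinary polynomial in $\mathbb{F}_{p^m}[x]$.

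Next, I would apply the binomial theorem to each factor:
\[
\bigl(x^{np^j} - \alpha_0^{p^j}\bigr)^{i_j} = \sum_{k_j=0}^{i_j} \binom{i_j}{k_j} (-\alpha_0^{p^j})^{i_j - k_j}\, x^{n k_j p^j}.
\]
Because $0 \leq i_j \leq p-1$, every binomial coefficient $\binom{i_j}{k_j}$ is nonzero in $\mathbb{F}_{p^m}$, and since $\alpha_0$ is a unit, each factor contributes exactly $i_j + 1$ nonzero monomials. Multiplying these factors out gives a sum of terms of the form (nonzero scalar) $\cdot\, x^{n(k_0 + k_1 p + \cdots + k_{s-1} p^{s-1})}$ where $0 \leq k_j \leq i_j \leq p-1$.

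The key observation, and the only real content of the argument, is that the map $(k_0, \ldots, k_{s-1}) \mapsto k_0 + k_1 p + \cdots + k_{s-1} p^{s-1}$ is injective on $\prod_{j} \{0, 1, \ldots, i_j\}$ by the uniqueness of base-$p$ representations (since every $k_j$ lies in the admissible range $\{0,\ldots,p-1\}$). Hence no two terms in the expansion share a monomial, no cancellation occurs, and the total number of nonzero coefficients in $(x^n - \alpha_0)^i$ equals the number of tuples $(k_0, \ldots, k_{s-1})$, which is $\prod_{j=0}^{s-1}(i_j + 1)$, as claimed. The only potential pitfall is the injectivity verification, but this is immediate once one notes that the constraint $i_j \leq p-1$ keeps every $k_j$ below $p$.
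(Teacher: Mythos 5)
Your proof is correct. Note that the paper does not actually prove this lemma---it is quoted verbatim as \cite[Lemma 1]{Massey1973}---so there is no in-paper argument to compare against; your write-up supplies a complete, self-contained proof along what is essentially the standard (and original) route: the Frobenius identity $(x^n-\alpha_0)^{p^j}=x^{np^j}-\alpha_0^{p^j}$, the nonvanishing of $\binom{i_j}{k_j}$ in characteristic $p$ when $i_j\leq p-1$ (all factorials involved are products of integers smaller than $p$, hence units), and the uniqueness of base-$p$ digit strings to rule out cancellation. You also correctly dispose of the one point a careless reader might miss, namely that $\deg\bigl((x^n-\alpha_0)^i\bigr)=ni<np^s$, so no reduction modulo $x^{np^s}-\alpha$ occurs and the codeword weight equals the ordinary polynomial weight; the final step that multiplication of exponents by $n$ preserves distinctness is immediate. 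Nothing is missing.
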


The following lemma shows the relationship between the symbol-pair distance and the Hamming distance.

\begin{lemma}\label{th_H_S}\cite[Theorem 2]{Cassuto2011}
For two codewords $\boldsymbol{x},\boldsymbol{y}$ in a code $\mathcal{C}$ of length $n$ with $0 < \operatorname{d_{H}}(\boldsymbol{x},\boldsymbol{y}) < n$, define the set $S_H=\{j \mid x_j \neq y_j\}$. Let $S_H=\cup_{l=1}^L B_l$ be a minimal partition of the set $S_H$ to subsets of consecutive indices(indices may wrap around modulo $n$). Then
$$\operatorname{d_{sp}}(\boldsymbol{x},\boldsymbol{y})=\operatorname{d_{H}}(\boldsymbol{x},\boldsymbol{y})+L.$$
\end{lemma}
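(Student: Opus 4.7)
The plan is to establish the identity by re-expressing $\operatorname{d_{sp}}$ as the cardinality of a concrete index set, and then exploiting the block structure of $S_H$. First I unpack the definition: position $i$ contributes to $\operatorname{d_{sp}}(\boldsymbol{x},\boldsymbol{y})$ precisely when $(x_i,x_{i+1})\ne(y_i,y_{i+1})$, which is equivalent to $x_i\ne y_i$ or $x_{i+1}\ne y_{i+1}$, i.e.\ $i\in S_H$ or $i+1\in S_H$. Writing $S_H-1$ for $\{j-1:j\in S_H\}$ with indices taken modulo $n$, this gives the preliminary identity $\operatorname{d_{sp}}(\boldsymbol{x},\boldsymbol{y})=|S_H\cup(S_H-1)|$.

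Next I use the minimal partition $S_H=\bigcup_{l=1}^{L}B_l$ into maximal runs of consecutive indices modulo $n$. Writing each block as $B_l=\{a_l,a_l+1,\ldots,a_l+b_l-1\}$ with $b_l=|B_l|$, a direct computation gives $B_l\cup(B_l-1)=\{a_l-1,a_l,\ldots,a_l+b_l-1\}$, a set of cardinality $b_l+1$. Since $\sum_{l=1}^{L} b_l=\operatorname{d_H}(\boldsymbol{x},\boldsymbol{y})$, the target identity will follow once I show that the enlarged sets $B_l\cup(B_l-1)$ are pairwise disjoint.

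The heart of the argument is this disjointness check. By maximality of the blocks in $S_H$, the index $a_l-1$ immediately preceding $B_l$ lies outside $S_H$, hence outside every other block $B_{l'}$, while the remaining indices of $B_l\cup(B_l-1)$ already lie in $B_l$ and in no other block. The only failure mode is a cyclic wrap in which $a_l-1$ coincides with the terminal index $a_{l'}+b_{l'}-1$ of some other block $B_{l'}$; but this would force $S_H$ to cover all of $\mathbb{Z}/n\mathbb{Z}$, contradicting $\operatorname{d_H}(\boldsymbol{x},\boldsymbol{y})<n$. Combining these observations yields $\operatorname{d_{sp}}(\boldsymbol{x},\boldsymbol{y})=\sum_{l=1}^{L}(b_l+1)=\operatorname{d_H}(\boldsymbol{x},\boldsymbol{y})+L$, as claimed. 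The only real subtlety is precisely this cyclic disjointness, and it is exactly what the hypothesis $0<\operatorname{d_H}(\boldsymbol{x},\boldsymbol{y})<n$ is designed to guarantee.
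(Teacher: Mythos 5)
The paper does not prove this lemma at all: it is quoted, with citation, from Cassuto and Blaum, so there is no internal proof to compare against; your argument must therefore stand on its own, and it essentially does. The reduction $\operatorname{d_{sp}}(\boldsymbol{x},\boldsymbol{y})=|S_H\cup(S_H-1)|$, the computation $|B_l\cup(B_l-1)|=b_l+1$, and the disjointness of the enlarged blocks are exactly the standard route (it is also how the original Cassuto--Blaum proof proceeds), and together they yield $\operatorname{d_{H}}(\boldsymbol{x},\boldsymbol{y})+L$.

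One justification in your disjointness step is misattributed, though the proof survives. If $a_l-1$ coincided with the terminal index $a_{l'}+b_{l'}-1$ of a \emph{different} block $B_{l'}$, this would not force $S_H=\mathbb{Z}/n\mathbb{Z}$: with $n=10$ and $S_H=\{0,1,2,3\}$ written as $\{0,1\}\cup\{2,3\}$, the second block ends just before the first begins, yet $S_H$ is far from the whole cycle. What actually excludes this scenario is minimality of $L$ (two such adjacent blocks could be merged into one run, reducing the number of parts) --- equivalently, the maximality you invoked in the preceding sentence, which already gives $a_l-1\notin S_H$ and hence makes this ``failure mode'' impossible, so that sentence is both redundant and wrongly justified. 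The genuine role of the hypothesis $\operatorname{d_{H}}(\boldsymbol{x},\boldsymbol{y})<n$ is the case $l'=l$: it rules out a single block wrapping around the entire cycle, in which $a_l-1\in B_l$, the count $|B_l\cup(B_l-1)|$ would be $n$ rather than $b_l+1$, and indeed the passage from minimality of the partition to maximality of the blocks would itself break down. With that bookkeeping reassigned --- minimality handles distinct adjacent blocks, $\operatorname{d_{H}}<n$ handles the full-cycle block, and $\operatorname{d_{H}}>0$ guarantees $L\geq 1$ --- your proof is complete and correct.
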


To calculate the symbol-pair distances, we will use the concept of the \textit{coefficient weight} of polynomials, which was first proposed in \cite{Dinh2007}. For a polynomial $f(x)\!=\!a_{n} x^{n}+\cdots+a_{1} x+a_{0}$ of degree $n$, the coefficient weight of $f$, which denoted by $\operatorname{cw}(f)$, is
$$\operatorname{cw}(f)=\left\{\begin{array}{ll}
0, & \text { if } f \text { is a monomial } \\
\min \left\{|i-j|: a_{i} \neq 0, a_{j} \neq 0, i \neq j\right\}, & \text { otherwise. }
\end{array}\right.$$
Intuitively, $\operatorname{cw}(f)$ is the smallest distance among exponents of nonzero terms of $f(x)$. It is shown in \cite{Dinh2018} that if $0 \leq \operatorname{deg}(g(x)) \leq \operatorname{cw}(f(x))-2$ and $\operatorname{deg}(f(x))+\operatorname{deg}(g(x)) \leq n-2$, then
\begin{equation}\label{equ2.2}
\operatorname{wt_{sp}}(f(x) g(x))=\operatorname{wt_{H}}(f(x)) \cdot \operatorname{wt_{sp}}(g(x)).
\end{equation}

\subsection{Constacyclic Codes over \texorpdfstring{$\mathbb{F}_{p^{m}}+u\mathbb{F}_{p^{m}}$}{TEXT}}
\label{sub 2}

Let $\alpha+u\beta$ be a unit in $\mathbb{F}_{p^{m}}+u\mathbb{F}_{p^{m}}$, i.e., $\alpha\neq 0$. This subsection gives the structures of $(\alpha+u\beta)$-constacyclic codes of length $n p^s$ over $\mathbb{F}_{p^{m}}+u\mathbb{F}_{p^{m}}$. Denote by $\mathcal{R}$ the quotient ring $(\mathbb{F}_{p^{m}}+u\mathbb{F}_{p^{m}})[x] /\left\langle x^{n p^{s}}-\alpha-u \beta\right\rangle$.

Note that the structures of the ideals of $\mathcal{R}$ are quite different when the value of $\beta$ is equal to zero or not. The following lemma shows that all the ideals of $\mathcal{R}$ are principal ideals in the case of $\beta\neq 0$.
\begin{lemma}\label{th_R_1}\cite[Theorem 3.3]{Zhao2018}
Let $x^n-\alpha_0$ be an irreducible polynomial in $\mathbb{F}_{p^m}[x]$, $\alpha=\alpha_{0}^{p^s}$, and $\beta$ be a nonzero element in $\mathbb{F}_{p^m}$. Then the ring $\mathcal{R}=(\mathbb{F}_{p^{m}}+u\mathbb{F}_{p^{m}})[x] /\left\langle x^{n p^{s}}-\alpha-u \beta\right\rangle$ is a chain ring whose ideal chain is as follows
$$\mathcal{R}=\langle 1\rangle \supsetneq \langle x^{n}-\alpha_{0} \rangle \supsetneq \cdots \supsetneq \langle\left(x^{n}-\alpha_{0}\right)^{2 p^{s}-1} \rangle \supsetneq \langle\left(x^{n}-\alpha_{0}\right)^{2 p^{s}}\rangle=\langle 0\rangle .$$
In other words, $(\alpha+u\beta)$-constacyclic codes of length $np^s$ over $\mathbb{F}_{p^{m}}+u\mathbb{F}_{p^{m}}$ are precisely the ideals $\mathcal{D}_i=\langle\left(x^{n}-\alpha_{0}\right)^{i}\rangle$ of $\mathcal{R}$, where $0 \leq i \leq 2p^s$. The number of codewords of $(\alpha+u\beta)$-constacyclic code $\mathcal{D}_i$ is $p^{mn(2p^s-i)}$. In particular, $\langle\left(x^{n}-\alpha_{0}\right)^{p^{s}}\rangle=\langle u\rangle$.
\end{lemma}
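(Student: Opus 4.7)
The plan is to identify $y := x^n-\alpha_0$ as the generator of a nilpotent unique maximal ideal of $\mathcal{R}$, then combine Nakayama with a cardinality squeeze to force every ideal to be a power of $\langle y\rangle$. I would first exploit the Frobenius identity $(x^n-\alpha_0)^{p^s}=x^{np^s}-\alpha_0^{p^s}=x^{np^s}-\alpha$ valid in characteristic $p$; reducing modulo the defining relation $x^{np^s}-\alpha-u\beta=0$ yields the key identity $y^{p^s}=u\beta$ in $\mathcal{R}$, and squaring gives $y^{2p^s}=u^2\beta^2=0$. Because $\beta\in\mathbb{F}_{p^m}^{*}$ is a unit, one recovers $u=\beta^{-1}y^{p^s}$, which simultaneously proves the auxiliary identity $\langle u\rangle=\langle(x^n-\alpha_0)^{p^s}\rangle$ and shows that $y$ generates everything involving $u$.

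Next I would verify that $\mathcal{R}$ is local with maximal ideal $\mathfrak{m}=\langle y\rangle$ and residue field $\mathbb{F}_{p^{mn}}$. Passing to the quotient $\mathcal{R}/\langle y\rangle$ imposes $x^n=\alpha_0$, so $x^{np^s}=\alpha$ and the defining relation forces $u\beta=0$; as $\beta$ is a unit, $u=0$ in the quotient, hence $\mathcal{R}/\langle y\rangle \cong \mathbb{F}_{p^m}[x]/\langle x^n-\alpha_0\rangle$, which is the field $\mathbb{F}_{p^{mn}}$ by the irreducibility hypothesis. Thus $\langle y\rangle$ is maximal, and since $y$ is nilpotent, any maximal ideal of the finite commutative ring $\mathcal{R}$ must contain $y$, so $\mathfrak{m}=\langle y\rangle$ is the unique maximal ideal.

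I would then execute a cardinality squeeze. The monomials $1,x,\ldots,x^{np^s-1}$ form a free basis of $\mathcal{R}$ over $\mathbb{F}_{p^m}+u\mathbb{F}_{p^m}$, so $|\mathcal{R}|=(p^{2m})^{np^s}=p^{2mnp^s}$. On the other hand, $\mathfrak{m}^i/\mathfrak{m}^{i+1}$ is a cyclic module over the residue field $\mathbb{F}_{p^{mn}}$ generated by the image of $y^i$, hence of cardinality at most $p^{mn}$; telescoping along $\mathcal{R}\supseteq\mathfrak{m}\supseteq\cdots\supseteq\mathfrak{m}^{2p^s}=0$ gives $|\mathcal{R}|\le (p^{mn})^{2p^s}=p^{2mnp^s}$. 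Equality forces each successive quotient to have size exactly $p^{mn}$ and each inclusion to be strict.

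Finally I would invoke the standard fact that in a local ring with nilpotent principal maximal ideal every ideal is a power of that maximal ideal: for a nonzero ideal $I$ choose $i$ maximal with $I\subseteq\mathfrak{m}^i$ and pick $z\in I\setminus\mathfrak{m}^{i+1}$; writing $z=ry^i$ forces $r$ to be a unit (otherwise $z\in\mathfrak{m}^{i+1}$), so $y^i\in I$ and $I=\mathfrak{m}^i=\langle y^i\rangle$. The sizes $|\mathcal{D}_i|=p^{mn(2p^s-i)}$ then follow from $|\mathcal{R}/\langle y^i\rangle|=(p^{mn})^i$. The main obstacle I anticipate is making the squeeze airtight: one must guarantee $\mathfrak{m}^{2p^s-1}\neq 0$ so the chain truly has length $2p^s$, and this is delivered precisely by matching the independent count $|\mathcal{R}|=p^{2mnp^s}$ against the Nakayama bound—without that exact matching the chain could collapse earlier.
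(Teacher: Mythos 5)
The paper offers no proof of this lemma at all---it is imported verbatim from \cite[Theorem 3.3]{Zhao2018}---so your argument can only be compared with the standard proof in that reference, and it matches it in essence: the Frobenius identity $(x^n-\alpha_0)^{p^s}=x^{np^s}-\alpha=u\beta$ in $\mathcal{R}$, hence $\langle u\rangle=\langle(x^n-\alpha_0)^{p^s}\rangle$ and $(x^n-\alpha_0)^{2p^s}=0$; locality with principal maximal ideal $\langle x^n-\alpha_0\rangle$ and residue field $\mathbb{F}_{p^{mn}}$ (via irreducibility of $x^n-\alpha_0$); and the standard classification of ideals of a finite local ring with nilpotent principal maximal ideal. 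Your proof is correct and complete, and the cardinality squeeze $|\mathcal{R}|=p^{2mnp^s}=\left(p^{mn}\right)^{2p^s}$ is a clean device that simultaneously certifies strictness of the chain at every step (so the nilpotency index is exactly $2p^s$) and delivers the codeword counts $|\mathcal{D}_i|=p^{mn(2p^s-i)}$.
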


For the remaining case of $\beta = 0$, there are three types of ideals in $\mathcal{R}$.

\begin{lemma}\label{th_R_0}\cite[Corollary 3.10]{Cao2015}
Let $x^n-\alpha_0$ be an irreducible polynomial in $\mathbb{F}_{p^m}[x]$, $\alpha=\alpha_{0}^{p^s}$, and $\mathcal{F}=\mathbb{F}_{p^m}[x] / \langle x^{n p^s}-\alpha \rangle$. Then all $\alpha$-constacyclic codes over $\mathbb{F}_{p^m} + u \mathbb{F}_{p^m}$ of length $n p^s$, $i.e.$ all ideals of the ring $\mathcal{F}+u\mathcal{F}$, are given by the following three types:

(I) $\mathcal{D}=\langle (x^n-\alpha_0)^k \rangle$, where $0 \leq k \leq p^s$, with $|\mathcal{D}|=p^{2mn(p^s-k)}$

(II) $\mathcal{D}=\langle\left(x^{n}-\alpha_{0}\right)^{j} b(x)+u\left(x^{n}-\alpha_{0}\right)^{k}\rangle$, where $0 \leq k \leq p^s-1$, $\left\lceil\frac{p^{s}+k}{2}\right\rceil \leq j \leq p^s-1$ and either $b(x)$ is 0 or $b(x)$ is a unit in $\mathcal{F}$, with $|\mathcal{D}|=p^{mn(p^s-k)}$.

(III) $\mathcal{D}=\langle\left(x^{n}-\alpha_{0}\right)^{j} b(x)+u\left(x^{n}-\alpha_{0}\right)^{k}, \left(x^{n}-\alpha_{0}\right)^{k+t} \rangle$, where $0 \leq k \leq p^s-2$, $1 \leq t \leq p^s-k-1$, $k + \left\lceil\frac{t}{2}\right\rceil \leq j \leq  k+t$, and either $b(x)$ is 0 or $b(x)$ is a unit in $\mathcal{F}$, with $|\mathcal{D}|=p^{mn(2p^s-2k-t)}$.

\end{lemma}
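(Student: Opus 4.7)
The plan is to identify the ambient ring, attach two filtration invariants to an arbitrary ideal, and then use a canonical-form argument to classify ideals and pin down the parameter ranges in Types~(I)--(III).

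Setting $X=x^{n}-\alpha_{0}$, the identity $x^{np^{s}}-\alpha=X^{p^{s}}$ (Frobenius in characteristic $p$, using $\alpha=\alpha_{0}^{p^{s}}$) together with the irreducibility of $X$ makes $\mathcal{F}=\mathbb{F}_{p^{m}}[x]/\langle X^{p^{s}}\rangle$ a finite chain ring with maximal ideal $\langle X\rangle$, residue field $\mathbb{F}_{p^{mn}}$, and nilpotency index $p^{s}$; its ideals are exactly $\langle X^{i}\rangle$ for $0\le i\le p^{s}$. Hence $\mathcal{F}+u\mathcal{F}\cong \mathcal{F}[u]/\langle u^{2}\rangle$ is local with maximal ideal $(X,u)$ but, unlike the $\beta\ne 0$ case handled by Lemma~\ref{th_R_1}, is not a chain ring, since $X$ and $u$ are independent nilpotents.

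For any ideal $\mathcal{D}\subseteq\mathcal{F}+u\mathcal{F}$ I would attach its projection $T(\mathcal{D})=\{a\in\mathcal{F}:a+ub\in\mathcal{D}\text{ for some }b\}$ and its $u$-support $S(\mathcal{D})=\{c\in\mathcal{F}:uc\in\mathcal{D}\}$; these are ideals of $\mathcal{F}$, hence of the form $\langle X^{j}\rangle$ and $\langle X^{k'}\rangle$ respectively, with $0\le k'\le j\le p^{s}$ (the inclusion $T(\mathcal{D})\subseteq S(\mathcal{D})$ follows from $u(a+ub)=ua$). I would then pick $g\in\mathcal{D}$ whose projection realises $T(\mathcal{D})$ and, after multiplying by units in $\mathcal{F}$ and subtracting $u$-multiples available from $S(\mathcal{D})$, reduce $g$ to the normal form $g=X^{j}b(x)+uX^{k}$ with $b$ either $0$ or a unit of $\mathcal{F}$. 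Three situations arise: if $\mathcal{D}$ is generated by a pure $X$-power we recover Type~(I); if $\mathcal{D}=\langle g\rangle$ with the above normal form we recover Type~(II); otherwise $S(\mathcal{D})$ is strictly larger than the $u$-support of $\langle g\rangle$ alone and a second generator of the form $X^{k+t}$ must be appended, yielding Type~(III).

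The parameter inequalities fall out of explicit $T/S$ computations on the candidate ideals. For Type~(III), multiplying $g$ by $X^{k+t-j}$ (permissible because $j\le k+t$) and reducing modulo $X^{p^{s}}$ produces $uX^{2k+t-j}\in\mathcal{D}$; consistency with the $u$-support already contributed by $\langle g,X^{k+t}\rangle$ requires $2k+t-j$ to be at least the smallest exponent appearing there, which unwinds to $k+\lceil t/2\rceil\le j$. The upper bound $j\le k+t$ is the convention that any higher projection is absorbed by $X^{k+t}$, and an analogous computation yields $\lceil(p^{s}+k)/2\rceil\le j\le p^{s}-1$ for Type~(II). The main obstacle I foresee is the twofold issue of exhaustiveness and uniqueness: one has to verify that every ideal really does reduce to exactly one canonical form and that the reduction steps do not inadvertently shrink $T(\mathcal{D})$, so the tightness of $\lceil t/2\rceil$ when $t$ is odd, together with the edge cases $k=0$, $j=p^{s}-1$, and $b=0$, is where I would concentrate the detailed bookkeeping.
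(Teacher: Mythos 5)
The paper itself offers no proof of this lemma: it is imported verbatim from \cite[Corollary 3.10]{Cao2015}, so your proposal can only be measured against the standard argument for such classifications, which it essentially reconstructs. Your framework --- realise $\mathcal{F}$ as a chain ring with maximal ideal $\langle X\rangle$ and nilpotency index $p^s$, attach to an ideal $\mathcal{D}\subseteq\mathcal{F}[u]/\langle u^{2}\rangle$ the projection $T(\mathcal{D})$ and the torsion $S(\mathcal{D})$, and reduce a generator to the canonical form $X^{j}b(x)+uX^{k}$ --- is exactly the classical route for classifying ideals of $R+uR$ over a chain ring $R$, and your observation $T(\mathcal{D})\subseteq S(\mathcal{D})$ is the right organising fact. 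Combined with the computations $S(\langle X^{j}b+uX^{k}\rangle)=\langle X^{\min(j,\,p^{s}-j+k)}\rangle$ and $S(\langle X^{j}b+uX^{k},X^{k+t}\rangle)=\langle X^{\min(j,\,2k+t-j)}\rangle$ (for $b$ a unit), it yields the parameter bounds and also, via $|\mathcal{D}|=|T(\mathcal{D})|\cdot|S(\mathcal{D})|$, the cardinalities asserted in the lemma --- which your sketch never verifies, even though they are part of the statement and are used repeatedly later in the paper (e.g.\ in Theorem \ref{th4.6}).

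Three points in your sketch need repair. First, in Type (III) the element $uX^{2k+t-j}$ is not obtained by ``reducing modulo $X^{p^{s}}$'': one has $X^{k+t-j}\left(X^{j}b+uX^{k}\right)=b\,X^{k+t}+uX^{2k+t-j}$, and you must subtract $b(x)X^{k+t}$ using the \emph{second} generator; the mod-$X^{p^{s}}$ mechanism works only in the principal case (II), where $X^{p^{s}-j}(X^{j}b+uX^{k})=uX^{p^{s}-j+k}$ because $X^{p^{s}}=0$ in $\mathcal{F}$. Second, your consistency condition is stated with the inequality in the wrong direction: the canonical constraint is that the torsion exponent $2k+t-j$ be \emph{at most} the projection exponent $j$, which is what unwinds to $k+\lceil t/2\rceil\le j$; ``at least the smallest exponent appearing there'' holds vacuously. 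Third, when such a bound fails the ideal does not become inadmissible --- it coincides with an ideal already on the list for a different parameter triple. For instance, with $p^{s}=4$ the principal ideal $\langle Xb+u\rangle$ satisfies $(Xb+u)^{2}=X^{2}b^{2}$, hence $\langle Xb+u\rangle=\langle Xb+u,\,X^{2}\rangle$, which is the Type (III) ideal with $(j,k,t)=(1,0,2)$; the bound $\lceil(p^{s}+k)/2\rceil\le j$ in Type (II) exists precisely to exclude this double counting. So the inequalities in (II) and (III) are uniqueness-of-parametrisation conditions rather than bare consistency conditions, and establishing that every ideal reduces to exactly one listed form is the substance of the cited corollary --- the bookkeeping you correctly flag, but which remains the unproved bulk of the argument in your proposal.
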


Note that $\mathbb{F}_{p^{m}}$ is a subfield of $\mathbb{F}_{p^{m}}+u\mathbb{F}_{p^{m}}$. We define the \textit{subfield subcode} of codes over $\mathbb{F}_{p^{m}}+u\mathbb{F}_{p^{m}}$ as the set of codewords whose components are in $\mathbb{F}_{p^m}$. We use the notation $\left.\mathcal{D}\right|_{F}$ to denote the subfield subcode of $\mathcal{D}$ and $\operatorname{d_{sp}}\left(\mathcal{D}_{F}\right)$ to denote the symbol-pair distance of $\left.\mathcal{D}\right|_{F}$. We can represent the polynomial $c(x)$ over $\mathbb{F}_{p^{m}}+u\mathbb{F}_{p^{m}}$ as $c(x)=a(x)+u b(x)$, where $a(x), b(x) \in \mathbb{F}_{p^{m}}[x]$. Observing that $c_{i}=a_{i}+u b_{i}=0$ if and only if $a_{i}=b_{i}=0$, where $c_i$, $a_i$ and $b_i$ are coefficients of $x^i$ in polynomials $c(x)$, $a(x)$ and $b(x)$, respectively. It follows that $\operatorname{wt_{sp}}(c(x)) \geq \max\left\{\operatorname{{wt}_{sp}}(a(x)), \operatorname{{wt}_{sp}}(b(x))\right\}$.

\section{MDS Symbol-pair codes over \texorpdfstring{$\mathbb{F}_{p^m}$}{TEXT}}\label{sec3}

\subsection{Symbol-pair distances of constacyclic codes}
Denote $\mathcal{C}_{i}=\langle (x^n-\alpha)^i\rangle$ as the $\alpha$-constacyclic codes of length $np^s$ over $\mathbb{F}_{p^{m}}$, where $0\leq i\leq p^s$.
In this subsection, we give a complete characterization of the symbol-pair distances of $\mathcal{C}_i$. The symbol-pair distances for the cases that $n=1$ and $n\geq 2$ are different, and we only give the analysis of the case for $n\geq 2$. For the related results of the case of $n=1$, we refer the readers to~\cite{Dinh2018}.

We discuss the symbol-pair distance of $\mathcal{C}_i$ depends on the value of $i$. For the trivial cases that $i=0$ and $i=p^s$, we have
\[\operatorname{d_{sp}}(\mathcal{C}_0)=\operatorname{d_{sp}}(\mathcal{F})=2 \] and
\[\operatorname{d_{sp}}(\mathcal{C}_{p^s})=\operatorname{d_{sp}}(\langle 0 \rangle)=0.\]

In order to consider the symbol-pair distances of $\mathcal{C}_{i}$ for $1\leq i\leq p^s-1$, we divide the set $\{ i \in \mathbb{N}, 1 \leq i \leq p^s-1\}$ into $s(p-1)$ parts, i.e.,
$$\begin{aligned}
&\cup_{\substack{0 \leq k \leq s-1 \\ 0 \leq \theta \leq p-2}}\{i \in \mathbb{N} , p^s-p^{s-k}+\theta p^{s-k-1} +1 \leq i \leq p^s-p^{s-k}+(\theta+1)p^{s-k-1}\}\\
&=\{ i \in \mathbb{N} , 1 \leq i \leq p^s-1\}.
\end{aligned}$$

Note that $\operatorname{d_{sp}}(\mathcal{C}_i)\leq \operatorname{d_{sp}}(\mathcal{C}_j)$ if $i\leq j$ since $\mathcal{C}_{i}\supseteq \mathcal{C}_{j}$. In order to determine the symbol-pair distances of $\mathcal{C}_i$ for $p^{s}-p^{s-k}+\theta p^{s-k-1}+1 \leq i \leq p^{s}-p^{s-k}+(\theta+1) p^{s-k-1}$, where $0 \leq k \leq s-1$ and $0 \leq \theta \leq p-2$, we consider an upper bound $U$ on the symbol-pair distance of $\mathcal{C}_{p^{s}-p^{s-k}+(\theta+1) p^{s-k-1}}$ and a lower bound $L$ on the symbol-pair distance of $\mathcal{C}_{p^{s}-p^{s-k}+\theta p^{s-k-1}+1}$. Observing that $\mathcal{C}_{p^{s}-p^{s-k}+\theta p^{s-k-1}+1}\supseteq \mathcal{C}_{p^{s}-p^{s-k}+(\theta+1) p^{s-k-1}}$, therefore,
$$L\leq \operatorname{d_{sp}}(\mathcal{C}_{p^{s}-p^{s-k}+\theta p^{s-k-1}+1})\leq \operatorname{d_{sp}}(\mathcal{C}_{p^{s}-p^{s-k}+(\theta+1) p^{s-k-1}})\leq U.$$
If $L=U$, then we obtain the symbol-pair distances of $\mathcal{C}_{i}$ for each $i$ belongs to the interval $[p^{s}-p^{s-k}+\theta p^{s-k-1}+1,p^{s}-p^{s-k}+(\theta+1) p^{s-k-1}]$.
The following lemma shows the corresponding upper bound.

\begin{lemma}\label{lem3.1}
Let $n, k, \theta$ be integers such that $n \geq 2$, $0 \leq k \leq s-1$ and $0 \leq \theta \leq p-2$. Then $\operatorname{d_{sp}}\left(\mathcal{C}_{p^{s}-p^{s-k}+(\theta+1) p^{s-k-1}}\right) \leq 2(\theta+2) p^{k}$.
\end{lemma}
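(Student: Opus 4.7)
The plan is to exhibit an explicit codeword in $\mathcal{C}_i$, where $i = p^s - p^{s-k} + (\theta+1)p^{s-k-1}$, whose symbol-pair weight is at most $2(\theta+2)p^k$. The natural candidate is the generator polynomial $c(x) = (x^n-\alpha_0)^i$ itself, which lies in $\mathcal{C}_i$ by definition; the whole proof will consist of computing $\operatorname{wt_{sp}}(c(x))$.

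First I would pin down the Hamming weight of $c(x)$. Writing $i$ in base $p$, the digit $i_{s-k-1}$ equals $\theta+1$, the digits $i_{s-k},\dots,i_{s-1}$ all equal $p-1$, and every other digit is zero. Applying Massey's formula (Lemma \ref{lem_H_bi}) gives $\operatorname{wt_H}(c(x)) = \prod_{j}(i_j+1) = (\theta+2)p^k$; the same value also drops out of Theorem \ref{th_H_F}. Next I would expand
\[
c(x) = \sum_{j=0}^{i}\binom{i}{j}(-\alpha_0)^{i-j}x^{nj},
\]
and observe that every nonzero monomial sits at a position of the form $nj$ with $0\leq j\leq i$. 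Because $n \geq 2$, consecutive nonzero positions are separated by at least $n \geq 2$, so no two nonzero coordinates of $c(x)$ are adjacent in the usual linear order. Furthermore, the constant term $(-\alpha_0)^i$ is nonzero, and the leading term sits at position $ni \leq n(p^s-1) = np^s - n$, leaving a cyclic gap of at least $n \geq 2$ between the last and first nonzero coordinates modulo $np^s$. Hence every nonzero coordinate of $c(x)$ is isolated in the support.

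The last step is to invoke Lemma \ref{th_H_S}: in the minimal partition of the support into blocks of consecutive indices, each block has size one, so the number of blocks $L$ equals $\operatorname{wt_H}(c(x))$, and therefore $\operatorname{wt_{sp}}(c(x)) = \operatorname{wt_H}(c(x)) + L = 2(\theta+2)p^k$. Since $c(x) \in \mathcal{C}_i$, this yields $\operatorname{d_{sp}}(\mathcal{C}_i) \leq 2(\theta+2)p^k$, as claimed. I do not expect a real obstacle in this lemma; the only point requiring care is the cyclic wrap-around, which is precisely what forces the hypothesis $n \geq 2$ and is handled by the estimate $ni \leq np^s - n$.
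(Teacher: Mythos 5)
Your proposal is correct and follows essentially the same route as the paper: both take the generator polynomial $(x^n-\alpha_0)^{p^s-p^{s-k}+(\theta+1)p^{s-k-1}}$ as the witness codeword, compute its Hamming weight $(\theta+2)p^k$ via Massey's formula (Lemma \ref{lem_H_bi}), and use the fact that all nonzero coefficients sit at multiples of $n\geq 2$ to double the weight. The paper phrases the isolation of the support through the coefficient weight $\operatorname{cw}(\cdot)\geq n\geq 2$ rather than through Lemma \ref{th_H_S} with singleton blocks, but this is the same observation; your explicit treatment of the cyclic wrap-around via $ni\leq np^s-n$ is a detail the paper leaves implicit.
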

\begin{proof}
By Lemma \ref{lem_H_bi}, we have
$$\operatorname{wt_H}(\left(x^n-\alpha_{0}\right)^{p^{s}-p^{s-k}+(\theta+1) p^{s-k-1}})= (\theta+2) p^k.$$
Since $$\operatorname{cw}( \left(x^n-\alpha_{0}\right)^{p^{s}-p^{s-k}+(\theta+1) p^{s-k-1}}) \geq n \geq 2,$$
it follows that
$$\operatorname{wt_{sp}}(  \left(x^n-\alpha_{0}\right)^{p^{s}-p^{s-k}+(\theta+1) p^{s-k-1}})=2 (\theta+2) p^k.$$
Thus $$\operatorname{d_{sp}}\left(\mathcal{C}_{p^{s}-p^{s-k}+(\theta+1) p^{s-k-1}}\right) \leq 2(\theta+2) p^{k}.$$
\end{proof}

The lower bounds of symbol-pair distances of $\mathcal{C}_{p^{s}-p^{s-k}+\theta p^{s-k-1}+1}$ is more complicated and we consider it in four subcases:
\begin{enumerate}[(i)]
\item $\theta=0$, $k=0$;
\item $\theta=0$, $1 \leq k \leq s-2$;
\item  $1 \leq \theta \leq p-2$, $0 \leq k \leq s-2$;
\item  $0 \leq \theta \leq p-2$, $k=s-1$.
\end{enumerate}

We start with the first case of $k=0$ and $\theta=0$.

\begin{lemma}\label{prop3.1} The pair distance $\operatorname{d_{sp}}(\mathcal{C}_1)$ of $\mathcal{C}_1=\langle x^n-\lambda_0\rangle\subseteq \mathbb{F}_{p^m}[x]/\langle x^{np^s}-\lambda\rangle$ is greater than or equal to $4$.
\end{lemma}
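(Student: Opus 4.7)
The plan is to combine the Hamming-weight lower bound $d_H(\mathcal{C}_1)\geq 2$ from Theorem \ref{th_H_F} with Lemma \ref{th_H_S}, which writes $\operatorname{wt_{sp}}(c)=\operatorname{wt_H}(c)+L(c)$, where $L(c)$ counts the maximal runs of consecutive nonzero coordinates of $c$ (indices taken cyclically modulo $np^s$). Since $L(c)\geq 1$ for any nonzero codeword, every $c\in\mathcal{C}_1$ with $\operatorname{wt_H}(c)\geq 3$ automatically satisfies $\operatorname{wt_{sp}}(c)\geq 4$. The nontrivial case is therefore a codeword of Hamming weight exactly $2$.

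Suppose then that $c=ax^{i}+bx^{j}\in\mathcal{C}_1$ with $0\leq i<j\leq np^s-1$ and $a,b\in\mathbb{F}_{p^m}^{\ast}$. Membership in $\mathcal{C}_1=\langle x^{n}-\alpha_0\rangle$, combined with the factorization $x^{np^s}-\alpha=(x^{n}-\alpha_0)^{p^s}$, is equivalent to $(x^{n}-\alpha_0)\mid c(x)$ in $\mathbb{F}_{p^m}[x]$: any degree-$<np^s$ representative that is divisible by $x^{n}-\alpha_0$ modulo $x^{np^s}-\alpha$ is already divisible by it in $\mathbb{F}_{p^m}[x]$. Let $\zeta$ be a root of $x^{n}-\alpha_0$ in an extension field; then $a\zeta^{i}+b\zeta^{j}=0$, whence $\zeta^{j-i}=-a/b\in\mathbb{F}_{p^m}^{\ast}$. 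Because $x^{n}-\alpha_0$ is irreducible of degree $n$ over $\mathbb{F}_{p^m}$, the powers $1,\zeta,\ldots,\zeta^{n-1}$ are linearly independent over $\mathbb{F}_{p^m}$, so $\zeta^{r}\notin\mathbb{F}_{p^m}$ for every $1\leq r\leq n-1$. Writing $j-i=qn+r$ with $0\leq r<n$ and using $\zeta^{n}=\alpha_0\in\mathbb{F}_{p^m}$ gives $\zeta^{r}=\alpha_0^{-q}\zeta^{j-i}\in\mathbb{F}_{p^m}$, forcing $r=0$, i.e., $n\mid(j-i)$.

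The divisibility $n\mid(j-i)$ together with $n\geq 2$ and $1\leq j-i\leq np^s-1$ immediately yields $2\leq n\leq j-i\leq n(p^s-1)=np^s-n\leq np^s-2$. Hence $j-i\neq 1$ and $j-i\neq np^s-1$, so the two nonzero coordinates of $c$ are not cyclically adjacent; therefore $L(c)=2$ and $\operatorname{wt_{sp}}(c)=2+2=4$, proving $\operatorname{d_{sp}}(\mathcal{C}_1)\geq 4$. The one place that needs care—and the only real obstacle I anticipate—is the bookkeeping step of translating ideal membership in $\mathcal{F}$ into genuine polynomial divisibility in $\mathbb{F}_{p^m}[x]$ for a weight-$2$ representative; taking the unique degree-$<np^s$ representative sidesteps any ambiguity introduced by the relation $(x^{n}-\alpha_0)^{p^s}=0$ in $\mathcal{F}$. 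Once that reduction is clean, everything else is a direct consequence of the irreducibility of $x^{n}-\alpha_0$ and the run-counting identity in Lemma \ref{th_H_S}.
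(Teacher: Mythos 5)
Your proof is correct, but it takes a genuinely different route from the paper's. The paper argues directly on symbol-pair weight: a codeword of pair-weight $2$ must be a monomial $u x^j$, which is a unit of $\mathcal{F}$ and so cannot lie in the proper ideal $\mathcal{C}_1$; a codeword of pair-weight $3$ must be $u_0 x^j + u_1 x^{j+1}$, and after factoring out the unit $x^j$ one would need $(x^n-\alpha_0)\mid(u_0+u_1x)$, impossible by a bare degree comparison since $n\geq 2$. You instead stratify by Hamming weight: $\operatorname{wt_H}(c)\geq 3$ forces $\operatorname{wt_{sp}}(c)\geq 4$ via the run-counting identity of Lemma \ref{th_H_S}, weight $1$ is excluded by $\operatorname{d_H}(\mathcal{C}_1)=2$ from Theorem \ref{th_H_F}, and for the remaining weight-$2$ codewords you adjoin a root $\zeta$ of the irreducible $x^n-\alpha_0$ and deduce $n\mid(j-i)$ from $\zeta^{j-i}\in\mathbb{F}_{p^m}^{\ast}$ and the $\mathbb{F}_{p^m}$-independence of $1,\zeta,\ldots,\zeta^{n-1}$, so the two nonzero positions are never cyclically adjacent. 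Your field-theoretic step is heavier machinery than the paper's degree count, but it buys more: it shows every Hamming-weight-$2$ codeword of $\mathcal{C}_1$ has support gap divisible by $n$ and hence pair-weight exactly $4$, structural information the paper's pattern-elimination does not yield (and which is in the same spirit as the support-class-mod-$n$ analysis the paper later runs in Lemmas \ref{lem3.3} and \ref{lem3.4}). Your handling of the reduction from ideal membership to genuine divisibility in $\mathbb{F}_{p^m}[x]$ is also sound, since $x^n-\alpha_0$ divides $x^{np^s}-\alpha=(x^n-\alpha_0)^{p^s}$. Two minor points: the blanket inequality $\operatorname{wt_{sp}}(c)\geq\operatorname{wt_H}(c)+1$ via Lemma \ref{th_H_S} requires $\operatorname{wt_H}(c)<np^s$; for a full-support codeword the identity does not apply, but then $\operatorname{wt_{sp}}(c)=np^s\geq 4$ trivially because $n\geq 2$. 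And note that both proofs (yours explicitly, the paper's implicitly in the degree step) rely on the standing assumption $n\geq 2$ of this subsection; for $n=1$ the statement fails, as $\operatorname{d_{sp}}(\mathcal{C}_1)=3$ there.
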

\begin{proof}
Verify that a codeword with symbol-pair weight two must be of form $u x^j$, which is invertible in $\mathcal{F}$. Hence there is no codeword in $\mathcal{C}_1$ with symbol-pair weight two. Note that a codeword with symbol-pair weight three has the form $u_0 x^j + u_1 x^{j+1}$, where $0 \leq j \leq n p^{s} -1$. It follows that $x^n-\alpha_0$ divides $u_0 x^j + u_1 x^{j+1}=(u_0 +u_1 x) x^j$, hence $x^n - \alpha_0 $ divides $ u_0+u_1 x$, which is impossible since the degree of $x^n - \alpha_0$ is greater than that of $u_0+u_1 x$. Hence there is no codeword in $\mathcal{C}_1$ with symbol-pair weight three. Therefore, we obtain $\operatorname{d_{sp}}(\mathcal{C}_1) \geq 4$.
\end{proof}

The following lemma shows the lower bound of the minimum pair-distance of $\mathcal{C}_{p^s- p^{s-k} +1}$ in the case that $\theta=0$ and $1 \leq k \leq s-2$.
\begin{lemma}\label{lem3.2}
Let $n,k$ be integers such that $n \geq 2$ and $1 \leq k \leq s-2$. Then $\operatorname{d_{sp}}(\mathcal{C}_{p^s- p^{s-k} +1}) \geq 4p^k$.
\end{lemma}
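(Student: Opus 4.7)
The plan is to reduce the symbol-pair weight computation in $\mathcal{C}_{p^s - p^{s-k}+1}$ to the one already carried out for $\mathcal{C}_1$ in Lemma \ref{prop3.1}, via a Frobenius-style factorization. Given a nonzero codeword $c(x)$ represented as a polynomial of degree less than $np^s$, I would write $c(x) = (x^n - \alpha_0)^{p^s - p^{s-k}+1} f(x)$ in $\mathbb{F}_{p^m}[x]$ and set $b(x) = (x^n - \alpha_0)f(x)$, so that
\[
c(x) = (x^n - \alpha_0)^{p^s - p^{s-k}} \cdot b(x),
\]
with $b(x)$ a nonzero multiple of $x^n - \alpha_0$ of degree at most $np^{s-k} - 1$.

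The structural fact that makes this factorization useful is
\[
(x^n - \alpha_0)^{p^s - p^{s-k}} = \bigl(x^{np^{s-k}} - \alpha_0^{p^{s-k}}\bigr)^{p^k - 1},
\]
which, combined with Lucas' theorem (giving $\binom{p^k-1}{j} \not\equiv 0 \pmod p$ for all $j$), shows that this polynomial has exactly $p^k$ nonzero coefficients, located at the equally-spaced positions $0, np^{s-k}, 2np^{s-k}, \ldots, (p^k-1)np^{s-k}$. Hence multiplying by $b(x)$ of degree less than $np^{s-k}$ produces a codeword whose support in $\mathbb{Z}_{np^s}$ is the disjoint union of $p^k$ shifted copies of $\operatorname{supp}(b)$. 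From this block picture I would establish the identity $\operatorname{wt_{sp}}(c) = p^k \cdot \operatorname{wt_{sp}}(b)$, where the right-hand side is the circular symbol-pair weight of $b$ viewed in $\mathbb{F}_{p^m}^{np^{s-k}}$. The proof splits on whether both endpoints $b_0$ and $b_{np^{s-k}-1}$ are nonzero: in that case, each of the $p^k$ inter-block adjacencies in $c$ merges two runs, which exactly matches the wrap-around merge of the first and last linear runs of $b$; otherwise no such merges occur on either side.

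The final step is to bound $\operatorname{wt_{sp}}(b) \geq 4$ by an argument parallel to Lemma \ref{prop3.1}: $b$ cannot be a monomial (by coprimality of $x^n - \alpha_0$ with $x^j$), cannot be a non-wrapping two-term polynomial $u_0 x^j + u_1 x^{j+1}$ with $j+1 < np^{s-k}$ (since $\deg(u_0 + u_1 x) < n$), and cannot be the wrap-around two-term polynomial $u_0 + u_1 x^{np^{s-k}-1}$--evaluating at any root $\zeta$ of $x^n - \alpha_0$ would force $\zeta$ to be a fixed scalar, contradicting the existence of $n \geq 2$ distinct roots (separability holds since $\gcd(n,p) = 1$). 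Combining everything yields $\operatorname{wt_{sp}}(c) \geq 4p^k$, as required.

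The main obstacle I anticipate is the boundary sub-case $\deg b = np^{s-k} - 1$: the coefficient-weight hypothesis $\deg b \leq \operatorname{cw}\!\bigl((x^n-\alpha_0)^{p^s-p^{s-k}}\bigr) - 2 = np^{s-k} - 2$ needed to invoke equation \eqref{equ2.2} fails by exactly one, so one cannot simply quote \eqref{equ2.2}. The hands-on block-and-run count described above is what covers this case; happily the merges at the $p^k$ boundaries of $c$ are uniform and cancel exactly against the wrap-around contribution to the circular run count of $b$, so the identity $\operatorname{wt_{sp}}(c) = p^k\operatorname{wt_{sp}}(b)$ still holds.
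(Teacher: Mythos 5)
Your proof is correct, and its skeleton matches the paper's: the same factorization $c(x)=(x^n-\alpha_0)^{p^s-p^{s-k}}\,b(x)$ with $b(x)=(x^n-\alpha_0)f(x)$, the same observation that $(x^{np^{s-k}}-\alpha_0^{p^{s-k}})^{p^k-1}$ has exactly $p^k$ nonzero coefficients spaced $np^{s-k}$ apart, and the same endgame of ruling out circular pair weight $2$ and $3$ for $b$ by viewing it as a codeword of $\langle x^n-\alpha_0\rangle$ in the length-$np^{s-k}$ constacyclic ring, exactly the mechanism of Lemma \ref{prop3.1} (which the paper itself invokes in its Case 3; your root-counting exclusion of $u_0+u_1x^{np^{s-k}-1}$ is a sound variant of the paper's argument that $x^j$ is a unit, so $x^n-\alpha_0$ would have to divide a polynomial of degree $<n$). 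Where you genuinely depart is the middle combinatorial step. The paper splits into three cases: $\operatorname{deg}(g)\leq np^{s-k}-2$, settled by quoting equation \eqref{equ2.2}; $\operatorname{deg}(g)=np^{s-k}-1$ with $g(0)=0$, reduced to the first case by factoring out a power of $x$ (a constacyclic shift); and $\operatorname{deg}(g)=np^{s-k}-1$ with $g(0)\neq 0$, where after forcing $\operatorname{wt_H}(g)\geq 3$ it enumerates the minimal partitions of $S_H$ explicitly for the three possible positions of the middle term. Your single identity $\operatorname{wt_{sp}}(c)=p^k\cdot\operatorname{wt_{sp}}(b)$, with $b$ read as a circular word of length $np^{s-k}$, subsumes all three cases uniformly and is in fact an equality rather than a bound, which is cleaner and slightly stronger. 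I verified the block-and-run count: writing $L_{\mathrm{lin}}(b)$ and $L_{\mathrm{circ}}(b)$ for the linear and circular run counts, runs of $c$ merge across a block boundary iff both $b_0$ and $b_{np^{s-k}-1}$ are nonzero, giving $p^k L_{\mathrm{lin}}(b)-p^k=p^k L_{\mathrm{circ}}(b)$ circular runs in that case and $p^k L_{\mathrm{lin}}(b)=p^k L_{\mathrm{circ}}(b)$ otherwise, as you claim. One small point to make explicit in a write-up: the run formula of Lemma \ref{th_H_S} is stated only for $0<\operatorname{d_H}<n$, so the degenerate case where $b$ has full support (equivalently, where your ``first and last linear runs'' coincide) must be checked directly; there $\operatorname{wt_{sp}}(c)=p^k np^{s-k}=p^k\operatorname{wt_{sp}}(b)$ trivially, so the identity survives and the proof is complete.
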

\begin{proof}

Let $c(x)$ be any nonzero codeword in $\mathcal{C}_{p^s - p^{s-k}+1}$. Then there is a nonzero element $f(x)$ in $\mathcal{F}$ such that $c(x)=\left(x^n-\alpha_{0}\right)^{p^{s}-p^{s-k}+1} f(x)$ with $\operatorname{deg}(f) < n p^s - n (p^{s}-p^{s-k}+1) = n( p^{s-k}-1)$. Let $g(x)=\left(x^n-\alpha_{0}\right) f(x)$. Then $\operatorname{deg}(g) < n p^{s-k}$, $\operatorname{wt_H}(g(x)) \geq 2$, and
$$\begin{aligned}
c(x) &=\left(x^n-\alpha_{0}\right)^{p^{s}-p^{s-k}} g(x) \\
&=\left[\sum_{j=0}^{p^{k}-1} \binom{p^{k} -1}{j} \left(-\alpha_{0}\right)^{p^{s-k}\left(p^{k}-j-1\right)} x^{n p^{s-k} j}\right] g(x).
\end{aligned}$$
We discuss the symbol-pair weight of $c(x)$ in the following three cases.

Case 1: If $\operatorname{deg}(g) \leq n p^{s-k}-2$, then
$$\operatorname{cw}( \left(x^n-\alpha_{0}\right)^{p^{s}-p^{s-k}} )=n p^{s-k} \geq  \operatorname{deg}(g) +2$$
and
$$\operatorname{deg}( \left(x^n-\alpha_{0}\right)^{p^{s}-p^{s-k}} ) + \operatorname{deg}(g) \leq n p^s -2.$$
By equation (\ref{equ2.2}), we have
$$\begin{array}{rl}
\operatorname{wt_{sp}}\left( c(x) \right) & =\operatorname{wt_H}( \left(x^n-\alpha_{0}\right)^{p^{s}-p^{s-k}}) \cdot \operatorname{wt_{sp}}\left(g(x)\right)\\
&= p^k \operatorname{wt_{sp}}\left(g(x)\right).\\
\end{array}$$
According to Lemma \ref{prop3.1}, $\operatorname{wt_{sp}}(g(x)) \geq \operatorname{d_{sp}}(\mathcal{C}_1) \geq 4$, which deduces that $\operatorname{wt_{sp}}\left( c(x) \right) \geq 4 p^k$.

Case 2: If $\operatorname{deg}(g)=n p^{s-k}-1$ and $g(0)=0$, then there is an integer $l > 0$ such that $g(x)=x^l g'(x)$, where $\operatorname{deg}(g') \leq n p^{s-k}-2$. Clearly,
$$\begin{array}{rl}
\operatorname{wt_{sp}}(c(x))&=\operatorname{wt_{sp}}( \left(x^n-\alpha_{0}\right)^{p^{s}-p^{s-k}} g(x) )\\
&=\operatorname{wt_{sp}}( \left(x^n-\alpha_{0}\right)^{p^{s}-p^{s-k}} x^l g'(x) )\\
&=\operatorname{wt_{sp}}( \left(x^n-\alpha_{0}\right)^{p^{s}-p^{s-k}} g'(x) ).
\end{array}$$
 Similar to the proof in Case 1, we have $\operatorname{wt_{sp}}\left( c(x) \right) \geq 4 p^k$.

Case 3: If $\operatorname{deg}(g)\!=\!n p^{s-k}-1$ and $g(0) \neq 0$, then $g(x)\!=\!\left(x^n-\alpha_{0}\right) f(x)$ is an element in $\langle x^n-\alpha_{0} \rangle$ of the ring ${\mathbb{F}_{p^{m}}[x]}/{\langle x^{np^{s-k}}-{\alpha_0}^{np^{s-k}}\rangle}$, $i.e.$, a codeword of an $\alpha_{0}^{np^{s-k}}$- constacyclic code of length $np^{s-k}$ over $\mathbb{F}_{p^{m}}$. According to Lemma \ref{prop3.1}, $\operatorname{wt_{sp}}(g(x))\geq 4$, which implies that $g(x)$ cannot be the form $r_0 + r_{1} x^{n p^{s-k}-1}$, where $r_0 , r_1 \neq 0$. Hence $\operatorname{wt_H}(g(x)) \geq 3$. When $\operatorname{wt_H}(g(x)) \geq 4$, we have
$$\begin{array}{rl}
\operatorname{wt_{sp}}(c(x)) & \geq \operatorname{wt_{H}}(c(x)) \\
& = \operatorname{wt_{H}}( \left(x^n-\alpha_{0}\right)^{p^{s}-p^{s-k}} ) \cdot \operatorname{wt_{H}} (g(x))\\
& \geq 4 p^k.
\end{array}$$
When $\operatorname{wt_H}(g(x))=3$, we assume that
 \begin{center}
 $g(x)=r_0 + r_1 x^l + r_{2} x^{n p^{s-k}-1}$,
\end{center}
where $0 < l < n p^{s-k}-1$ and $r_0 , r_1, r_2 \neq 0$. Let $S_H$ be a set of the exponents of nonzero terms of $c(x)$. Then the minimal partition of the set $S_H$ to subsets of consecutive indices may be the following three cases:\\
if $l=1$,
\[S_H=\cup_{1 \leq j \leq p^k-1}\{ np^{s-k}j-1, np^{s-k}j, np^{s-k}j+1\}\cup \{ 0,1, np^s-1\};\]
if $l=np^{s-k}-2$,
\[S_H=\cup_{1 \leq j \leq p^k-1}\{ np^{s-k}j-2, np^{s-k}j-1, np^{s-k}j \} \cup  \{ 0,np^s-1,np^s-2\};\]
if $1 < l < n p^{s-k}-2$,
\[S_H=\cup_{1 \leq j \leq p^k-1}(\{ np^{s-k}j-1, np^{s-k}j\} \cup \{np^{s-k}j+l\})\cup \{ 0,np^s-1\} \cup \{l\}.\]

According to the above three cases, we have $\operatorname{d_{sp}}(\mathcal{C}_{p^s-p^{s-k}+1}) \geq 4 p^k$. This completes the proof.
\end{proof}

The following lemma is considering the case of $0 \leq k \leq s-2$ and $1 \leq \theta \leq p-2$.
\begin{lemma}\label{lem3.3}
Let $n,k,\theta$ be integers such that $n \geq 2$, $0 \leq k \leq s-2$, and $1 \leq \theta \leq p-2$. Then $\operatorname{d_{sp}}(\mathcal{C}_{p^s- p^{s-k}+ \theta p^{s-k-1} +1}) \geq 2(\theta +2)p^k$.
\end{lemma}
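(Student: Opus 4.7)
The plan is to mirror Lemma \ref{lem3.2}, but with a two-level factorization that separates the contribution of the middle $p$-adic digit $\theta$. Writing any nonzero codeword as $c(x) = (x^n-\alpha_0)^{p^s-p^{s-k}+\theta p^{s-k-1}+1} f(x)$ and setting $h(x) = (x^n-\alpha_0)f(x)$, the characteristic-$p$ identity $(x^n-\alpha_0)^{p^j} = x^{np^j}-\alpha_0^{p^j}$ yields
\[
c = F_1 F_2 h, \qquad F_1 = (x^n-\alpha_0)^{p^s-p^{s-k}}, \quad F_2 = (x^{np^{s-k-1}}-\alpha_0^{p^{s-k-1}})^\theta,
\]
with $\deg(h) \leq n(p-\theta)p^{s-k-1}-1$. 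Applying Lemma \ref{lem_H_bi} to the $p$-adic expansions of $p^s-p^{s-k}$ and $\theta p^{s-k-1}$ gives $\operatorname{wt_H}(F_1) = p^k$ and $\operatorname{wt_H}(F_2) = \theta+1$, while a direct inspection of supports yields $\operatorname{cw}(F_1) = np^{s-k}$ and $\operatorname{cw}(F_2) = np^{s-k-1}$.

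First the easy cases. If $\deg(h) \leq np^{s-k-1}-2$, then $\deg(F_2 h) \leq (\theta+1)np^{s-k-1}-2 \leq np^{s-k}-2$, and two successive applications of (\ref{equ2.2})—to $F_2 \cdot h$ and then to $F_1 \cdot (F_2 h)$—give $\operatorname{wt_{sp}}(c) = (\theta+1) p^k \operatorname{wt_{sp}}(h)$. Since $h$ is a nonzero multiple of $x^n-\alpha_0$ with $n \geq 2$, the argument of Lemma \ref{prop3.1} yields $\operatorname{wt_{sp}}(h) \geq 4$, so $\operatorname{wt_{sp}}(c) \geq 4(\theta+1)p^k \geq 2(\theta+2)p^k$. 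If $\deg(h) \geq np^{s-k-1}-1$ but $h(0) = 0$, then since $\gcd(x, x^n-\alpha_0) = 1$ one writes $h = x^\ell h'$ with $\deg(h') < \deg(h)$; as multiplication by the unit $x$ in $\mathcal{F}$ is a constacyclic shift that preserves pair weight, iteration reduces to the case $h(0) \neq 0$.

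The substantive case is $h(0) \neq 0$ with $\deg(h) \geq np^{s-k-1}-1$. For $k \geq 1$, I would view $F_2 h = (x^n-\alpha_0)^{\theta p^{s-k-1}+1} f$ (of degree at most $np^{s-k}-1$) as a codeword of the $\alpha_0^{p^{s-k}}$-constacyclic code $\mathcal{C}'_{\theta p^{s-k-1}+1}$ of length $np^{s-k}$ in the smaller ring $\mathcal{F}' = \mathbb{F}_{p^m}[x]/\langle x^{np^{s-k}}-\alpha_0^{p^{s-k}}\rangle$, and apply the $k=0$ case of the present lemma (with $s \to s-k \geq 2$) to obtain $\operatorname{wt_{sp}}'(F_2 h) \geq 2(\theta+2)$. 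When $\deg(F_2 h) \leq np^{s-k}-2$, the small-ring and large-ring pair weights of $F_2 h$ coincide (the wrap pair $(0,(F_2 h)_0)$ is counted identically in both), and (\ref{equ2.2}) applied to $F_1 \cdot (F_2 h)$ gives $\operatorname{wt_{sp}}(c) = p^k \operatorname{wt_{sp}}(F_2 h) \geq 2(\theta+2)p^k$. In the edge subcase $\deg(F_2 h) = np^{s-k}-1$, the expansion $F_1 = \sum_{j=0}^{p^k-1} a_j\, x^{jnp^{s-k}}$ with every $a_j \neq 0$ tiles $p^k$ consecutive length-$np^{s-k}$ shifted copies of $F_2 h$ across $[0, np^s-1]$, and a direct pair count—within each copy, plus $p^k$ between-copy boundary pairs $(a_j(F_2 h)_{np^{s-k}-1}, a_{j+1}(F_2 h)_0)$—yields $\operatorname{wt_{sp}}(c) = p^k \operatorname{wt_{sp}}'(F_2 h)$.

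The $k=0$ case (for every $s \geq 2$) must be proved directly and is the main obstacle. I would use the residue decomposition $c(x) = \sum_{r=0}^{np^{s-1}-1} x^r \phi^{(r)}(x^{np^{s-1}})$ with $\phi^{(r)}(y) = (y-\alpha_0^{p^{s-1}})^\theta h^{(r)}(y)$ and $h^{(r)}(y) = \sum_q h_{qnp^{s-1}+r}\, y^q$. Each nonzero $\phi^{(r)}$ is a codeword of the length-$p$ constacyclic code $\langle (y-\alpha_0^{p^{s-1}})^\theta\rangle$ over $\mathbb{F}_{p^m}$, whose minimum Hamming distance is $\theta+1$ by Theorem \ref{th_H_F}; since positions within one residue column of $c$ are separated by $np^{s-1} \geq 6$ (as $\theta \geq 1$ forces $p \geq 3$) they are isolated in the support, so Lemma \ref{th_H_S} gives $\operatorname{wt_{sp}}(c) = \operatorname{wt_H}(c) + L_c$. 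A case analysis on $R = \{r : \phi^{(r)} \neq 0\}$ then closes the proof: when $|R| = 1$, the divisibility $(x^n-\alpha_0)^{\theta p^{s-1}+1} \mid c$ combined with the one-column form of $c$ forces $(y-\alpha_0^{p^{s-1}})^{\theta+1} \mid \phi^{(r_0)}$, so $\operatorname{wt_H}(c) = \operatorname{wt_H}(\phi^{(r_0)}) \geq \theta+2$ (again by Theorem \ref{th_H_F}), yielding $\operatorname{wt_{sp}}(c) = 2\operatorname{wt_H}(c) \geq 2(\theta+2)$; when $|R| \geq 2$, $\operatorname{wt_H}(c) \geq 2(\theta+1)$, and a short combinatorial argument ruling out a single-run support (via the clash between $\operatorname{wt_H}(c)$ and $|R|(\theta+1)$) gives $L_c \geq 2$, whence $\operatorname{wt_{sp}}(c) \geq 2(\theta+1)+2 = 2(\theta+2)$. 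The two delicate bookkeeping arguments—the boundary-pair count in the $\deg(F_2 h) = np^{s-k}-1$ tiling subcase and the $|R|$ case analysis in the $k=0$ base case—are where the proof requires the greatest care.
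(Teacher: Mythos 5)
Your proposal is correct, but it follows a genuinely different route from the paper's. The paper never inducts on $k$: it keeps the single factorization $c(x)=(x^{np^{s-k-1}}-\alpha_0^{p^{s-k-1}})^{p^{k+1}-p+\theta}g(x)$ with $g=(x^n-\alpha_0)f$, and splits only on whether the exponents of $g$ lie in one residue class modulo $np^{s-k-1}$ or in several. In the one-class case $\operatorname{cw}(c)\geq np^{s-k-1}\geq 4$, so $\operatorname{wt_{sp}}(c)=2\operatorname{wt_H}(c)\geq 2\operatorname{d_H}(\mathcal{C}_{p^s-p^{s-k}+\theta p^{s-k-1}+1})=2(\theta+2)p^k$ straight from Theorem \ref{th_H_F} --- exactly the bound your $|R|=1$ multiplicity-upgrade argument ($(y-\gamma)^{\theta}\mapsto(y-\gamma)^{\theta+1}$ via the $p^{s-1}$-divisibility of column multiplicities) reproves by hand, and your version of it is a nice self-contained substitute. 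In the multi-class case the paper bounds each residue component below by $\operatorname{d_H}(\mathcal{C}_{p^s-p^{s-k}+\theta p^{s-k-1}})\geq(\theta+1)p^k$, so $\operatorname{wt_H}(c)\geq 2(\theta+1)p^k$; since within-class gaps are $\geq 4$, every run of consecutive support indices has length at most two, whence $L\geq(\theta+1)p^k$ and, by Lemma \ref{th_H_S}, $\operatorname{wt_{sp}}(c)\geq 3(\theta+1)p^k\geq 2(\theta+2)p^k$ using $\theta\geq 1$ --- this is the paper's counterpart of your $|R|\geq 2$ case and the only place $\theta\geq1$ enters. What your route buys is the exact multiplicativity $\operatorname{wt_{sp}}(c)=p^k\operatorname{wt_{sp}}(F_2h)$ under the tiling by $F_1$ (which, as you can check, survives even the full-support edge case, since then both sides equal $np^s$) and a clean reduction of every $k\geq1$ to the $k=0$, length-$np^{s-k}$ situation; what the paper's route buys is brevity, avoiding your trichotomy on $\deg(h)$ and $h(0)$ altogether. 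One detail of yours needs repair: in the $|R|\geq 2$ case, the assertion $L_c\geq 2$ is not actually forced by the clash between $\operatorname{wt_H}(c)$ and $|R|(\theta+1)$ --- a single circular run (including full support, where Lemma \ref{th_H_S} does not even apply) is not excluded by that count. The fix is one line: if the support were a single run, two same-column positions inside it would be $\geq np^{s-1}$ apart, so the run would have length $w\geq np^{s-1}+1$, hence meet every residue class, forcing all $np^{s-1}$ columns nonzero and $w\geq np^{s-1}(\theta+1)$; then $\operatorname{wt_{sp}}(c)\geq w+1$ (or $np^s$ for full support) vastly exceeds $2(\theta+2)$. With that patch your argument closes completely.
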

\begin{proof}

Let $c(x)$ be any nonzero codeword in $\mathcal{C}_{p^s- p^{s-k}+ \theta p^{s-k-1} +1}$. Then there is a nonzero element $f(x)$ in $\mathcal{F}$ such that $c(x)\!\!=\!\!\left(x^n\!-\alpha_{0}\right)^{p^s- p^{s-k}+ \theta p^{s-k-1} +1}\!f(x)$ with $\operatorname{deg}(f) < n[ (p-\theta)p^{s-k-1}-1]$. Let $g(x)=\left(x^n-\alpha_{0}\right) f(x)$. Then $\operatorname{deg}(g) < n (p-\theta)p^{s-k-1}$, $\operatorname{wt_H}(g(x)) \geq 2$, and
$$\begin{array}{rl}
c(x)=&\left(x^{n}-\alpha_{0}\right)^{p^s- p^{s-k}+ \theta p^{s-k-1}} g(x) \\
= & (x^{n p^{s-k-1}}-\alpha_{0}^{p^{s-k-1}})^{p^{k+1}- p+ \theta} g(x).
\end{array}$$
Suppose that $\mathcal{T}=\{i_1, \cdots, i_{\eta}\}$ is a set of the exponents of nonzero terms of $g(x)$. For an integer $i$, let $\mathcal{S}_{i}$ be a set of integers congruent to $i$ modulo $n p^{s-k-1}$, $i.e.$, $\mathcal{S}_{i}=\{j \mid j \equiv i \; ( \bmod\; n p^{s-k-1})  \}$. We consider two cases that $\mathcal{T} \subset \mathcal{S}_{i_1}$ and $\mathcal{T} \not\subset \mathcal{S}_{i_1}$.

Case 1: When $\mathcal{T} \subset \mathcal{S}_{i_1}$. We assume that $g(x)=\sum_{t=1}^{\eta} r_{t} x^{i_{1}+n p^{s-k-1} u_{t}}$, where $0=u_{1}<\ldots<u_{\eta}$. Thus
$$\begin{array}{rl}
c(x)=&(x^{n p^{s-k-1}}-\alpha_{0}^{p^{s-k-1}})^{p^{k+1}- p+ \theta} g(x) \\
=& \left[ (x^{n p^{s-k-1}}-\alpha_{0}^{p^{s-k-1}})^{p^{k+1}- p+ \theta} \sum_{t=1}^{\eta} r_{t} x^{n p^{s-k-1} u_{t}} \right] x^{i_1}.
\end{array}$$
It follows that $\operatorname{cw}(c(x)) \geq n p^{s-k-1} \geq np \geq 4$.  Hence
$$\operatorname{wt_{sp}}(c(x))=2 \cdot \operatorname{wt_H}(c(x)) \geq 2 \cdot \operatorname{d_H}(\mathcal{C}_{p^s- p^{s-k}+ \theta p^{s-k-1} +1}) \geq 2(\theta+2) p^k .$$

Case 2: When $\mathcal{T} \not\subset \mathcal{S}_{i_1}$.  We only show that when $\mathcal{T} \subset \mathcal{S}_{i_1}\cup\mathcal{S}_{i_2}$ with $ i_1  \not\equiv i_2 \; (\bmod\; n p^{s-k-1})$, the rest is similar. Let $g(x)=g_1(x) + g_2(x)$,  $g_1(x)=\sum_{t=1}^{\eta_1} r_t^{(1)} x^{i_1 + n p^{s-k-1} u_t}$ and $g_2(x)=\sum_{t=1}^{\eta_2} r_t^{(2)} x^{i_2 + n p^{s-k-1} v_t}$, where $0=u_1 < \ldots < u_{\eta_1}$ and $0=v_1 < \ldots < v_{\eta_2}$. Then
$$\begin{array}{rl}
c(x)=&(x^{n p^{s-k-1}}-\alpha_{0}^{p^{s-k-1}})^{p^{k+1}- p+ \theta} \left( g_1(x) + g_2(x) \right) \\
=& \left[ (x^{n p^{s-k-1}}-\alpha_{0}^{p^{s-k-1}})^{p^{k+1}- p+ \theta}\sum_{t=1}^{\eta_1} r_t^{(1)} x^{ n p^{s-k-1} u_t} \right] x^{i_1} \\
& + \left[ (x^{n p^{s-k-1}}-\alpha_{0}^{p^{s-k-1}})^{p^{k+1}- p+ \theta}\sum_{t=1}^{\eta_2} r_t^{(2)} x^{n p^{s-k-1} v_t} \right] x^{i_2} .\\
\end{array}$$
Let $S_H$ be a set of the exponents of nonzero terms of $c(x)$. Then
$$\begin{array}{rl}
S_H=& \{ i_1+np^{s-k-1} w_j \mid w_j^{(1)} \in \mathbb{N}, 1 \leq j \leq l_1 \}\\
& \cup \{ i_2+np^{s-k-1} w_j \mid w_j^{(2)} \in \mathbb{N}, 1 \leq j \leq l_2 \},
\end{array}$$
where $l_t=\operatorname{wt_H}(\left(x^{n}-\alpha_{0}\right)^{p^s- p^{s-k}+ \theta p^{s-k-1}} g_t(x))$ for $t=1,2$. By Theorem \ref{th_H_F}, $\operatorname{d_H}(\mathcal{C}_{p^s- p^{s-k}+ \theta p^{s-k-1}}) \geq (\theta+1) p^k$, and hence $l_1,l_2 \geq (\theta+1) p^k$. Since $np^{s-k-1} \geq 4$, $S_H$ is at least partitioned into $(\theta+1) p^k$ subsets of consecutive indices. By Theorem \ref{th_H_S},
$$\operatorname{wt_{sp}}(c(x)) \geq 3 (\theta+1) p^k \geq 2(\theta+2) p^k.$$
Therefore, we have proved that $\operatorname{wt_{sp}}(c(x)) \geq 2(\theta+2) p^k$ holds in all cases, that is, $\operatorname{d_{sp}}(\mathcal{C}_{p^s- p^{s-k}+ \theta p^{s-k-1} +1}) \geq 2(\theta +2)p^k$.

\end{proof}

The following lemma is about the case of $k=s-1$ and $0 \leq \theta \leq p-2$.
\begin{lemma}\label{lem3.4}
Let $n,\theta$ be integers such that $n \geq 2$ and $1 \leq \theta \leq p-1$. Then $\operatorname{d_{sp}}(\mathcal{C}_{p^s-p+\theta}) \geq 2(\theta + 1)p^{s-1}$.
\end{lemma}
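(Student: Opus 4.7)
The plan is to mimic the reductions used in the proofs of Lemmas \ref{lem3.2} and \ref{lem3.3}, but to exploit the identity $(x^n-\alpha_0)^{p^s-p}=(x^{np}-\alpha_0^p)^{p^{s-1}-1}$, which is the ``right'' factorization in the regime $k=s-1$. Any nonzero codeword $c(x)\in\mathcal{C}_{p^s-p+\theta}$ can be written as $c(x)=(x^{np}-\alpha_0^p)^{p^{s-1}-1}\tilde g(x)$, where $\tilde g(x)=(x^n-\alpha_0)^\theta f(x)$ with $\deg f<n(p-\theta)$, so $\deg\tilde g\le np-1$. By Lucas' theorem every coefficient of $(y-\alpha_0^p)^{p^{s-1}-1}$ is nonzero; hence the outer factor has Hamming weight $p^{s-1}$ and coefficient weight $np$.

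The next step is to decompose $\tilde g$ by residue classes modulo $n$. Writing $f(x)=\sum_{i=0}^{n-1}f_i(x^n)x^i$, one obtains $\tilde g(x)=\sum_{i=0}^{n-1}g_i(x^n)x^i$ with $g_i(y)=(y-\alpha_0)^\theta f_i(y)$, each of degree at most $p-1$. Thus each nonzero $g_i$ is a codeword of the ideal $\langle(y-\alpha_0)^\theta\rangle$ inside $\mathbb{F}_{p^m}[y]/\langle y^p-\alpha_0^p\rangle$, and applying Theorem \ref{th_H_F} with length $p$ yields $\operatorname{wt_H}(g_i)\ge\theta+1$.

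I would then split on $\deg\tilde g$. When $\deg\tilde g\le np-2$, equation (\ref{equ2.2}) applies to the factorization $c=(x^{np}-\alpha_0^p)^{p^{s-1}-1}\tilde g$ and gives $\operatorname{wt_{sp}}(c)=p^{s-1}\operatorname{wt_{sp}}(\tilde g)$. Let $r$ be the number of indices $i$ with $g_i\ne 0$. If $r\ge 2$, then $\operatorname{wt_H}(\tilde g)=\sum_i\operatorname{wt_H}(g_i)\ge 2(\theta+1)$, hence $\operatorname{wt_{sp}}(\tilde g)\ge 2(\theta+1)$. If $r=1$, all nonzero positions of $\tilde g$ sit in a single residue class and are therefore spaced by multiples of $n\ge 2$, i.e.\ pairwise isolated; this gives $\operatorname{wt_{sp}}(\tilde g)=2\operatorname{wt_H}(\tilde g)\ge 2(\theta+1)$. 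Either way, $\operatorname{wt_{sp}}(c)\ge 2(\theta+1)p^{s-1}$.

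The main obstacle is the boundary case $\deg\tilde g=np-1$, where (\ref{equ2.2}) no longer applies; I would subdivide on $\tilde g(0)$. If $\tilde g(0)=0$, then $f(0)=0$, and factoring out the largest power of $x$ dividing $\tilde g$ produces $\tilde g=x^m\tilde g'$ with $\deg\tilde g'\le np-2$ while $c=x^m\cdot(x^{np}-\alpha_0^p)^{p^{s-1}-1}\tilde g'$; invariance of $\operatorname{wt_{sp}}$ under $\alpha$-constacyclic shifts reduces this subcase to the previous one. If instead $\tilde g(0)\ne 0$, then $\tilde g_0\ne 0$ (residue class $0$) and $\tilde g_{np-1}\ne 0$ (residue class $n-1$); because $n\ge 2$ these two classes are distinct, so $r\ge 2$, and therefore $\operatorname{wt_H}(c)=p^{s-1}\operatorname{wt_H}(\tilde g)\ge 2(\theta+1)p^{s-1}$ immediately. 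The delicate point is precisely this last step: the coefficient-weight tool is unavailable at $\deg\tilde g=np-1$, and what rescues the argument is that the simultaneous non-vanishing of the constant and leading coefficients forces contributions from two distinct residue classes as soon as $n\ge 2$.
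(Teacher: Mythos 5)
Your proof is correct, but it takes a genuinely different route from the paper's. The paper's own argument is shorter and needs no case analysis on degrees: it decomposes the representative $f$ itself into residue classes modulo $n$ and quotes Theorem \ref{th_H_F} at the full length $np^s$. If all exponents of $f$ lie in a single class, then $\operatorname{cw}(c)\geq n\geq 2$ gives $\operatorname{wt_{sp}}(c)=2\operatorname{wt_H}(c)\geq 2\operatorname{d_H}(\mathcal{C}_{p^s-p+\theta})=2(\theta+1)p^{s-1}$; if at least two classes occur, the Hamming weights of the class components add, each component being a nonzero codeword of $\mathcal{C}_{p^s-p+\theta}$ of weight at least $(\theta+1)p^{s-1}$, and $\operatorname{wt_{sp}}\geq\operatorname{wt_H}$ finishes. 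You instead transplant the machinery of Lemma \ref{lem3.2} to the regime $k=s-1$: factor out $(x^{np}-\alpha_0^p)^{p^{s-1}-1}$, reduce the class components $g_i(y)=(y-\alpha_0)^\theta f_i(y)$ to the length-$p$ code $\langle(y-\alpha_0)^\theta\rangle$ with minimum Hamming weight $\theta+1$, and recover the factor $p^{s-1}$ multiplicatively via (\ref{equ2.2}), respectively via the disjointness of the support blocks $\{npj,\ldots,npj+np-1\}$ in the Hamming count. This buys a self-contained derivation making the product structure $2(\theta+1)\cdot p^{s-1}$ explicit, at the price of the boundary bookkeeping at $\deg\tilde g=np-1$, which the paper avoids entirely because quoting the full-length Hamming distance imposes no degree restriction. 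Your boundary handling is sound: since $x\nmid x^n-\alpha_0$, factoring out $x^m$ preserves the form $(x^n-\alpha_0)^\theta f''$ and constacyclic-shift invariance of $\operatorname{wt_{sp}}$ is legitimate; and when $\tilde g(0)\neq 0$, the constant and leading coefficients lie in classes $0$ and $n-1$, distinct precisely because $n\geq 2$, forcing $r\geq 2$.

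One small caveat: when $s=1$ the outer factor is the monomial $1$, so the hypotheses of (\ref{equ2.2}) fail (its coefficient weight is $0$); the identity $\operatorname{wt_{sp}}(c)=p^{s-1}\operatorname{wt_{sp}}(\tilde g)$ is then trivially true, but you should state that degenerate case separately rather than cite (\ref{equ2.2}) for it.
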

\begin{proof}
Let $c(x)$ be any nonzero codeword in $\mathcal{C}_{p^s- p+ \theta}$. Then there is a nonzero element $f(x)$ in $\mathcal{F}$ such that $c(x)=\left(x^n-\alpha_{0}\right)^{p^s- p+ \theta} f(x)$ with $\operatorname{deg}(f) < n(p-\theta)$. Suppose that $\mathcal{T}=\{i_1, \cdots, i_{\eta}\}$ is a set of the exponents of nonzero terms of $f(x)$. For an integer $i$, let $\mathcal{S}_{i}$ be a set of integers congruent to $i$ modulo $n$, $i.e.$, $\mathcal{S}_{i}=\{j \mid j \equiv i \; ( \bmod\; n )  \}$. We consider the set $\mathcal{T}$ in two cases.

Case 1: When $\mathcal{T} \subset \mathcal{S}_{i_1}$. We may assume that $f(x)=\sum_{t=1}^{\eta} r_{t} x^{i_{1}+n u_{t}}$, where $0=u_{1}<\ldots<u_{\eta}$. Then
$$c(x)=\left[\left(x^n-\alpha_{0}\right)^{p^s- p+ \theta} \sum_{t=1}^{\eta} r_{t} x^{n u_{t}}\right] x^{i_1}.$$
It follows that $\operatorname{cw}(c(x)) \geq n \geq 2$, and hence,
$$\operatorname{wt_{sp}}(c(x))=2 \cdot \operatorname{wt_H}(c(x)) \geq 2 \cdot \operatorname{d_H}(\mathcal{C}_{p^s- p+ \theta}) \geq 2(\theta+1) p^{s-1}.$$

Case 2: When $\mathcal{T} \not\subset \mathcal{S}_{i_1}$. We may assume that $\mathcal{T} \subset \mathcal{S}_{i_1}\cup\mathcal{S}_{i_2}$, where $i_1 \not\equiv i_2 \; (\bmod\; n)$. Let $f(x)=f_1(x) + f_2(x)$,  $f_1(x)=\sum_{t=1}^{\eta_1} r_t^{(1)} x^{i_1 + n u_t}$, and $f_2(x)=\sum_{t=1}^{\eta_2} r_t^{(2)} x^{i_2 + n v_t}$, where $0=u_1 < \ldots < u_{\eta_1}$ and $0=v_1 < \ldots < v_{\eta_2} $. Then
$$\begin{array}{rl}
c(x)= & \left[\left(x^n-\alpha_{0}\right)^{p^s- p+ \theta} \sum_{t=1}^{\eta_1} r_{t}^{(1)} x^{n u_{t}}\right] x^{i_1}\\
&+ \left[\left(x^n-\alpha_{0}\right)^{p^s- p+ \theta} \sum_{t=1}^{\eta_2} r_{t}^{(2)} x^{n v_{t}}\right] x^{i_2}.
\end{array}$$
Since $ i_1  \not\equiv i_2 \; (\bmod\; n)$,
$$\begin{array}{rl}
\operatorname{wt_H}(c(x))= & \operatorname{wt_H}\left([\left(x^n-\alpha_{0}\right)^{p^s- p+ \theta} \sum_{t=1}^{\eta_1} r_{t}^{(1)} x^{n u_{t}}] x^{i_1}\right)\\
&+ \operatorname{wt_H}\left([\left(x^n-\alpha_{0}\right)^{p^s- p+ \theta} \sum_{t=1}^{\eta_2} r_{t}^{(2)} x^{n v_{t}}] x^{i_2}\right)\\
\geq & 2 \cdot \operatorname{d_H}(\mathcal{C}_{p^s- p+ \theta})\\
= & 2(\theta+1) p^{s-1},
\end{array}$$
which implies that $\operatorname{wt_{sp}}(c(x)) \geq \operatorname{wt_H}(c(x)) \geq 2(\theta+1)p^{s-1}$. Combining the two cases discussed above, it follows that $\operatorname{d_{sp}}(\mathcal{C}_{p^s-p+\theta}) \geq 2(\theta+1)p^{s-1}$.
\end{proof}

Combining the upper bound given in Lemma \ref{lem3.1} and the lower bounds given in Lemma \ref{prop3.1}, \ref{lem3.2}, \ref{lem3.3}, and \ref{lem3.4}, the symbol-pair distances of $\alpha$-constacyclic codes of length $np^s$ over $\mathbb{F}_{p^m}$ can be completely determined. In order to maintain the integrity of the theorem, we present the symbol-pair distances for both the cases that $n=1$ and $n\geq 2$.
\begin{theorem}\label{th3.1}
Let $\alpha_{0}$ be a nonzero element in $\mathbb{F}_{p^m}$ and $\alpha=\alpha_{0}^{p^s}$. Given an $\alpha$-constacyclic code of length $np^s$ over $\mathbb{F}_{p^m}$. If it has the form as $\mathcal{C}_i=\langle\left(x^n-\alpha_{0}\right)^{i}\rangle \subseteq \mathcal{F}$, for $i \in\left\{0,1, \ldots, p^{s}\right\}$, then the symbol-pair distance $\operatorname{d_{sp}}(\mathcal{C}_i)$ is completely determined by:
\begin{enumerate}[(i)]
\item
(Trivial cases) $\operatorname{d_{sp}}(\mathcal{C}_0)=2$ and $\operatorname{d_{sp}}(\mathcal{C}_{p^s})=0$.
\item
When $n=1$,
$$\operatorname{d_{sp}}(\mathcal{C}_{i})=\left\{\begin{array}{ll}
3 p^k, & \text{ if } i=p^s-p^{s-k}+1 \text{ and }  0 \leq k \leq s-2;\\
4 p^k, & \text{ if } p^{s}-p^{s-k}+2 \leq i \leq p^{s}-p^{s-k}+p^{s-k-1}\\
& \text{ and }  0 \leq k \leq s-2;\\
2(\theta +2 )p^{k}, & \text{ if } p^s - p^{s-k}+ \theta p^{s-k-1}+1 \leq i \leq \\
&\quad p^s - p^{s-k}+ (\theta+1) p^{s-k-1},\\
&\quad 0 \leq k \leq s-2 \text{ and }  1 \leq \theta \leq p-2;\\
(\theta+2)p^{s-1}, & \text{ if } i=p^s-p+\theta \text{ and } 1 \leq \theta \leq p-2;\\
p^s, & \text{ if } i=p^s-1.
\end{array}\right.$$
\item
When $n \geq 2$, \[\operatorname{d_{sp}}(\mathcal{C}_i)=2(\theta +2)p^{k},\]
  where $p^s - p^{s-k}+ \theta p^{s-k-1}+1 \leq i \leq p^s - p^{s-k}+ (\theta+1) p^{s-k-1}$, $0 \leq k \leq s-1$ and $0 \leq \theta \leq p-2$.
\end{enumerate}

\end{theorem}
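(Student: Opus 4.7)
The plan is to treat the three parts of the theorem separately. The trivial cases in (i) are immediate from the definition: $\mathcal{C}_0=\mathcal{F}$ contains the codeword $1$, whose symbol-pair weight is $2$, and $\mathcal{C}_{p^s}=\langle 0\rangle$ has distance $0$. Part (ii), the case $n=1$, is not in our primary scope; I would simply cite the corresponding result of \cite{Dinh2018}. The real content is part (iii), for which I would assemble the four lower bounds (Lemma \ref{prop3.1}, Lemma \ref{lem3.2}, Lemma \ref{lem3.3}, Lemma \ref{lem3.4}) together with the upper bound of Lemma \ref{lem3.1}, and use the monotonicity of $\operatorname{d_{sp}}$ along the chain $\mathcal{C}_0\supsetneq \mathcal{C}_1\supsetneq\cdots\supsetneq\mathcal{C}_{p^s}$.

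The core device is a sandwich argument. Fix $0\le k\le s-1$ and $0\le\theta\le p-2$ and set
\[
a=p^s-p^{s-k}+\theta p^{s-k-1}+1,\qquad b=p^s-p^{s-k}+(\theta+1)p^{s-k-1}.
\]
For every $i$ with $a\le i\le b$, the inclusion $\mathcal{C}_a\supseteq\mathcal{C}_i\supseteq\mathcal{C}_b$ gives $\operatorname{d_{sp}}(\mathcal{C}_a)\le\operatorname{d_{sp}}(\mathcal{C}_i)\le\operatorname{d_{sp}}(\mathcal{C}_b)$. Lemma \ref{lem3.1} furnishes $\operatorname{d_{sp}}(\mathcal{C}_b)\le 2(\theta+2)p^k$. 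It therefore suffices to prove the matching lower bound $\operatorname{d_{sp}}(\mathcal{C}_a)\ge 2(\theta+2)p^k$, after which every $\mathcal{C}_i$ in the window has symbol-pair distance exactly $2(\theta+2)p^k$.

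The lower bounds are already in hand, but they have to be assembled for the correct window. I split the pairs $(k,\theta)$ into the four patches used earlier: (a) $k=\theta=0$, where $a=1$ and Lemma \ref{prop3.1} gives $\operatorname{d_{sp}}(\mathcal{C}_1)\ge 4=2(0+2)p^0$; (b) $\theta=0$ and $1\le k\le s-2$, where $a=p^s-p^{s-k}+1$ and Lemma \ref{lem3.2} yields $\ge 4p^k=2(0+2)p^k$; (c) $1\le\theta\le p-2$ and $0\le k\le s-2$, where Lemma \ref{lem3.3} gives exactly $2(\theta+2)p^k$; (d) $k=s-1$ and $0\le\theta\le p-2$, where $a=p^s-p+(\theta+1)$, so applying Lemma \ref{lem3.4} with the parameter $\theta+1$ (allowed since $1\le\theta+1\le p-1$) yields $\operatorname{d_{sp}}(\mathcal{C}_a)\ge 2((\theta+1)+1)p^{s-1}=2(\theta+2)p^{s-1}$. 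In every patch the lower and upper bounds coincide, proving (iii).

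The only subtlety is bookkeeping: one has to verify that the four subcases above really cover $\{1,\ldots,p^s-1\}$ without overlap, which follows from the disjoint union already displayed before Lemma \ref{lem3.1}, and that Lemma \ref{lem3.4} is being invoked with the parameter $\theta+1$ rather than $\theta$ in the $k=s-1$ patch (an off-by-one that arises because the lower-bound lemma is stated at $p^s-p+\theta$ rather than at $a=p^s-p+(\theta+1)$). Everything else is routine substitution; there is no extra estimate to produce beyond what the preceding lemmas already supply.
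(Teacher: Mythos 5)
Your proposal is correct and follows essentially the same route as the paper: the paper sets up exactly this sandwich argument (lower bound $L$ at the left endpoint, upper bound $U$ at the right endpoint, monotonicity from $\mathcal{C}_i\supseteq\mathcal{C}_j$ for $i\le j$) before Lemma \ref{lem3.1}, and then states the theorem by combining Lemma \ref{lem3.1} with Lemmas \ref{prop3.1}, \ref{lem3.2}, \ref{lem3.3}, \ref{lem3.4}, citing \cite{Dinh2018} for $n=1$. Your explicit handling of the $k=s-1$ patch via the parameter shift $\theta\mapsto\theta+1$ in Lemma \ref{lem3.4} is exactly the right reading of that lemma (note the window there degenerates to the single index $p^s-p+\theta+1$), so nothing is missing.
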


\subsection{MDS codes}

In Table \ref{tab1}, we summarize the MDS symbol-pair codes given in the previous literature. In this subsection , we use the result of symbol-pair distances obtained in the former subsection to prove that when $x^n-\alpha_0$ is irreducible over $\mathbb{F}_{p^m}$, there are no other MDS symbol-pair codes except for these in Table~\ref{tab1}.

\begin{theorem}\label{th3.2}
Let $\alpha_{0}$ be a nonzero element of $\mathbb{F}_{p^m}$ and $\alpha=\alpha_{0}^{p^s}$. When $x^n-\alpha_0$ is irreducible over $\mathbb{F}_{p^m}$, there are no other nontrivial MDS symbol-pair $\alpha$-constacyclic codes of length $n p^s$ over $\mathbb{F}_{p^m}$ except the MDS codes shown in Table \ref{tab1}.
\end{theorem}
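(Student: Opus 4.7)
My plan is to translate the MDS criterion into the identity $\operatorname{d_{sp}}(\mathcal{C}_i) = ni + 2$, which follows from the length $np^s$, the dimension $n(p^s - i)$, and the Singleton bound \eqref{equa1}. The problem then becomes: find all admissible $(n, s, p, i)$ with $x^n - \alpha_0$ irreducible satisfying this identity. For $n = 1$, inspection of Theorem \ref{th3.1}(ii) against $\operatorname{d_{sp}}(\mathcal{C}_i) = i + 2$ recovers precisely entries 1--5 of Table \ref{tab1}, as already carried out in \cite{Dinh2018}. The substantive work is the case $n \geq 2$.

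For $n \geq 2$, I would apply Theorem \ref{th3.1}(iii): for $i$ in the interval $I_{k,\theta} := [p^s - p^{s-k} + \theta p^{s-k-1} + 1,\; p^s - p^{s-k} + (\theta + 1) p^{s-k-1}]$ with $0 \leq k \leq s - 1$ and $0 \leq \theta \leq p - 2$, one has $\operatorname{d_{sp}}(\mathcal{C}_i) = 2(\theta + 2)p^k$. The MDS equation thus becomes $ni = 2(\theta + 2)p^k - 2$. Since $n \geq 2$, this forces $i \leq (\theta + 2)p^k - 1$; combined with the lower endpoint of $I_{k,\theta}$ it yields the necessary inequality
$$ p^s - p^{s-k} + \theta p^{s-k-1} + 1 \;\leq\; (\theta + 2)p^k - 1. \qquad (\ast) $$

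I would then split $(\ast)$ on $k$. When $k = 0$, $(\ast)$ reduces to $\theta p^{s-1} \leq \theta$, forcing either $\theta = 0$ (any $s$, giving $i = 1$ and hence $n = 2$; entry 6) or $s = 1$ (any $\theta$, giving $i = \theta + 1$ and hence $n = 2$; entries 7 and 8). When $k = s - 1$ with $s \geq 2$, $(\ast)$ rewrites as $(p^{s-1} - 1)(\theta + 2 - p) \geq 0$; since $p^{s-1} > 1$, this forces $\theta = p - 2$, so $i = p^s - 1$ and $n = 2$ (entry 8).

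The hard part is the intermediate range $1 \leq k \leq s - 2$ (which requires $s \geq 3$), which has to be excluded entirely. Here my plan is: in $(\ast)$, drop the nonnegative term $\theta p^{s-k-1}$ on the left and bound $(\theta + 2)p^k \leq p \cdot p^k = p^{k+1}$ on the right to deduce $p^s - p^{s-k} \leq p^{k+1} - 2$, i.e.\ $p^{s-k}(p^k - 1) \leq p^{k+1} - 2$. Since $s - k \geq 2$ gives $p^{s-k} \geq p^2$, this chains to $p^{k+2} - p^2 \leq p^{k+1} - 2$, equivalently $p^{k+1}(p - 1) \leq p^2 - 2$, which is false for every $p \geq 2$ and $k \geq 1$. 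Assembling the three subcases shows that entries 6--8 of Table \ref{tab1} exhaust the MDS codes for $n \geq 2$, completing the theorem.
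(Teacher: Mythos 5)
Your proposal is correct, and while it starts from the same MDS criterion as the paper, it executes the exclusion step along a genuinely different route. Both arguments begin identically: $|\mathcal{C}_i| = p^{mn(p^s-i)}$ together with the Singleton bound \eqref{equa1} gives the criterion $ni = \operatorname{d_{sp}}(\mathcal{C}_i) - 2$, and both then feed in Theorem \ref{th3.1}. The paper, however, disposes of $n=1$ \emph{and} $n=2$ by citation to \cite{Dinh2018,Dinh2019a}, and for $n \geq 3$ avoids all casework with a single algebraic regrouping: writing $i = p^s - p^{s-k} + \theta p^{s-k-1} + \gamma$ with $1 \leq \gamma \leq p^{s-k-1}$, it rewrites $ni - \operatorname{d_{sp}}(\mathcal{C}_i) + 2 = [np^{s-k} - 2(\theta+2)](p^k-1) + (np^{s-k-1}-2)\theta + n\gamma - 2 \geq n - 2 > 0$, each grouped term being nonnegative precisely because $n \geq 3$ (e.g.\ $np^{s-k} \geq 3p \geq 2(\theta+2)$). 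You instead work at the boundary $n \geq 2$, where that expression can vanish (it does, e.g., at $n=2$, $k=\theta=0$, $\gamma=1$), which is exactly why your argument must split into cases on $k$; the payoff is that your $k=0$ and $k=s-1$ cases re-derive the complete $n=2$ classification (entries 6--8 of Table \ref{tab1}) directly from Theorem \ref{th3.1}(iii) rather than citing \cite{Dinh2019a} for its completeness, so your version is self-contained except for $n=1$. I checked your individual steps and they hold: $(\ast)$ follows from $i \leq (\theta+2)p^k - 1$ (a consequence of $n\geq 2$ and the MDS equation) against the lower endpoint of $I_{k,\theta}$; the reduction of $(\ast)$ to $\theta(p^{s-1}-1) \leq 0$ at $k=0$ and to $(p^{s-1}-1)(\theta+2-p) \geq 0$ at $k=s-1$ is right; and the chain ending in $p^{k+1}(p-1) \leq p^2-2$ correctly eliminates $1 \leq k \leq s-2$. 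Two points worth making explicit in a final write-up: in the subcases $s=1$ and $(k,\theta)=(s-1,\,p-2)$ the interval $I_{k,\theta}$ degenerates to a single point because $p^{s-k-1}=1$, which is what pins down $i$ before the MDS equation forces $n=2$; and $i=0$ also satisfies the criterion (since $\operatorname{d_{sp}}(\mathcal{C}_0)=2$) but is the trivial code excluded by the word ``nontrivial'' in the statement.
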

\begin{proof}
When $x^n-\alpha_0$ is irreducible over $\mathbb{F}_{p^m}$, the $\alpha$-constacyclic codes of length $n p^s$ over $\mathbb{F}_{p^m}$ are $\mathcal{C}_i=\langle\left(x^{n}-\alpha_{0}\right)^{i}\rangle$, where $0 \leq i \leq p^s$. Note that $| \mathcal{C}_i |=| \langle (x^n-\alpha_0)^i \rangle |= p^{m(np^s-ni)}$ and the Singleton Bounds for symbol-pair constacyclic codes force $| \mathcal{C}_i | \leq p^{m(np^s-\operatorname{d_{sp}}(\mathcal{C}_i)+2)}$, $i.e.$, $ni \geq \operatorname{d_{sp}}(\mathcal{C}_i)-2$ for $i \in \{0,1,\ldots,p^s-1\}$. Therefore, $\mathcal{C}_i$ is an MDS symbol-pair code if and only if $ni-\operatorname{d_{sp}}(\mathcal{C}_i)+2=0$. If $n$ is equal to 1 or 2, all the MDS codes have been constructed by Dinh et al.\cite{Dinh2018,Dinh2019a} and listed in Table \ref{tab1}. If $n \geq 3$, let $i=p^s - p^{s-k}+ \theta p^{s-k-1}+\gamma$ with $0 \leq k \leq s-1$, $0 \leq \theta \leq p-2$, and $1 \leq \gamma \leq p^{s-k-1}$. By Theorem \ref{th3.1}, we have $\operatorname{d_{sp}}(\mathcal{C}_i)=2(\theta +2 )p^{k}$, hence
$$\begin{aligned}
&ni-\operatorname{d_{sp}}(\mathcal{C}_{i})+2\\
=&n(p^s-p^{s-k}+\theta p^{s-k-1}+\gamma)-2(\theta+2)p^k+2\\
=&[np^{s-k}-2(\theta+2)](p^k-1)+(np^{s-k-1}-2)\theta + n\gamma-2\\
\geq& n-2 >0.
\end{aligned}$$
Therefore, there is no other MDS symbol-pair $\alpha$-constacyclic code.
\end{proof}

\section{MDS Symbol-pair codes over \texorpdfstring{$\mathbb{F}_{p^m} + u \mathbb{F}_{p^m}$}{TEXT}}\label{sec4}

\subsection{The pair distances of constacyclic codes over $\mathbb{F}_{p^m} + u \mathbb{F}_{p^m}$}

Let $\alpha_{0},\beta\in\mathbb{F}_{p^m}$ and $\alpha_{0}\neq 0$. Denote $\alpha$ as $\alpha=\alpha_{0}^{p^s}$. In this section, we characterize the relationship between the symbol-pair distances of $\alpha$-constacyclic codes of length $n p^s$ over $\mathbb{F}_{p^m}$ and the symbol-pair distances of $(\alpha+u\beta)$-constacyclic codes of length $n p^s$ over $\mathbb{F}_{p^m} + u \mathbb{F}_{p^m}$, where $n$ is a positive integer coprime to $p$ and $x^n-\alpha_{0}$ is irreducible over $\mathbb{F}_{p^{m}}$.
We analyze the symbol-pair distances in the cases that $\beta\neq 0$ and $\beta = 0$. When $\beta\neq 0$, $\mathcal{R}=(\mathbb{F}_{p^m}+u\mathbb{F}_{p^m})[x] /\left\langle x^{n p^{s}}-\alpha-u \beta\right\rangle$ is a chain ring and all the ideals of $\mathcal{R}$ are $\mathcal{D}_{i}=\langle (x^n-\alpha_0)^i\rangle$, where $0\leq i\leq 2p^s$.
\begin{theorem}\label{th4.1}
Let $\alpha_{0}$ be a nonzero element in $\mathbb{F}_{p^m}$ satisfying that $x^n-\alpha_{0}$ is irreducible over $\mathbb{F}_{p^m}$. Denote $\alpha=\alpha_{0}^{p^s}$. Let $\beta$ be a nonzero element in $\mathbb{F}_{p^m}$. The symbol-pair distance of $\mathcal{D}_{i}=\langle (x^n-\alpha_0)^i\rangle$ is
$$\operatorname{d_{sp}}(\mathcal{D}_i)=\left\{\begin{array}{ll}
2, & \text{if } 0\leq i\leq p^s;\\
\operatorname{d_{sp}}(\langle (x^n-\alpha_{0})^{i-p^{s}}\rangle_{F}), & \text{if } p^s+1 \leq i\leq 2p^s. \\
\end{array}\right.$$
\end{theorem}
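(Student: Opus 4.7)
My plan is to split the argument at $i = p^s$, guided by the chain structure of $\mathcal{R}$ in Lemma~\ref{th_R_1} and the pivotal identity
\[
(x^n - \alpha_0)^{p^s} \;=\; x^{n p^s} - \alpha \;=\; u\beta \qquad \text{in } \mathcal{R},
\]
which holds because $(x^n - \alpha_0)^{p^s} = x^{np^s} - \alpha_0^{p^s}$ in characteristic $p$ and $\alpha = \alpha_0^{p^s}$. Since $\beta$ is a unit, this yields $\langle (x^n-\alpha_0)^{p^s}\rangle = \langle u\rangle$, and since $u^2 = 0$, $\langle (x^n-\alpha_0)^{2p^s}\rangle = 0$, matching the chain.

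For the first range $0 \leq i \leq p^s$, the inclusion $\mathcal{D}_i \supseteq \mathcal{D}_{p^s} = \langle u\rangle$ places the codeword $u$ inside $\mathcal{D}_i$. Its vector form $(u,0,\dots,0)$ has symbol-pair weight exactly $2$, since only the pairs $(x_{np^s-1},x_0)$ and $(x_0,x_1)$ are nonzero. Hence $\operatorname{d_{sp}}(\mathcal{D}_i) \leq 2$, matched below by the universal bound $\operatorname{d_{sp}} \geq 2$ valid for any nontrivial code of length at least $2$ (a single nonzero symbol always contaminates two consecutive pairs).

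For the second range $p^s+1 \leq i \leq 2p^s$, set $j = i - p^s \in \{1,\dots,p^s\}$ and let $\mathcal{C}_j := \langle (x^n-\alpha_0)^j\rangle \subseteq \mathcal{F}$ be the $\alpha$-constacyclic code of Section~\ref{sec3}. I will show $\operatorname{d_{sp}}(\mathcal{D}_i) = \operatorname{d_{sp}}(\mathcal{C}_j)$, and then identify $\mathcal{C}_j$ with the subfield subcode $\langle (x^n-\alpha_0)^j\rangle_F$. Using the pivotal identity, each generator $(x^n - \alpha_0)^i f$ equals $u\beta (x^n - \alpha_0)^j f$ in $\mathcal{R}$. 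Because $u^2 = 0$, the product $u \cdot f$ depends only on $f \bmod u \in \mathcal{F}$, and because $\beta$ is a unit, every element of $\mathcal{D}_i$ has the form $u \cdot h(x)$ with $h$ uniquely determined in $\mathcal{C}_j$. The assignment $u \cdot h \leftrightarrow h$ is a bijection preserving supports coordinatewise, because in $\mathbb{F}_{p^m} + u\mathbb{F}_{p^m}$ one has $u h_k \neq 0$ if and only if $h_k \neq 0$. Hence $\operatorname{d_{sp}}(\mathcal{D}_i) = \operatorname{d_{sp}}(\mathcal{C}_j)$. To match the stated form, I apply the reduction map $\pi\colon \mathcal{R} \to \mathcal{R}/\langle u\rangle \cong \mathcal{F}$: it sends every $\mathbb{F}_{p^m}$-coefficient codeword of $\mathcal{D}_j$ to its same-coefficient image in $\mathcal{C}_j$, is injective on $\mathcal{D}_j|_F$ (an element of $\langle u\rangle$ with $\mathbb{F}_{p^m}$-coefficients must vanish), surjective onto $\mathcal{C}_j$ (any $c \in \mathcal{C}_j$ lifts to a degree-$<np^s$ polynomial over $\mathbb{F}_{p^m}$ that sits inside $\mathcal{D}_j$, since the defect modulo $x^{np^s} - \alpha$ is absorbed by $(x^n-\alpha_0)^{p^s}=u\beta$), and pair-weight preserving since coefficients are unchanged.

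The main obstacle lies in the second case: in $\mathcal{R}$ one has $x^{np^s} = \alpha + u\beta$, so an $\mathbb{F}_{p^m}$-polynomial of degree $\geq np^s$ viewed inside $\mathcal{R}$ acquires spurious $u$-components upon reduction. The argument must therefore handle this reduction carefully to ensure that the passage from a codeword of $\mathcal{D}_i$ to a unique $h \in \mathcal{C}_j$ is well defined; the relation $(x^n-\alpha_0)^{p^s}=u\beta$ is exactly what makes the bookkeeping work.
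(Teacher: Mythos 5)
Your proof is correct and takes essentially the same route as the paper's: the identity $(x^n-\alpha_0)^{p^s}=u\beta$ supplies the symbol-pair-weight-$2$ codeword $u\in\mathcal{D}_i$ for $0\leq i\leq p^s$, and for $p^s+1\leq i\leq 2p^s$ the identification of $\mathcal{D}_i$ with $u$ times the code $\langle (x^n-\alpha_0)^{i-p^s}\rangle$ over $\mathbb{F}_{p^m}$, via the support-preserving correspondence $uh\leftrightarrow h$, gives $\operatorname{d_{sp}}(\mathcal{D}_i)=\operatorname{d_{sp}}(\langle (x^n-\alpha_0)^{i-p^s}\rangle_{F})$. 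Your extra bookkeeping identifying $\mathcal{C}_{i-p^s}$ with the subfield subcode (absorbing the reduction defect through $(x^n-\alpha_0)^{p^s}=u\beta$) merely makes explicit what the paper's terser proof leaves implicit.
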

\begin{proof}
When $0 \leq i \leq p^{s}$, we have $u(x)=\left(x^{n}-\alpha_{0}\right)^{p^s} \in\mathcal{D}_i$ and the symbol-pair weight of $u(x)$ is $2$. Combining with $\operatorname{d_{sp}}(D_{i})\geq 2$, we have the symbol-pair distance of $\mathcal{D}_i$ is $2$.

When $p^{s}+1 \leq i \leq 2 p^{s}$, we have $$\langle(x^{n}-\alpha_{0})^{i}\rangle=\langle u(x^{n}-\alpha_{0})^{i-p^{s}}\rangle,$$
which means that the codewords in the code $\langle(x^{n}-\alpha_{0})^{i}\rangle$ over $\mathbb{F}_{p^m} + u \mathbb{F}_{p^m}$ are precisely the codewords in the code $\langle(x^{n}-\alpha_{0})^{i-p^{s}}\rangle$ over $\mathbb{F}_{p^m}$ multiplied with $u$. Therefore, the symbol-pair distance of $\mathcal{D}_{i}$ is equal to that of $\langle (x^n-\alpha_{0})^{i-p^{s}}\rangle_{F}$.
\end{proof}

The symbol-pair distance of $\mathcal{D}_{i}$ is more complicated when $\beta=0$, and we analysis it in three cases according to the three types of $\alpha$-constacyclic codes shown in Theorem \ref{th_R_0}.

\begin{theorem}\label{th4.2}
Let $\mathcal{D}$ be an $\alpha$-constacyclic code of length $n p^{s}$ over $\mathbb{F}_{p^m} + u \mathbb{F}_{p^m}$ with type I in Theorem \ref{th_R_0}, $i.e.$, $\mathcal{D}=\langle \left(x^n-\alpha_{0}\right)^{k}\rangle$ for $0 \leq k \leq p^{s}$. Then $\operatorname{d_{sp}}(\mathcal{D})=\operatorname{d_{sp}}(\langle\left(x^n-\alpha_{0}\right)^{k}\rangle_{F})$.
\end{theorem}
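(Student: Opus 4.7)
The plan is to prove the identity by establishing a two-sided inequality. For the easy direction, observe that every codeword of the subfield subcode $\mathcal{D}|_F$ is, by definition, a codeword of $\mathcal{D}$ as well (with zero $u$-part), so $\operatorname{d_{sp}}(\mathcal{D}) \le \operatorname{d_{sp}}(\mathcal{D}|_F)$. The substantive direction is the reverse inequality, for which I would take an arbitrary nonzero codeword $c(x)\in\mathcal{D}$ and show $\operatorname{wt_{sp}}(c(x)) \ge \operatorname{d_{sp}}(\mathcal{D}|_F)$.

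First I would decompose $c(x) = a(x) + u b(x)$ with $a(x), b(x) \in \mathbb{F}_{p^m}[x]$. Since $c(x)$ lies in the principal ideal generated by $(x^n-\alpha_0)^k$, I can write $c(x) = (x^n-\alpha_0)^k \bigl(h_1(x) + u h_2(x)\bigr)$ with $h_1, h_2 \in \mathcal{F}$. Matching the $\mathbb{F}_{p^m}$-part and $u$-part gives $a(x) = (x^n-\alpha_0)^k h_1(x)$ and $b(x) = (x^n-\alpha_0)^k h_2(x)$ in $\mathcal{F}$, so that both $a(x)$ and $b(x)$ are codewords of the $\mathbb{F}_{p^m}$-code $\mathcal{C}_k = \langle (x^n-\alpha_0)^k\rangle \subseteq \mathcal{F}$, which is exactly $\mathcal{D}|_F$.

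Next I would invoke the coordinate-wise observation recorded at the end of Subsection~\ref{sub 2}: a position $c_i = a_i + u b_i$ is nonzero precisely when $a_i\neq 0$ or $b_i\neq 0$, and similarly for the pair $(c_i,c_{i+1})$. This yields
\[
\operatorname{wt_{sp}}(c(x)) \;\ge\; \max\{\operatorname{wt_{sp}}(a(x)),\,\operatorname{wt_{sp}}(b(x))\}.
\]
Since $c(x)\neq 0$, at least one of $a(x)$ or $b(x)$ is nonzero. If $a(x)\neq 0$, then $a(x)$ is a nonzero element of $\mathcal{D}|_F$ and $\operatorname{wt_{sp}}(c(x))\ge \operatorname{wt_{sp}}(a(x))\ge \operatorname{d_{sp}}(\mathcal{D}|_F)$. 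If $a(x)=0$, then $c(x) = ub(x)$ with $b(x)\neq 0$, and because multiplying by the unit-like element $u$ only scales entries and does not alter the support of coordinates or of consecutive pairs, $\operatorname{wt_{sp}}(ub(x)) = \operatorname{wt_{sp}}(b(x))\ge \operatorname{d_{sp}}(\mathcal{D}|_F)$. Either way the bound follows, completing the proof.

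The argument is essentially bookkeeping; the only subtlety I expect is the representation step in the second paragraph, namely that both $a(x)$ and $b(x)$ must individually be divisible by $(x^n-\alpha_0)^k$ in $\mathcal{F}$. This requires noting that the ambient ring $(\mathbb{F}_{p^m}+u\mathbb{F}_{p^m})[x]/\langle x^{np^s}-\alpha\rangle$ splits as $\mathcal{F}\oplus u\mathcal{F}$ as $\mathbb{F}_{p^m}$-modules and that $(x^n-\alpha_0)^k$ has no $u$-part, so divisibility passes cleanly to each component. Once that is in place the rest is immediate.
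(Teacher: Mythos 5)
Your proposal is correct and follows essentially the same route as the paper: your lower bound is precisely the paper's decomposition $c(x)=f_0(x)\left(x^n-\alpha_{0}\right)^{k}+u\,f_u(x)\left(x^n-\alpha_{0}\right)^{k}$ together with the inequality $\operatorname{wt_{sp}}(c(x))\geq\max\{\operatorname{wt_{sp}}(a(x)),\operatorname{wt_{sp}}(b(x))\}$ recorded at the end of Subsection~\ref{sub 2}, and your explicit case split on $a(x)=0$ only spells out what the paper leaves implicit. The single cosmetic difference is the upper bound, where you embed $\mathcal{C}_k$ in $\mathcal{D}$ with zero $u$-part while the paper uses the subcode $\langle u\left(x^n-\alpha_{0}\right)^{k}\rangle\subseteq\mathcal{D}$; the two observations are interchangeable.
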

\begin{proof}
Notice that $\mathcal{D} \supseteq \langle u \left(x^n-\alpha_{0}\right)^{k}\rangle$, and hence
\begin{equation}\label{th4.2a}
\operatorname{d_{sp}}(\mathcal{D}) \leq \operatorname{d_{sp}}(\langle u \left(x^n-\alpha_{0}\right)^{k}\rangle) = \operatorname{d_{sp}}(\langle\left(x^n-\alpha_{0}\right)^{k}\rangle_{F}).
\end{equation}

Next, for any nonzero codeword $c(x)$ in $\mathcal{D}$, there are $f_0(x),f_u(x)$ in $\mathbb{F}_{p^m}[x]$ such that
$$\begin{array}{rl}
c(x)& =[f_0(x) + uf_u(x)] \left(x^n-\alpha_{0}\right)^{k}\\
&=f_0(x) \left(x^n-\alpha_{0}\right)^{k} + u f_u(x) \left(x^n-\alpha_{0}\right)^{k} .
\end{array}$$
It follows that
\begin{align}
\operatorname{wt_{sp}}(c(x))& \geq \max\{ \operatorname{wt_{sp}}(f_0(x) \left(x^n-\alpha_{0}\right)^{k}), \operatorname{wt_{sp}}(f_u(x) \left(x^n-\alpha_{0}\right)^{k}) \} \notag\\
& \geq \operatorname{d_{sp}}(\langle\left(x^n-\alpha_{0}\right)^{k}\rangle_{F}). \label{th4.2b}
\end{align}
Combining \eqref{th4.2a} and \eqref{th4.2b}, we have
$$\operatorname{d_{sp}}(\mathcal{D})=\operatorname{d_{sp}}(\langle\left(x^n-\alpha_{0}\right)^{k}\rangle_{F}).$$

\end{proof}

The following theorem shows the symbol-pair distances of the constacyclic codes corresponding to the second type.

\begin{theorem}\label{th4.3}
Let $\mathcal{D}$ be an $\alpha$-constacyclic code of length $n p^{s}$ over $\mathbb{F}_{p^m} + u \mathbb{F}_{p^m}$ with type II in Theorem \ref{th_R_0}, $i.e.$, $\mathcal{D}=\langle(x^{n}-\alpha_{0})^{j} b(x)+u(x^{n}-\alpha_{0})^{k}\rangle$, where $0 \leq k \leq p^s-1$, $\left\lceil\frac{p^{s}+k}{2}\right\rceil \leq j \leq p^s-1$ and either $b(x)$ is 0 or $b(x)$ is a unit in $\mathcal{F}$. Then
$$\operatorname{d_{sp}}(\mathcal{D})=\left\{ \begin{array}{ll}
\operatorname{d_{sp}}(\langle(x^n-\alpha_{0})^{k}\rangle_{F}),& \text{if } b(x)=0; \\
\operatorname{d_{sp}}(\langle(x^n-\alpha_{0})^{p^s-j+k}\rangle_{F}),& \text{if } b(x) \text{ is a unit in } \mathcal{F}. \\
\end{array} \right.$$
\end{theorem}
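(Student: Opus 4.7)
The plan is to handle the two cases separately, with the first being essentially immediate and the second requiring a two-sided bound argument.

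For the case $b(x) = 0$, I would simply note that $\mathcal{D} = \langle u(x^n-\alpha_0)^k \rangle$, so every codeword of $\mathcal{D}$ has the form $u \cdot h(x)(x^n-\alpha_0)^k$ for some $h(x) \in \mathcal{F}$. Since multiplication by $u$ leaves the positions of nonzero coefficients unchanged (as $\mathbb{F}_{p^m} \hookrightarrow \mathbb{F}_{p^m}+u\mathbb{F}_{p^m}$ via $a \mapsto ua$), the symbol-pair distance of $\mathcal{D}$ equals the symbol-pair distance of the subfield subcode $\langle (x^n-\alpha_0)^k \rangle_F$.

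For the case where $b(x)$ is a unit in $\mathcal{F}$, the strategy is to prove two inequalities. For the upper bound, I would multiply the generator $g(x) = (x^n-\alpha_0)^j b(x) + u(x^n-\alpha_0)^k$ by $(x^n-\alpha_0)^{p^s-j}$: since $(x^n-\alpha_0)^{p^s} = 0$ in $\mathcal{F}$, the first term vanishes and I obtain $u(x^n-\alpha_0)^{p^s-j+k} \in \mathcal{D}$. Hence $\mathcal{D} \supseteq \langle u(x^n-\alpha_0)^{p^s-j+k}\rangle$, which, by the same reasoning as Case~1, gives $\operatorname{d_{sp}}(\mathcal{D}) \leq \operatorname{d_{sp}}(\langle (x^n-\alpha_0)^{p^s-j+k}\rangle_F)$.

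For the lower bound I would take an arbitrary nonzero $c(x) \in \mathcal{D}$, write $c(x) = [f_0(x)+uf_u(x)]\,g(x)$, and use $u^2=0$ to expand as $c(x) = a(x) + u\,b'(x)$, where
\[
a(x) = f_0(x)(x^n-\alpha_0)^j b(x), \quad b'(x) = f_u(x)(x^n-\alpha_0)^j b(x) + f_0(x)(x^n-\alpha_0)^k.
\]
From the remark in the preliminaries, $\operatorname{wt_{sp}}(c(x)) \geq \max\{\operatorname{wt_{sp}}(a(x)),\operatorname{wt_{sp}}(b'(x))\}$. Then I would split into two subcases. If $a(x) \neq 0$, since $b(x)$ is a unit, $a(x) \in \langle (x^n-\alpha_0)^j\rangle_F$; and the standing assumption $j \geq \lceil(p^s+k)/2\rceil$ gives $j \geq p^s-j+k$, so $a(x) \in \langle (x^n-\alpha_0)^{p^s-j+k}\rangle_F$ and its symbol-pair weight is bounded below by the desired quantity. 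If instead $a(x)=0$, then $(x^n-\alpha_0)^{p^s-j}$ divides $f_0(x)$ in $\mathcal{F}$; writing $f_0(x) = (x^n-\alpha_0)^{p^s-j}f_0'(x)$ shows that $b'(x) \in \langle (x^n-\alpha_0)^{\min(j,\,p^s-j+k)}\rangle_F = \langle (x^n-\alpha_0)^{p^s-j+k}\rangle_F$, and $c(x)=ub'(x)$ (nonzero, as $c\neq 0$) again has symbol-pair weight at least $\operatorname{d_{sp}}(\langle (x^n-\alpha_0)^{p^s-j+k}\rangle_F)$.

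The main subtlety is the case split in the lower bound: one must recognize that the hypothesis $j \geq \lceil(p^s+k)/2\rceil$ is precisely what forces $j \geq p^s-j+k$, so both the nonzero-$a$ and zero-$a$ subcases land in the same code $\langle (x^n-\alpha_0)^{p^s-j+k}\rangle_F$. Everything else is formal manipulation in the chain ring structure of $\mathcal{F}$ combined with the elementary position-support inequality $\operatorname{wt_{sp}}(a+ub') \geq \max\{\operatorname{wt_{sp}}(a),\operatorname{wt_{sp}}(b')\}$ already established in Section~\ref{sub 2}.
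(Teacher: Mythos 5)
Your proposal is correct and takes essentially the same approach as the paper: the identical upper bound via $(x^{n}-\alpha_{0})^{p^{s}-j}\cdot g(x)=u(x^{n}-\alpha_{0})^{p^{s}-j+k}$, the identical decomposition $c(x)=a(x)+ub'(x)$ with the inequality $\operatorname{wt_{sp}}(c)\geq\max\{\operatorname{wt_{sp}}(a),\operatorname{wt_{sp}}(b')\}$, and a case split on $a(x)=0$ that, because $b(x)$ is a unit, is exactly the paper's split on whether $(x^{n}-\alpha_{0})^{p^{s}-j}$ divides $f_{0}(x)$. Both arguments rest on the same key observation that $j\geq\lceil (p^{s}+k)/2\rceil$ forces $j\geq p^{s}-j+k$, so both subcases land in $\langle (x^{n}-\alpha_{0})^{p^{s}-j+k}\rangle_{F}$.
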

\begin{proof}
If $b(x)=0$, then $\mathcal{D}=\langle u\left(x^{n}-\alpha_{0}\right)^{k}\rangle$. Hence $$\operatorname{d_{sp}}(\mathcal{D})=\operatorname{d_{sp}}(\langle u\left(x^n-\alpha_{0}\right)^{k}\rangle)=\operatorname{d_{sp}}(\langle\left(x^n-\alpha_{0}\right)^{k}\rangle_{F}).$$

Assume that $b(x)$ is a unit in $\mathcal{F}$. Since
$$\left(x^{n}-\alpha_{0}\right)^{p^s-j}\left[ \left(x^{n}-\alpha_{0}\right)^{j} b(x)+u\left(x^{n}-\alpha_{0}\right)^{k} \right]=u \left(x^{n}-\alpha_{0}\right)^{p^s-j+k},$$
it follows that
$$\langle u\left(x^n-\alpha_{0}\right)^{p^s-j+k}\rangle \subseteq \mathcal{D},$$
hence
\begin{align}
\operatorname{d_{sp}}(\mathcal{D}) \leq  & \operatorname{d_{sp}}(\langle u\left(x^n-\alpha_{0}\right)^{p^s-j+k}\rangle) \notag\\
= & \operatorname{d_{sp}}(\langle\left(x^n-\alpha_{0}\right)^{p^s-j+k}\rangle_{F}).\label{th4.3 1}
\end{align}
For any nonzero codeword $c(x)$ in $\mathcal{D}$, there are $f_0(x)$, $f_u(x)$ in $\mathbb{F}_{p^m}[x]$ such that
$$\begin{aligned}
c(x)& =[f_0(x) + uf_u(x)] \left[ \left(x^{n}-\alpha_{0}\right)^{j} b(x)+u\left(x^{n}-\alpha_{0}\right)^{k} \right]\\
& =f_0(x)\left(x^{n}-\alpha_{0}\right)^{j} b(x) + u \left[ f_0(x) \left(x^{n}-\alpha_{0}\right)^{k} + f_u(x) \left(x^{n}-\alpha_{0}\right)^{j} b(x) \right].
\end{aligned}$$
It follows that
$$\begin{aligned}
\operatorname{wt_{sp}}(c(x)) \geq &\max\{   \operatorname{wt_{sp}}(f_0(x)\left(x^{n}-\alpha_{0}\right)^{j} b(x)),\\
& \operatorname{wt_{sp}}(f_0(x) \left(x^{n}-\alpha_{0}\right)^{k} + f_u(x) \left(x^{n}-\alpha_{0}\right)^{j} b(x)) \}.
\end{aligned}$$
If $\left(x^{n}-\alpha_{0}\right)^{p^s-j} \mid f_0(x)$, let $f_0(x)=\left(x^{n}-\alpha_{0}\right)^{p^s-j}f_0^{'}(x)$. Then
\begin{align}
\operatorname{wt_{sp}}(c(x)) & \geq \operatorname{wt_{sp}}(r(x)\left(x^{n}-\alpha_{0}\right)^{p^s-j+k}) \notag\\
& \geq \operatorname{d_{sp}}(\langle(x^n-\alpha_{0})^{p^s-j+k}\rangle_{F}), \label{th4.3 2}
\end{align}
where
 \begin{center}
  $r(x)$ $\!=\!$ $f_0^{'}(x)$ $+$ $f_u(x) \left(x^{n}-\alpha_{0}\right)^{2j-p^s-k} b(x)$.
\end{center}
If $\left(x^{n}-\alpha_{0}\right)^{p^s-j} \nmid f_0(x)$, then
\begin{center}
$\operatorname{wt_{sp}}(c(x)) \geq \operatorname{wt_{sp}}(f_0(x)\left(x^{n}-\alpha_{0}\right)^{j} b(x))$.
\end{center}
Since $j \geq p^s-j+k$,
\begin{equation}\label{th4.3 3}
  \operatorname{wt_{sp}}(c(x)) \geq \operatorname{d_{sp}}(\langle(x^n-\alpha_{0})^{p^s-j+k}\rangle_{F}).
\end{equation}
According to \eqref{th4.3 1}, \eqref{th4.3 2} and \eqref{th4.3 3}, we have
\begin{center}
$\operatorname{d_{sp}}(\mathcal{D}) = \operatorname{d_{sp}}(\langle\left(x^n-\alpha_{0}\right)^{p^s-j+k}\rangle_{F})$.
\end{center}

\end{proof}

\begin{remark}\label{rem4.1}
Our results of symbol-pair distances of constacyclic codes over $\mathbb{F}_{p^m}+u\mathbb{F}_{p^m}$ generalize the results of \cite{Dinh2018a}, which consider the constacyclic codes under the condition of $n=1$.
\end{remark}

\begin{remark}\label{rem4.2}
According to Theorem \ref{th4.3}, when $b(x)$ is a unit, the symbol-pair distance of the constacyclic code $\mathcal{D}=\langle (x-\alpha_{0})^jb(x)+u(x-\alpha_0)^k\rangle$ with $k < 2j-p^s$ is equal to $\operatorname{d_{sp}}(\langle(x-\alpha_{0})^{p^s-j+k}\rangle_{F})$, but not as \cite{Dinh2018a} claimed that equal to $\operatorname{d_{sp}}(\langle(x-\alpha_{0})^{j}\rangle_{F})$. We illustrate an example to show this fact in the following.
\end{remark}

\begin{example}
Consider a cyclic code $\mathcal{D}=\langle (x-1)^7+u(x-1) \rangle$ of length $9$ over the finite ring $\mathbb{F}_{3}+u\mathbb{F}_{3}$, where $u^2=0$. By Theorem 12 in \cite{Dinh2018a},
$$\operatorname{d_{sp}}(\mathcal{D})=\operatorname{d_{sp}}(\langle(x-1)^{7}\rangle_{F})=9.$$
However, there is a codeword
$$u(x-1)^3=(x-1)^2[(x-1)^7+u(x-1)] \in \mathcal{D},$$
and
\begin{center}
$\operatorname{wt_{sp}}(u(x-1)^3)=4$,
\end{center}
which means the symbol-pair distance of $\mathcal{D}$ cannot be $9$.
Actually, according to Theorem \ref{th4.3}
\[\operatorname{d_{sp}}(\mathcal{D})=\operatorname{d_{sp}}(\langle (x-1)^3\rangle_F)=4.\]
\end{example}

The following theorem shows the symbol-pair distances of the constacyclic codes corresponding to the type III in Theorem \ref{th_R_0}. The proof is similar to that of the former theorem, and we omit it here.

\begin{theorem}\label{th4.4}
Let $\mathcal{D}$ be an $\alpha$-constacyclic code of length $n p^{s}$ over $\mathbb{F}_{p^m} + u \mathbb{F}_{p^m}$ with type III in Theorem \ref{th_R_0}, $i.e.$, $\mathcal{D}\!\!=\!\!\langle\!(x^{n}\!\!-\!\alpha_{0})^{j}b(x)\!+\!u(x^{n}\!\!-\!\alpha_{0})^{k}\!\!, (x^{n}\!\!-\!\alpha_{0})^{k+t}\rangle$, where $0 \leq k \leq p^s-2$, $1 \leq t \leq p^s-k-1$, $k + \left\lceil\frac{t}{2}\right\rceil \leq j \leq  k+t$, and either $b(x)$ is 0 or $b(x)$ is a unit in $\mathcal{F}$. Then
$$\operatorname{d_{sp}}(\mathcal{D})=\left\{\begin{array}{ll}
\operatorname{d_{sp}}(\langle(x^n-\alpha_{0})^{k}\rangle_{F}), & \text{if}~b(x)=0;\\
\operatorname{d_{sp}}(\langle(x^n-\alpha_{0})^{2k+t-j}\rangle_{F}), &\text{if}~b(x) \text{ is a unit in } \mathcal{F}. \\
\end{array} \right.$$
\end{theorem}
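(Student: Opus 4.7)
The plan is to mirror the proof of Theorem~\ref{th4.3} but carefully track the interaction of the two generators. I split into the two announced cases and in each one establish matching upper and lower bounds on $\operatorname{d_{sp}}(\mathcal{D})$.

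For the case $b(x)=0$, the ideal reduces to $\mathcal{D}=\langle u(x^n-\alpha_0)^k,\,(x^n-\alpha_0)^{k+t}\rangle$. The upper bound follows at once from $\langle u(x^n-\alpha_0)^k\rangle\subseteq\mathcal{D}$, since multiplication by $u$ preserves symbol-pair weights and therefore identifies that subcode with $\langle(x^n-\alpha_0)^k\rangle_F$. For the lower bound I would take an arbitrary nonzero $c(x)\in\mathcal{D}$, write it as $h_1\,u(x^n-\alpha_0)^k+h_2(x^n-\alpha_0)^{k+t}$ with $h_i=a_i+ub_i$, and expand. The $\mathbb{F}_{p^m}$-component is $a_2(x^n-\alpha_0)^{k+t}$ and the $u$-component is $(x^n-\alpha_0)^k[a_1+b_2(x^n-\alpha_0)^t]$. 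Using $\operatorname{wt_{sp}}(c)\ge\max\{\operatorname{wt_{sp}}(\text{$\mathbb{F}_{p^m}$-part}),\operatorname{wt_{sp}}(\text{$u$-part})\}$ from Subsection~\ref{sub 2}, and the monotonicity $\operatorname{d_{sp}}(\langle(x^n-\alpha_0)^{k+t}\rangle_F)\ge\operatorname{d_{sp}}(\langle(x^n-\alpha_0)^k\rangle_F)$ since $k+t\ge k$, the claimed bound follows whichever part is nonzero.

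For the case where $b(x)$ is a unit, I would first notice the identity
\[
(x^n-\alpha_0)^{k+t-j}\bigl[(x^n-\alpha_0)^j b(x)+u(x^n-\alpha_0)^k\bigr]-b(x)(x^n-\alpha_0)^{k+t}=u(x^n-\alpha_0)^{2k+t-j},
\]
which lies in $\mathcal{D}$ (both summands are visibly in $\mathcal{D}$). Hence $\langle u(x^n-\alpha_0)^{2k+t-j}\rangle\subseteq\mathcal{D}$, giving the upper bound $\operatorname{d_{sp}}(\mathcal{D})\le\operatorname{d_{sp}}(\langle(x^n-\alpha_0)^{2k+t-j}\rangle_F)$. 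The hypothesis $j\ge k+\lceil t/2\rceil$ guarantees $2k+t-j\ge 0$ and $2k+t-j\le j$, so the exponent is meaningful and in the right range.

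The main obstacle is the lower bound in the unit-$b$ case, where the two-generator structure forces a sub-case analysis. I would write $c(x)=(f_0+uf_u)[(x^n-\alpha_0)^jb+u(x^n-\alpha_0)^k]+(g_0+ug_u)(x^n-\alpha_0)^{k+t}$ and collect the $\mathbb{F}_{p^m}$-part as $(x^n-\alpha_0)^j[f_0b+g_0(x^n-\alpha_0)^{k+t-j}]$ and the $u$-part as $f_0(x^n-\alpha_0)^k+f_u(x^n-\alpha_0)^jb+g_u(x^n-\alpha_0)^{k+t}$. Split according to whether $(x^n-\alpha_0)^{k+t-j}$ divides $f_0$. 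If it does not, then because $b$ is a unit the $\mathbb{F}_{p^m}$-part is a nonzero element of $\langle(x^n-\alpha_0)^j\rangle_F$, and since $j\ge 2k+t-j$ its symbol-pair weight is at least $\operatorname{d_{sp}}(\langle(x^n-\alpha_0)^{2k+t-j}\rangle_F)$. If it does divide $f_0$, substitute $f_0=(x^n-\alpha_0)^{k+t-j}f_0'$ and factor $(x^n-\alpha_0)^{2k+t-j}$ out of the $u$-part, placing it inside $\langle(x^n-\alpha_0)^{2k+t-j}\rangle_F$; the $\mathbb{F}_{p^m}$-part then sits in $\langle(x^n-\alpha_0)^{k+t}\rangle_F\subseteq\langle(x^n-\alpha_0)^{2k+t-j}\rangle_F$. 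In either branch, nonvanishing of $c(x)$ forces one of the two parts to be nonzero, yielding the required lower bound. Combining the two bounds closes the theorem.
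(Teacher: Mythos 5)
Your proposal is correct and is precisely the route the paper intends: the paper omits this proof, stating only that it is similar to that of Theorem \ref{th4.3}, and your argument is the faithful adaptation of that proof --- the identity $(x^n-\alpha_0)^{k+t-j}\bigl[(x^n-\alpha_0)^jb(x)+u(x^n-\alpha_0)^k\bigr]-b(x)(x^n-\alpha_0)^{k+t}=u(x^n-\alpha_0)^{2k+t-j}$ gives the upper bound, and your case split on whether $(x^n-\alpha_0)^{k+t-j}$ divides $f_0$ plays the role of the split on $(x^n-\alpha_0)^{p^s-j}\mid f_0$ in the type II proof. The side conditions you invoke all hold under the stated hypotheses ($2k+t-j\geq 0$ from $j\leq k+t$, $2k+t-j\leq j$ from $j\geq k+\lceil t/2\rceil$, and $k+t\leq p^s-1$ ensuring the nonvanishing in the first branch), so the matching bounds close the theorem.
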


\subsection{MDS codes}

In this subsection, we utilize the symbol-pair distances of $(\alpha+u\beta)$-constacyclic codes of length $np^s$ over $\mathbb{F}_{p^m} + u \mathbb{F}_{p^m}$ shown in the former subsection to obtain the MDS symbol-pair codes.

The following theorem shows that no nontrivial MDS symbol-pair $(\alpha+u\beta)$-constacyclic code exists when $\beta\neq 0$.

\begin{theorem}\label{th4.5}
Let $\alpha_{0},\beta$ be nonzero elements in $\mathbb{F}_{p^m}$. Denote $\alpha=\alpha_{0}^{p^s}$. Suppose that $x^n-\alpha_{0}$ is irreducible over $\mathbb{F}_{p^m}$. Let $\mathcal{D}_i=\langle (x^n-\alpha_0)^i \rangle \subseteq \mathcal{R}$ be an $(\alpha+u\beta)$-constacyclic code of length $n p^s$ over $\mathbb{F}_{p^m} + u \mathbb{F}_{p^m}$, where $0\leq i\leq 2p^s$. Then $\mathcal{D}_i$ is an MDS symbol-pair code if and only if $i=0$.
\end{theorem}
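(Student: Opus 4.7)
The plan is to reduce the MDS condition to a single numerical equation via the Singleton bound and then knock out the surviving indices using Theorem~\ref{th4.1} together with the Singleton bound for the underlying $\alpha$-constacyclic codes over $\mathbb{F}_{p^m}$.

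First I would record the two pieces of data needed for each $\mathcal{D}_i$: its cardinality is $|\mathcal{D}_i|=p^{mn(2p^s-i)}$ by Lemma~\ref{th_R_1}, and the ambient alphabet $\mathbb{F}_{p^m}+u\mathbb{F}_{p^m}$ has size $q=p^{2m}$. Plugging these into the Singleton bound \eqref{equa1} for symbol-pair codes of length $N=np^s$ yields
\[
mn(2p^s-i)\leq 2m\bigl(np^s-\operatorname{d_{sp}}(\mathcal{D}_i)+2\bigr),
\]
which simplifies to $ni\geq 2\operatorname{d_{sp}}(\mathcal{D}_i)-4$, with equality characterizing the MDS property. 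So the task becomes: decide for which $i$ we have $ni=2\operatorname{d_{sp}}(\mathcal{D}_i)-4$.

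Next I would split on the two ranges given by Theorem~\ref{th4.1}. For $0\le i\le p^s$ we have $\operatorname{d_{sp}}(\mathcal{D}_i)=2$, so the MDS equation collapses to $ni=0$; since $n\ge 1$, this forces $i=0$, and the case $i=0$ indeed gives $\mathcal{D}_0=\mathcal{R}$, which trivially meets the bound. For $p^s+1\le i\le 2p^s$, Theorem~\ref{th4.1} gives $\operatorname{d_{sp}}(\mathcal{D}_i)=\operatorname{d_{sp}}(\mathcal{C}_{i-p^s})$, where $\mathcal{C}_{i-p^s}=\langle(x^n-\alpha_0)^{i-p^s}\rangle$ lives in $\mathcal{F}$. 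Set $j=i-p^s\in\{1,\dots,p^s\}$; applying the Singleton bound over $\mathbb{F}_{p^m}$ to $\mathcal{C}_j$, whose cardinality is $p^{m(np^s-nj)}$, gives $\operatorname{d_{sp}}(\mathcal{C}_j)\le nj+2$ (and $\operatorname{d_{sp}}(\mathcal{C}_{p^s})=0$ is an even stronger constraint). Substituting into the MDS equation $n(j+p^s)=2\operatorname{d_{sp}}(\mathcal{C}_j)-4$ would require
\[
n(j+p^s)\le 2(nj+2)-4=2nj,
\]
i.e.\ $np^s\le nj$, contradicting $j\le p^s-1$; and the remaining endpoint $j=p^s$ (that is, $i=2p^s$) yields the zero code $\mathcal{D}_{2p^s}=\{0\}$, which is excluded as a nontrivial code. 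Hence no $i\ge 1$ produces an MDS symbol-pair code.

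The calculation itself is elementary, so the main obstacle is mostly conceptual: one must be careful that the Singleton bound is applied with the correct alphabet size in each reduction — $p^{2m}$ when working with $\mathcal{D}_i\subseteq\mathcal{R}$, but only $p^m$ when invoking it for $\mathcal{C}_j$ over $\mathbb{F}_{p^m}$ — and that the trivial case $i=2p^s$ is explicitly set aside so the statement reads \emph{nontrivial MDS}. Once these two points are handled, the claim $i=0$ drops out immediately.
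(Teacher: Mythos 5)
Your proposal is correct and takes essentially the same route as the paper's proof: both reduce the MDS property to the equation $ni = 2\operatorname{d_{sp}}(\mathcal{D}_i)-4$ via the Singleton bound with alphabet size $p^{2m}$, settle $0\leq i\leq p^s$ using $\operatorname{d_{sp}}(\mathcal{D}_i)=2$ from Theorem \ref{th4.1}, and eliminate $p^s+1\leq i\leq 2p^s-1$ by applying the field-level Singleton bound to $\langle(x^n-\alpha_0)^{i-p^s}\rangle_F$. The only cosmetic difference is that you run the upper range as a contradiction from $j=i-p^s\leq p^s-1$ and explicitly dispose of the trivial endpoint $i=2p^s$ (which the paper silently omits), whereas the paper chains the inequalities $ni\geq \operatorname{d_{sp}}(\mathcal{D}_i)-2+np^s\geq 2\operatorname{d_{sp}}(\mathcal{D}_i)-2$; the substance is identical.
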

\begin{proof}
By Theorem \ref{th_R_1},
$$|\mathcal{D}_{i}|=|\langle(x^{n}-\alpha_{0})^{i}\rangle|=p^{mn( 2p^{s}- i)}.$$
The Singleton Bound shows
$$|\mathcal{D}_{i}| \leq |R|^{np^s-\operatorname{d_{sp}}(\mathcal{C}_i)+2},$$
which is equivalent to
\begin{equation}
 ni \geq 2\operatorname{d_{sp}}(\mathcal{D}_i)-4. \notag
\end{equation}
Therefore, $\mathcal{D}_{i}$ is an MDS symbol-pair code if and only if
\begin{equation}\label{th4.5a}
  ni=2\operatorname{d_{sp}}(\mathcal{D}_i)-4.
\end{equation}
If $0 \leq i \leq p^{s}$, we have $\operatorname{d_{sp}}\left(\mathcal{D}_{i}\right)=2$ by Theorem \ref{th4.1}. According to \eqref{th4.5a}, we obtain $i=0$. If $p^s+1 \leq i \leq 2p^s-1$, then $\operatorname{d_{sp}}(\mathcal{D}_i)=\operatorname{d_{sp}}(\langle(x^n-\alpha_{0})^{i-p^s}\rangle_{F})$. Applying the Singleton bound on the constacyclic code $\langle(x^n-\alpha_{0})^{i-p^s}\rangle_{F}$, we obtain
\begin{equation}\label{th4.5b}
n(i-p^s) \geq \operatorname{d_{sp}}(\langle(x^n-\alpha_{0})^{i-p^s}\rangle_{F})-2.
\end{equation}
Reformulate \eqref{th4.5b} we have
\begin{align*}
  ni \geq & \operatorname{d_{sp}}(\mathcal{D}_i) -2 + np^s \geq  2\operatorname{d_{sp}}(\mathcal{D}_i) -2\\
  >&2\operatorname{d_{sp}}(\mathcal{D}_i) -4,
\end{align*}
which implies no MDS symbol-pair constacyclic code when $i$ is in the range $ p^s+1 \leq i \leq 2p^s-1$.

\end{proof}

The following theorem is considering the case of $\beta=0$ and we obtain three new classes of MDS symbol-pair $\alpha$-constacyclic codes over $\mathbb{F}_{p^m} + u \mathbb{F}_{p^m}$.

\begin{theorem}\label{th4.6}
Let $\alpha_{0}$ be a nonzero element in $\mathbb{F}_{p^m}$ and $\alpha=\alpha_{0}^{p^s}$. There are three classes of  MDS symbol-pair $\alpha$-constacyclic codes of length $2 p^{s}$ over $\mathbb{F}_{p^m} + u \mathbb{F}_{p^m}$ as follows:
\begin{enumerate}[(i)]
\item
$\mathcal{D}=\langle(x^{2}-\alpha_{0})+ u b(x)\rangle$, where $b(x)$ is either zero or a unit in $\mathcal{F}$;
\item
$\mathcal{D}=\langle(x^{2}-\alpha_{0})^{p^s -1}+ u (x^{2}-\alpha_{0})^{p^s -2} b(x)\rangle$, where $b(x)$ is either zero or a unit in $\mathcal{F}$;
\item
$\mathcal{D}=\langle(x^{2}-\alpha_{0})^{j}+ u (x^{2}-\alpha_{0})^{k} b(x)\rangle$, where $s=1$, $1 \leq j \leq p-1$, $max\{0,2j-p\} \leq k < j$, and $b(x)$ is either zero or a unit in $\mathcal{F}$.
\end{enumerate}
\end{theorem}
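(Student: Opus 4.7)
My plan is to treat each of the three families in turn and, in each case, verify that $\mathcal{D}$ attains the symbol-pair Singleton bound $|\mathcal{D}| = (p^{2m})^{2p^{s}-\operatorname{d_{sp}}(\mathcal{D})+2}$ with equality. The scheme is uniform: (a) fit $\mathcal{D}$ into the Type~I/II/III classification of Lemma~\ref{th_R_0}; (b) read off $|\mathcal{D}|$ from that classification; (c) invoke Theorem~\ref{th4.2}, \ref{th4.3}, or~\ref{th4.4} to reduce $\operatorname{d_{sp}}(\mathcal{D})$ to $\operatorname{d_{sp}}(\langle(x^{2}-\alpha_{0})^{\ell}\rangle_{F})$ for an explicit $\ell$; (d) read this latter distance off the $n\geq 2$ case of Theorem~\ref{th3.1}; (e) check the Singleton equality.

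When $b(x)=0$, $\mathcal{D}=\langle(x^{2}-\alpha_{0})^{j}\rangle$ is Type~I and the size and distance follow directly from Lemma~\ref{th_R_0}(I) and Theorem~\ref{th4.2}, so suppose $b(x)$ is a unit. Multiplying the generator through by $b(x)^{-1}$ rewrites it in the Cao canonical shape $(x^{2}-\alpha_{0})^{j}\tilde{b}(x)+u(x^{2}-\alpha_{0})^{k}$ with $\tilde{b}=b^{-1}$. When $\lceil(p^{s}+k)/2\rceil\leq j$ this is genuinely Type~II and Theorem~\ref{th4.3} applies directly; otherwise I locate the Type~III auxiliary generator $(x^{2}-\alpha_{0})^{k+t}$ by solving $(a+uc)[(x^{2}-\alpha_{0})^{j}\tilde{b}+u(x^{2}-\alpha_{0})^{k}]=(x^{2}-\alpha_{0})^{l}$ in $\mathcal{F}+u\mathcal{F}$. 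The non-$u$ part forces $a=\tilde{b}^{-1}(x^{2}-\alpha_{0})^{l-j}$, and the $u$ part then forces $c=-\tilde{b}^{-2}(x^{2}-\alpha_{0})^{l-2j+k}$, so the least admissible $l$ is $2j-k$, i.e.\ $t=2(j-k)$. The Type~III inequalities $k+\lceil t/2\rceil\leq j\leq k+t$ reduce to $k\leq j\leq 2j-k$, which coincides with the range $\max\{0,2j-p^{s}\}\leq k<j$ prescribed in the theorem, with the Type~II boundary at $k=2j-p^{s}$.

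With $(j,k,t)$ in hand, Lemma~\ref{th_R_0} gives $|\mathcal{D}|=p^{mn(2p^{s}-2k-t)}=p^{4m(p^{s}-j)}$, and Theorem~\ref{th4.4} gives $\operatorname{d_{sp}}(\mathcal{D})=\operatorname{d_{sp}}(\langle(x^{2}-\alpha_{0})^{2k+t-j}\rangle_{F})=\operatorname{d_{sp}}(\langle(x^{2}-\alpha_{0})^{j}\rangle_{F})$. Specializing: class~(i) ($j=1$) yields $|\mathcal{D}|=p^{4m(p^{s}-1)}$ and $\operatorname{d_{sp}}(\mathcal{D})=4$ from Theorem~\ref{th3.1} at $i=1$; class~(ii) ($j=p^{s}-1$, $k=p^{s}-2$, Type~II since $\lceil(2p^{s}-2)/2\rceil=p^{s}-1=j$) yields $|\mathcal{D}|=p^{4m}$ and $\operatorname{d_{sp}}(\mathcal{D})=\operatorname{d_{sp}}(\langle(x^{2}-\alpha_{0})^{p^{s}-1}\rangle_{F})=2p^{s}$, read off at the right endpoint of Theorem~\ref{th3.1}; class~(iii) (with $s=1$) gives $|\mathcal{D}|=p^{4m(p-j)}$ and $\operatorname{d_{sp}}(\mathcal{D})=2(j+1)=2j+2$ from Theorem~\ref{th3.1} at $s=1$. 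In each case the Singleton equality is then immediate; for instance in class~(iii), $(p^{2m})^{2p-(2j+2)+2}=p^{4m(p-j)}$ matches $|\mathcal{D}|$.

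The principal technical obstacle is the Type~III identification above: verifying $t=2(j-k)$ and hence that the single element generating $\mathcal{D}$ really forces $(x^{2}-\alpha_{0})^{k+t}$ into the ideal. This is the only place where genuine ring arithmetic in $\mathbb{F}_{p^{m}}+u\mathbb{F}_{p^{m}}$ is needed, and care is required because $(x^{2}-\alpha_{0})$ is a non-unit of $\mathcal{F}$. Everything else is mechanical lookup in Theorems~\ref{th3.1} and~\ref{th4.2}--\ref{th4.4}, and the $b(x)=0$ subcases in each family fall out identically from the Type~I case.
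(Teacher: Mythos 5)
Your proposal is correct and takes essentially the same route as the paper's own proof: read $|\mathcal{D}|$ from Lemma \ref{th_R_0}, reduce $\operatorname{d_{sp}}(\mathcal{D})$ to $\operatorname{d_{sp}}(\langle(x^{2}-\alpha_{0})^{j}\rangle_{F})$ via Theorems \ref{th4.2}--\ref{th4.4}, evaluate that distance from the $n\geq 2$ case of Theorem \ref{th3.1}, and verify the Singleton equality. The one place you go beyond the paper is in explicitly verifying that the principal generator fits Cao's classification --- deriving $t=2(j-k)$ for Type III via the multiplier $(a+uc)$, checking $(x^{2}-\alpha_{0})^{2j-k}\in\mathcal{D}$, and locating the Type II boundary at $k=2j-p^{s}$ --- a step the paper silently assumes when it cites Lemma \ref{th_R_0} and Theorems \ref{th4.3}--\ref{th4.4}, and your computation there is correct (and in fact confirms the size $p^{4m(p-j)}$ used in the paper's proof, exposing the $p^{4m(p-k)}$ entry in Table \ref{tab2} as a typo).
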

\begin{proof}
(i)
By Lemma \ref{th_R_0}, the size of $\mathcal{D}$ is $p^{4m(p^s-1)}$. According to Theorem \ref{th4.2} and \ref{th4.4}, the symbol-pair distance of $\mathcal{D}$ is 4. It achieves the Singleton bound
\[|\mathcal{D}|=p^{4m(p^s-1)}=p^{2m(2p^s -4 +2)}\]
with equality. Therefore, $\mathcal{D}$ is an MDS code.

(ii)
By Lemma \ref{th_R_0}, the size of $\mathcal{D}$ is $p^{4m}$. According to Theorem \ref{th4.2} and \ref{th4.3}, the symbol-pair distance of $\mathcal{D}$ is $2p^s$. It achieves the Singleton bound
\[|\mathcal{D}|=p^{4m}=p^{2m(2p^s -2p^s +2)}\]
with equality. Therefore, $\mathcal{D}$ is an MDS code.

(iii)
By Lemma \ref{th_R_0}, the size of $\mathcal{D}$ is $p^{4m(p-j)}$. According to Theorem \ref{th4.2}, \ref{th4.3} and \ref{th4.4}, the symbol-pair distance of $\mathcal{D}$ is $2j+2$. It achieves the Singleton bound
\[|\mathcal{D}|=p^{4m(p-j)}=p^{2m(2p -2j-2 +2)}\]
with equality. Therefore, $\mathcal{D}$ is an MDS code.
\end{proof}

\begin{remark}\label{rem4.3}
In \cite{Dinh2019b}, Dinh \emph{et al.} gave two more classes of MDS symbol-pair codes with parameters $(2^s, 2^{m(2^{s-1}+4)}, 3 \cdot 2^{s-2})$ and $(3^s, 3^{m(2 \cdot 3^{s-1}+4)}, 2 \cdot 3^{s-1})$.

The first $\alpha$-constacyclic code of length $2^{s}$ over $\mathbb{F}_{2^m} + u \mathbb{F}_{2^m}$ is
$$\mathcal{D}=\langle\left(x^{n}-\alpha_{0}\right)^{2^s-3}+u\left(x^{n}-\alpha_{0}\right)^{2^{s-1}-4}\rangle,$$
where $s \geq 3$. By Remark \ref{rem4.2}, the symbol-pair distance of $\mathcal{C}$ is $\mathrm{d}_{\mathrm{sp}}(\langle(x-\alpha_{0})^{2^{s-1}-1}\rangle_{F})=4$, but not $\mathrm{d}_{\mathrm{sp}}(\langle(x-\alpha_{0})^{2^s-3}\rangle_{F})=3 \cdot 2^{s-2}$. Hence it is not MDS.

The second $\alpha$-constacyclic code of length $3^{s}$ over $\mathbb{F}_{3^m} + u \mathbb{F}_{3^m}$ is
$$\mathcal{D}=\langle\left(x^{n}-\alpha_{0}\right)^{3^s-5}+u\left(x^{n}-\alpha_{0}\right)^{3^{s-1}-4}\rangle,$$
where $s \geq 3$. By Remark \ref{rem4.2}, the symbol-pair distance of $\mathcal{C}$ is $\mathrm{d}_{\mathrm{sp}}(\langle(x-\alpha_{0})^{3^{s-1}+1}\rangle_{F})=6$, but not $\mathrm{d}_{\mathrm{sp}}(\langle(x-\alpha_{0})^{3^s-5}\rangle_{F})=2 \cdot 3^{s-1}$. Hence it is also not MDS.
\end{remark}

Combining Theorem \ref{th4.6} and previous work, the MDS symbol-pair $\alpha$-constacyclic codes of length $n p^{s}$ over $\mathbb{F}_{p^m} + u \mathbb{F}_{p^m}$ are listed in Table \ref{tab2}. When the polynomial $x^{n}-\alpha_0$ is irreducible over $\mathbb{F}_{p^m}$, we draw the following conclusion. The proof is similar to Theorem \ref{th3.2} and  omitted here.

\begin{theorem}\label{th4.7}
Let $\alpha_{0}$ be a nonzero element of $\mathbb{F}_{p^m}$ and $\alpha=\alpha_{0}^{p^s}$. Let $\beta$ be an element of $\mathbb{F}_{p^m}$. When $x^n-\alpha_0$ is irreducible over $\mathbb{F}_{p^m}$, there are no other nontrivial MDS symbol-pair $\alpha+u\beta$-constacyclic codes of length $n p^s$ over $\mathbb{F}_{p^m}+u\mathbb{F}_{p^m}$ except the MDS codes shown in Table \ref{tab2}.
\end{theorem}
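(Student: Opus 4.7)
The plan is to mirror the argument of Theorem \ref{th3.2}, combining the symbol-pair distance formulas of Theorems \ref{th4.1}--\ref{th4.4} with the Singleton bound over $R=\mathbb{F}_{p^m}+u\mathbb{F}_{p^m}$ (for which $|R|=p^{2m}$). The case $\beta\neq 0$ is already settled by Theorem \ref{th4.5}, which shows that only the trivial ideal $i=0$ is MDS, so the whole burden falls on the case $\beta=0$. There Lemma \ref{th_R_0} partitions the $\alpha$-constacyclic codes over $\mathbb{F}_{p^m}+u\mathbb{F}_{p^m}$ into three types, and I would treat each in turn.

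For each type I would write the MDS identity $|\mathcal{D}|=|R|^{np^s-\operatorname{d_{sp}}(\mathcal{D})+2}$, substitute the size from Lemma \ref{th_R_0} and the symbol-pair distance from Theorem \ref{th4.2}, \ref{th4.3}, or \ref{th4.4}, and then compare the forced value of $\operatorname{d_{sp}}(\mathcal{D})$ with the Singleton bound applied to the companion $\mathbb{F}_{p^m}$ code $\langle (x^n-\alpha_0)^{k'}\rangle_F$ that controls it. For Type I the identity collapses to $nk=\operatorname{d_{sp}}(\langle (x^n-\alpha_0)^k\rangle_F)-2$, which is exactly the MDS condition for $\mathcal{C}_k$ over $\mathbb{F}_{p^m}$; Theorem \ref{th3.2} then forces $\mathcal{C}_k$ to correspond to a row of Table \ref{tab1}, and these appear in Table \ref{tab2} as the entries with $b(x)=0$. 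For Type II with $b(x)$ a unit, Theorem \ref{th4.3} gives $\operatorname{d_{sp}}(\mathcal{D})=\operatorname{d_{sp}}(\langle (x^n-\alpha_0)^{p^s-j+k}\rangle_F)$; combining the MDS identity with the Singleton bound on the auxiliary $\mathbb{F}_{p^m}$ code and the Type II constraint $j\geq\lceil(p^s+k)/2\rceil$ pinches $k=2j-p^s$ and forces $\langle (x^n-\alpha_0)^j\rangle_F$ to be MDS, again reducing to Theorem \ref{th3.2}. Type III with $b(x)$ a unit is analogous via Theorem \ref{th4.4} and the constraint $k+\lceil t/2\rceil\leq j\leq k+t$, pinching the parameters to $t$ even and $j=k+t/2$. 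Both Type II with $b(x)=0$ and Type III with $b(x)=0$ are ruled out because the value of $\operatorname{d_{sp}}$ required by the MDS identity would exceed the Singleton bound $nk'+2$ for the corresponding $\mathbb{F}_{p^m}$ code.

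The main obstacle will be the bookkeeping: three types of ideals, two choices of $b(x)$, and several coupled parameter ranges produce a moderate number of subcases, each demanding its own comparison between the MDS identity and the Singleton bound on the auxiliary $\mathbb{F}_{p^m}$ code. The most delicate step is translating the Type II and Type III parameter tuples $(j,k)$ and $(j,k,t)$ into the single exponent $k'$ that governs the auxiliary $\mathbb{F}_{p^m}$ code, and then matching the resulting constraints against the rows of Table \ref{tab2} so that no MDS code is missed and no spurious MDS code is admitted. Once this bookkeeping is carried out, the arithmetic of the Singleton bound forces the MDS solutions to coincide exactly with the configurations recorded in Table \ref{tab2}.
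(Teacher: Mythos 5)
Your proposal is correct and takes essentially the same approach the paper intends: the paper omits the proof of Theorem \ref{th4.7}, saying only that it is similar to Theorem \ref{th3.2}, and your plan---dispatching $\beta\neq 0$ via Theorem \ref{th4.5}, then for $\beta=0$ equating the Singleton bound with the sizes from Lemma \ref{th_R_0} and the distances from Theorems \ref{th4.2}--\ref{th4.4} type by type, reducing each case to the field classification of Theorem \ref{th3.2}---is precisely that omitted argument. The parameter pinches you identify check out arithmetically ($k=2j-p^s$ for Type II and $t$ even with $j=k+t/2$ for Type III, each forcing the auxiliary field exponent to equal $j$, while the $b(x)=0$ subcases of Types II and III are indeed excluded by the Singleton comparison).
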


\section{Conclusion}

Let $\mathbb{F}_{p^m}$ be a finite field and $\mathbb{F}_{p^m}+u\mathbb{F}_{p^m}$ be a finite ring with $u^2=0$. We determine the symbol-pair distances of $\alpha$-constacyclic codes of length $np^s$ over $\mathbb{F}_{p^m}$ and $(\alpha+u\beta)$-constacyclic codes of length $np^s$ over $\mathbb{F}_{p^m}+u\mathbb{F}_{p^m}$, where $n,s$ are positive integers with $\operatorname{gcd}(n,p)=1$, $\beta \in \mathbb{F}_{p^m}$, and $\alpha={\alpha_0}^{p^s} \in \mathbb{F}_{p^m}$ such that $x^n-\alpha_0$ is irreducible over $\mathbb{F}_{p^m}$. Moreover, we show that the non-trivial MDS symbol-pair $\alpha$-constacyclic codes of length $np^s$ over $\mathbb{F}_{p^m}$ only exist when $n=1,2$. Similarly, the non-trivial MDS symbol-pair $(\alpha+u\beta)$-constacyclic codes of length $np^s$ over $\mathbb{F}_{p^m}+u\mathbb{F}_{p^m}$ only exist when $\beta=0$ and $n=1,2$. Some of these MDS symbol-pair codes we present in this paper are new  and have relatively large pair distance. It is an interesting problem to consider the case that $x^n-\alpha_0$ is reducible over $\mathbb{F}_{p^m}$.

\section*{References}
\bibliographystyle{elsarticle-num}
\bibliography{Ref}

\begin{thebibliography}{10}
\expandafter\ifx\csname url\endcsname\relax
  \def\url#1{\texttt{#1}}\fi
\expandafter\ifx\csname urlprefix\endcsname\relax\def\urlprefix{URL }\fi
\expandafter\ifx\csname href\endcsname\relax
  \def\href#1#2{#2} \def\path#1{#1}\fi

\bibitem{Cassuto2011}
Y.~Cassuto, M.~Blaum, Codes for symbol-pair read channels, IEEE Trans. Inf.
  Theory 57~(12) (2011) 8011--8020.

\bibitem{Yaakobi2012}
E.~{Yaakobi}, J.~{Bruck}, P.~H. {Siegel}, Decoding of cyclic codes over
  symbol-pair read channels, in: IEEE Int. Symp. Inf. Theory, Cambridge, MA,
  USA, 2012, pp. 2891--2895.

\bibitem{Takita2015}
M.~Takita, M.~Hirotomo, M.~Morll, A decoding algorithm for cyclic codes over
  symbol-pair read channels, IEICE Trans. Fundam.Electron., Commun. Comput.
  Sci. E98.A~(12) (2015) 2415--2422.

\bibitem{Morii2016}
M.~{Morii}, M.~{Hirotomo}, M.~{Takita}, Error-trapping decoding for cyclic
  codes over symbol-pair read channels, in: Int. Symp. Inf.Theory Appl.,
  Monterey, CA, USA, 2016, pp. 681--685.

\bibitem{Chee2013}
Y.~M. Chee, L.~Ji, H.~M. Kiah, C.~Wang, J.~Yin, Maximum distance separable
  codes for symbol-pair read channels, IEEE Trans. Inf. Theory 59~(11) (2013)
  7259--7267.

\bibitem{Li2016}
S.~Li, G.~Ge, Constructions of maximum distance separable symbol-pair codes
  using cyclic and constacyclic codes, Des. Codes Cryptogr. 84~(3) (2016)
  359--372.

\bibitem{Chen2017}
B.~Chen, L.~Lin, H.~Liu, {Constacyclic Symbol-Pair Codes: Lower Bounds and
  Optimal Constructions}, IEEE Trans. Inf. Theory 63~(12) (2017) 7661--7666.

\bibitem{Kai2018}
X.~{Kai}, S.~{Zhu}, Y.~{Zhao}, H.~{Luo}, Z.~{Chen}, New {MDS} symbol-pair codes
  from repeated-root codes, IEEE Commun. Lett. 22~(3) (2018) 462--465.

\bibitem{Dinh2018}
H.~Q. Dinh, B.~T. Nguyen, A.~K. Singh, S.~Sriboonchitta, On the symbol-pair
  distance of repeated-root constacyclic codes of prime power lengths, IEEE
  Trans. Inf. Theory 64~(4) (2018) 2417--2430.

\bibitem{Dinh2019b}
H.~Q. {Dinh}, P.~{Kumam}, P.~{Kumar}, S.~{Satpati}, A.~K. {Singh}, W.~{Yamaka},
  {MDS} symbol-pair repeated-root constacylic codes of prime power lengths over
  $\mathbb{F}_{p^m}+ u \mathbb{F}_{p^m}$, IEEE Access 7 (2019) 145039--145048.

\bibitem{Dinh2020}
H.~Q. {Dinh}, B.~T. {Nguyen}, S.~{Sriboonchitta}, {MDS} symbol-pair cyclic
  codes of length $2p^s$ over $\mathbb{F}_{p^m}$, IEEE Trans. Inf. Theory
  66~(1) (2020) 240--262.

\bibitem{Zhao2020}
W.~Zhao, S.~Yang, K.~W. Shum, X.~Tang, Repeated-root constacyclic codes with
  pair-metric, IEEE Commun. Lett. (2020).
\newblock \href {https://doi.org/10.1109/lcomm.2020.3041271}
  {\path{doi:10.1109/lcomm.2020.3041271}}.

\bibitem{Dinh2018a}
H.~Q. Dinh, B.~T. Nguyen, A.~K. Singh, S.~Sriboonchitta, Hamming and
  symbol-pair distances of repeated-root constacyclic codes of prime power
  lengths over $\mathbb{F}_{p^{m}}+u \mathbb{F}_{p^{m}}$, IEEE Commun. Lett.
  22~(12) (2018) 2400--2403.

\bibitem{Dinh2019a}
H.~Q. Dinh, X.~Wang, H.~Liu, S.~Sriboonchitta, On the symbol-pair distances of
  repeated-root constacyclic codes of length $2 p^s$, Discrete Math. 342~(11)
  (2019) 3062 -- 3078.

\bibitem{Ozadam2009}
H.~{\"{O}}zadam, F.~{\"{O}}zbudak, Two generalizations on the minimum hamming
  distance of repeated-root constacyclic codes (2009).
\newblock \href {http://arxiv.org/abs/0906.4008} {\path{arXiv:0906.4008}}.

\bibitem{Sharma2019}
A.~Sharma, T.~Sidana, On the structure and distances of repeated-root
  constacyclic codes of prime power lengths over finite commutative chain
  rings, {IEEE} Transactions on Information Theory 65~(2) (2019) 1072--1084.
\newblock \href {https://doi.org/10.1109/tit.2018.2864293}
  {\path{doi:10.1109/tit.2018.2864293}}.

\bibitem{Huffman2003}
W.~C. Huffman, V.~Pless, Fundamentals of Error-Correcting Codes, Cambridge
  University Press, 2003.

\bibitem{RudolfLidl2008}
R.~Lidl, H.~Niederreiter, Finite Fields, Cambridge University Press, 2008.

\bibitem{Massey1973}
J.~Massey, J.~Costello, D.J., J.~Justesen, Polynomial weights and code
  constructions, IEEE Trans. Inf. Theory 19~(1) (1973) 101--110.

\bibitem{Dinh2007}
H.~Q. {Dinh}, Complete distances of all negacyclic codes of length $2^{s}$ over
  $\mathbb{Z}_{2^{a}}$, IEEE Trans. Inf. Theory 53~(1) (2007) 147--161.

\bibitem{Zhao2018}
W.~Zhao, X.~Tang, Z.~Gu, All $\alpha+u \beta$-constacyclic codes of length $n
  p^{s}$ over $\mathbb{F}_{p^{m}}+u \mathbb{F}_{p^{m}}$, Finite Fields Appl. 50
  (2018) 1--16.

\bibitem{Cao2015}
Y.~Cao, Y.~Cao, J.~Gao, F.~Fu, {Constacyclic codes of length $p^s n$ over
  $\mathbb{F}_{p^{m}}+u \mathbb{F}_{p^{m}}$} (2015).
\newblock \href {http://arxiv.org/abs/1512.01406v2}
  {\path{arXiv:1512.01406v2}}.

\end{thebibliography}

\end{document}